\documentclass[11pt,reqno]{article}
\usepackage{amssymb,amscd,amsmath,amsthm,color}

\textwidth=15.5cm
\textheight=23.5cm
\parindent=15pt
\parskip=4pt

\hoffset=-14mm
\voffset=-26mm

\newtheorem{thm}{Theorem}[section]
\newtheorem{prop}[thm]{Proposition}
\newtheorem{lemma}[thm]{Lemma}
\newtheorem{cor}[thm]{Corollary}
\newtheorem{definition}[thm]{Definition}
\newtheorem{remark}[thm]{Remark}
\newtheorem{example}[thm]{Example}

\numberwithin{equation}{section}

\def\bR{\mathbb{R}}
\def\bC{\mathbb{C}}
\def\bN{\mathbb{N}}
\def\bM{\mathbb{M}}
\def\cH{\mathcal{H}}
\def\cB{\mathcal{B}}

\def\cD{\mathcal{D}}

\def\Tr{\mathrm{Tr}\,}

\def\eps{\varepsilon}
\def\<{\langle}
\def\>{\rangle}
\def\ffi{\varphi}
\def\tr{\mathrm{tr}}

\def\Im{\mathrm{Im}\,}

\def\cA{\mathcal{A}}
\def\cM{\mathcal{M}}
\def\cX{\mathcal{X}}
\def\BS{\mathrm{BS}}
\def\meas{\mathrm{meas}}
\def\pr{\mathrm{pr}}

\begin{document}
\allowdisplaybreaks

\centerline{\LARGE Quantum $f$-divergences in von Neumann algebras II.}
\medskip
\centerline{\LARGE Maximal $f$-divergences}
\bigskip
\bigskip
\centerline{\Large Fumio Hiai\footnote{{\it E-mail address:} hiai.fumio@gmail.com}}

\begin{center}
$^1$\,Graduate School of Information Sciences, \\
Tohoku University, Aoba-ku, Sendai 980-8579, Japan
\end{center}

\medskip
\begin{abstract}

As a continuation of the paper \cite{Hi1} on standard $f$-divergences, we make a systematic
study of maximal $f$-divergences in general von Neumann algebras. For maximal $f$-divergences,
apart from their definition based on Haagerup's $L^1$-space, we present the general integral
expression and the variational expression in terms of reverse tests. From these definition and
expressions we prove important properties of maximal $f$-divergences, for instance, the
monotonicity inequality, the joint convexity, the lower semicontinuity, and the martingale
convergence. The inequality between the standard and the maximal $f$-divergences is also given.

\bigskip\noindent
{\it Keywords and phrases:}
Maximal $f$-divergence, standard $f$-divergence, relative entropy, monotonicity inequality,
reverse test, von Neumann algebra, standard form, Haagerup's $L^p$-space, cyclic representation,
operator convex function.

\bigskip\noindent
{\it Mathematics Subject Classification 2010:} 81P45, 81P16, 46L10, 46L53, 94A17
\end{abstract}

\section{Introduction}

Quantum divergences play a significant role in quantum information theory. We have mainly two
different kinds of quantum $f$-divergences parametrized by convex (often assumed operator
convex) functions $f$ on $(0,+\infty)$. The one is the \emph{standard $f$-divergence}
$S_f(\rho\|\sigma)$ and the other is the \emph{maximal $f$-divergence}
$\widehat S_f(\rho\|\sigma)$ (also denoted by $S_f^{\max}(\rho\|\sigma)$). When specialized to
the finite-dimensional (or the matrix) case, those $f$-divergences are defined for positive
operators $\rho,\sigma$ (for simplicity, assumed invertible) as follows: 
\begin{align}
\widehat S_f(\rho\|\sigma)&:=\Tr\sigma^{1/2}f(\sigma^{-1/2}\rho\sigma^{-1/2})\sigma^{1/2},
\label{F-1.1}\\
S_f(\rho\|\sigma)&:=\Tr\sigma^{1/2}f(L_\rho R_{\sigma^{-1}})(\sigma^{1/2}),
\label{F-1.2}
\end{align}
where $L_\rho$ and $R_{\sigma^{-1}}$ are the left multiplication by $\rho$ and the right
multiplication by $\sigma^{-1}$. (We extend the above definitions to general positive operators
$\rho,\sigma$ by converegence, see \cite{HMPB,HM}.) The standard $f$-divergence
$S_f(\rho\|\sigma)$ was formerly introduced and studied by Petz \cite{Pe,Pe0} in a more general
form in the von Neumann algebra setting under the name \emph{quasi-entropy} (which first
appeared in \cite{Ko0}). A typical and the most important example is the \emph{relative entropy}
$D(\rho\|\sigma)$ that is $S_f(\rho\|\sigma)$ when $f(t)=t\log t$, introduced first by Umegaki
\cite{Um} in semifinite von Neumann algebras, and extended to general von Neumann algebras by
Araki \cite{Ar5,Ar2} based on the relative modular operators. (Note that
$L_\rho R_{\sigma^{-1}}$ in \eqref{F-1.2} is the form of relative modular operator in the
finite-dimensional case.) On the other hand, the maximal $f$-divergence
$\widehat S_f(\rho\|\sigma)$ for matrices were studied by Matsumoto \cite{Ma}. A special
example of $\widehat S_f(\rho\|\sigma)$ when $f(t)=t\log t$ is another version of the relative
entropy introduced by Belavkin and Staszewski \cite{BS}, denoted by $D_\BS(\rho\|\sigma)$. It
is also worth noting that the form $\sigma^{1/2}f(\sigma^{-1/2}\rho\sigma^{-1/2})\sigma^{1/2}$
in \eqref{F-1.1} is used to define some relative operator entropies \cite{FK,FYK}, and recently
called the operator perspective \cite{EH} having a role in operator theory.

In \cite{HMPB,HM} we gave comprehensive expositions on standard and maximal $f$-divergences in
the matrix setting, mostly from the point of view of the \emph{monotonicity inequality} (often
called the \emph{data-processing inequality}) and the \emph{reversibility} of quantum
operations. Our goal in the next stage is to extend expositions in \cite{HMPB,HM} to the von
Neumann algebra setting. We expect that those extensions would be useful in further
developments of quantum information, as well as in some mathematical physics subjects such as
quantum field theory (see \cite{LX} for the appearance of the relative entropy there). In the
previous paper \cite{Hi1} we made systematic study of standard $f$-divergences and standard
R\'enyi divergences in von Neumann algebras. The methodological novelty there is to generalize
Kosaki's variational expression of the relative entropy to general $S_f(\rho\|\sigma)$, from
which many important properties of $S_f$ follow immediately. The aim of the present paper is
to develop maximal $f$-divergences in a similar way in the general von Neumann algebra setting.

In section 2 of this paper we first give the definition of the maximal $f$-divergence
$\widehat S_f(\rho\|\sigma)$ for $\rho,\sigma\in M_*^+$ (the positive part of the predusl $M_*$
of a von Neumann algebra $M$) and for any operator convex function $f$ on $(0,+\infty)$. For
this our idea is to use the representatives $h_\rho,h_\sigma$ in Haagerup's $L^1(M)$ space and
the functional $\tr$ on $L^1(M)$ in place of the trace $\Tr$ in the matrix case. When
$\delta\sigma\le\rho\le\delta^{-1}\sigma$ for some $\delta>0$, there exists a unique
$A\in s(\sigma)Ms(\sigma)$ ($s(\sigma)$ being the support projection of $\sigma$) such that
$h_\rho^{1/2}=Ah_\sigma^{1/2}$, so that we define $\widehat S_f(\rho\|\sigma):=\sigma(f(A^*A))$.
Writing $A^*A=(h_\rho^{1/2}h_\sigma^{-1/2})^*(h_\rho^{1/2}h_\sigma^{-1/2})
=h_\sigma^{-1/2}h_\rho h_\sigma^{-1/2}$ formally, we can write $\widehat S_f(\rho\|\sigma)
=\tr\,h_\sigma^{1/2}f(h_\sigma^{-1/2}h_\rho h_\sigma^{-1/2})h_\sigma^{1/2}$, having a complete
resemblance to \eqref{F-1.1}. We then extend $\widehat S_f(\rho\|\sigma)$ by convergence to
arbitrary $\rho,\sigma\in M_*^+$, and prove the monotonicity inequality under unital positive
normal maps and the joint convexity (Theorem \ref{T-2.9}).

In Sections 3 and 4, we analyze the case where $\rho$ is strongly absolutely continuous with
respect to $\sigma$, and obtain a general integral expression (Theorem \ref{T-4.2})
\begin{align}\label{F-1.3}
\widehat S_f(\rho\|\sigma)=\int_0^1(1-t)f\biggl({t\over1-t}\biggr)
\,d\|E_{\rho/\rho+\sigma}(t)\xi_{\rho+\sigma}\|^2,
\end{align}
where $E_{\rho/\rho+\sigma}(\cdot)$ is the spectral measure of the operator
$T_{\rho/\rho+\sigma}\in\pi_{\rho+\sigma}(M)'_+$ such that
$\rho(x)=\<\xi_{\rho+\sigma},T_{\rho/\rho+\sigma}\pi_{\rho+\sigma}(x)\xi_{\rho+\sigma}\>$,
$x\in M$, for the cyclic representation $\pi_{\rho+\sigma}$ of $M$ associated with $\rho+\sigma$.
In Section 5, the lower semicontinuity in the norm topology and the martingale convergence for
$\widehat S_f(\rho\|\sigma)$ are proved by using the expression in \eqref{F-1.3}.

In Section 6, following Matsumoto's idea in \cite{Ma}, we obtain a variational expression of
the form (Theorem \ref{T-6.3})
\begin{align}\label{F-1.4}
\widehat S_f(\rho\|\sigma)=\min\{S_f(p\|q):(\Psi,p,q)\},
\end{align}
where the minimum is attained over \emph{reverse tests} $(\Psi,p,q)$ for $\rho,\sigma$
consisting of a unital positive normal map $\Psi:M\to L^\infty(X,\mu)$ on a $\sigma$-finite
measure space $(X,\mu)$ and $p,q\in L^1(X,\mu)_+$ with $\Psi_*(p)=\rho$ and $\Psi_*(q)=\sigma$,
and $S_f(p\|q)$ is the classical $f$-divergence of $p,q$. From the variational expression in
\eqref{F-1.4} we have the inequality $S_f(\rho\|\sigma)\le\widehat S_f(\rho\|\sigma)$, in
particular, $D(\rho\|\sigma)\le D_\BS(\rho\|\sigma)$ that was first proved in \cite{HiPe} for
matrices, and moreover the equality $S_f(\rho\|\sigma)=\widehat S_f(\rho\|\sigma)$ is
verified when $\rho,\sigma$ commute. In this way, we present three different expressions of
$\widehat S_f(\rho\|\sigma)$ -- the definition in the beginning, expressions \eqref{F-1.3} and
\eqref{F-1.4}, each of which is useful in deriving some of different properties of
$\widehat S_f(\rho\|\sigma)$.

Finally, the extension of $\widehat S_f(\rho\|\sigma)$ to general positive linear functionals
$\rho,\sigma$ on a unital $C^*$-algebra is discussed in Section 7, and remarks and problems for
further investigation are mentioned in Section 8.

\section{Definition and basic properties}

Let $M$ be a general von Neumann algebra with predual $M_*$, and $M_*^+$ be the positive part
of $M_*$ consisting of normal positive linear functionals on $M$. In the present paper it is
convenient for us to work in the framework of Haagerup's $L^p$-spaces associated with $M$, so
we first recall Haagerup's $L^p$-spaces, see \cite{Te} for details. Given a faithful normal
semifinite weight $\ffi_0$ on $M$, let $N$ denote the crossed product
$M\rtimes_{\sigma^{\ffi_0}}\bR$ of $M$ by the modular automorphism group $\sigma_t^{\ffi_0}$
($t\in\bR$). Let $\theta_s$ ($s\in\bR$) be the dual action of $N$ so that
$\tau\circ\theta_s=e^{-s}\tau$ ($s\in\bR$), where $\tau$ is the canonical trace on $N$. Let
$\widetilde N$ denote the space of $\tau$-measurable operators affiliated with $N$. For
$0<p\le\infty$ \emph{Haagerup's $L^p$-space} $L^p(M)$ \cite{Ha2,Te} is defined by
$$
L^p(M):=\{x\in\widetilde N: \theta_s(x)=e^{-s/p}x,\ s\in\bR\}.
$$
In particular, $L^\infty(M)=M$. Let $L^p(M)_+=L^p(M)\cap\widetilde N_+$, where $\widetilde N_+$
is the positive part of $\widetilde N$. Then $M_*$ is canonically order-isomorphic to
$L^1(M)$ by a linear bijection $\psi\in M_*\mapsto h_\psi\in L^1(M)$, so that the positive
linear functional $\tr$ on $L^1(M)$ is defined by $\tr(h_\psi)=\psi(1)$, $\psi\in M_*$.

For $1\le p<\infty$ the $L^p$-norm $\|x\|_p$ of $x\in L^p(M)$ is given by
$\|x\|_p:=\tr(|x|^p)^{1/p}$. Also $\|\cdot\|_\infty$ denotes the operator norm on $M$. For
$1\le p<\infty$, $L^p(M)$ is a Banach space with the norm $\|\cdot\|_p$ and whose dual Banach
space is $L^q(M)$ where $1/p+1/q=1$ by the duality
$$
(x,y)\in L^p(M)\times L^q(M)\ \longmapsto\ \tr(xy)\ (=\tr(yx)).
$$
In particular, $L^2(M)$ is a Hilbert space with the inner product $\<x,y\>=\tr(x^*y)$
($=\tr(yx^*)$). Then
$$
(M,L^2(M),J=\,^*,L^2(M)_+)
$$
becomes a \emph{standard form} \cite{Ha} of $M$, where $M$ is represented on $L^2(M)$ by the
left multiplication. By the uniqueness of a standard form of $M$ up to unitary equivalence
\cite{Ha}, our discussions in this paper are independent of the choice of a standard form of
$M$. But the standard form $(M,L^2(M),\,^*\,,L^2(M)_+)$ is more convenient since the Haagerup's
$L^p$-space technique is sometimes useful. Each $\sigma\in M_*^+$ is represented as
$$
\sigma(x)=\tr(xh_\sigma)=\<h_\sigma^{1/2},xh_\sigma^{1/2}\>,\qquad x\in M,
$$
with the vector representative $h_\sigma^{1/2}\in L^2(M)_+$. Note that the support projection
$s(\sigma)$ ($\in M$) of $\sigma$ coincides with that of $h_\sigma$. We also note from
\cite[Corollary 2.5, Lemma 2.6]{Ha} that for every projection $e\in M$, the standard form of
the reduced von Neumann algebra $eMe$ is given as
$$
(eMe,eL^2(M)e, J=\,^*, eL^2(M)_+e).
$$

The next lemma is well-known while we give a proof for completeness.

\begin{lemma}\label{L-2.1}
Let $\rho,\sigma\in M_*^+$. Assume that $\rho\le\alpha\sigma$, i.e.,
$h_\rho\le\alpha h_\sigma$ for some $\alpha>0$. Then there exists a unique
$A\in s(\sigma)Ms(\sigma)$ such that $h_\rho^{1/2}=Ah_\sigma^{1/2}$. The $A$ satisfies
$\|A\|\le\alpha^{1/2}$. Moreover, if $\beta\sigma\le\rho\le\alpha\sigma$ for some
$\alpha,\beta>0$, then the above $A$ satisfies $\beta s(\sigma)\le A^*A\le\alpha s(\sigma)$.
\end{lemma}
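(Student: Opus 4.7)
The plan is to define $A$ explicitly as a $\tau$-measurable operator and then check its properties using standard Haagerup $L^p$-theory. Set $e := s(\sigma)$. Since $\rho \le \alpha\sigma$ forces $\rho(x^*x) = 0$ whenever $\sigma(x^*x) = 0$, we have $s(\rho) \le e$, so $h_\rho = e h_\rho e$. My candidate is
\[
A := h_\rho^{1/2} h_\sigma^{-1/2},
\]
where $h_\sigma^{-1/2} = h_\sigma^{-1/2} e$ is the generalized inverse of $h_\sigma^{1/2}$ with support $e$. The decisive computation is
\[
A^* A = h_\sigma^{-1/2} h_\rho^{1/2} \cdot h_\rho^{1/2} h_\sigma^{-1/2}
= h_\sigma^{-1/2} e h_\rho e h_\sigma^{-1/2},
\]
and since $h_\rho \le \alpha h_\sigma$ gives $e h_\rho e \le \alpha e h_\sigma e$, this yields
\[
A^* A \le \alpha\, h_\sigma^{-1/2} e h_\sigma e h_\sigma^{-1/2} = \alpha\, e,
\]
so $A$ extends to a bounded operator with $\|A\| \le \alpha^{1/2}$.

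Next I would establish that $A \in eMe$ and verify the intertwining identity. Since $A$ is bounded and affiliated with $N$, $A \in N$; combining the dual-action relations $\theta_s(h_\rho^{1/2}) = e^{-s/2} h_\rho^{1/2}$ and $\theta_s(h_\sigma^{-1/2}) = e^{s/2} h_\sigma^{-1/2}$ with $\theta_s(e) = e$ gives $\theta_s(A) = A$, so $A \in L^\infty(M) = M$. The equalities $eA = Ae = A$ are immediate from $h_\rho^{1/2} e = h_\rho^{1/2}$ and $h_\sigma^{-1/2} = h_\sigma^{-1/2} e$, placing $A$ in $s(\sigma)Ms(\sigma)$. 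The relation $A h_\sigma^{1/2} = h_\rho^{1/2}$ then reduces to the cancellation $h_\sigma^{-1/2} h_\sigma^{1/2} = e$ combined with $h_\rho^{1/2} e = h_\rho^{1/2}$. For uniqueness, if $A' \in eMe$ also satisfies $A' h_\sigma^{1/2} = h_\rho^{1/2}$, then $B := A - A' \in eMe$ obeys
\[
\sigma(B^*B) = \tr(B^* B h_\sigma) = \|B h_\sigma^{1/2}\|_2^2 = 0,
\]
and since $\sigma$ is faithful on $eMe$ (its support being $e$), $B = 0$. For the two-sided bound $\beta\sigma \le \rho \le \alpha\sigma$, the same calculation with the lower inequality produces $A^* A \ge \beta e$.

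The step I expect to be the main obstacle is justifying the algebraic manipulations with the a priori unbounded measurable operator $h_\sigma^{-1/2}$: one must verify that the products $h_\rho^{1/2} h_\sigma^{-1/2}$ and the insertion of projections respect the relevant domains inside $\widetilde N$, and that the operator inequality $h_\rho \le \alpha h_\sigma$ in $L^1(M)_+$ translates into the pointwise $\widetilde N_+$-inequality used above, so that the formal bound $A^* A \le \alpha e$ really forces boundedness and hence membership of $A$ in $N$. Everything else—dual-action invariance and faithfulness of $\sigma|_{eMe}$—is routine.
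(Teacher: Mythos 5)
Your plan hinges on treating $A:=h_\rho^{1/2}h_\sigma^{-1/2}$ as an element of the measurable-operator calculus and computing $A^*A=h_\sigma^{-1/2}h_\rho h_\sigma^{-1/2}\le\alpha e$ there, and you yourself flag the justification of exactly this step as "the main obstacle." That obstacle is real and, as set up, cannot be closed inside $\widetilde N$: the generalized inverse $h_\sigma^{-1/2}$ is \emph{not} $\tau$-measurable. Indeed it transforms under the dual action as $\theta_s(h_\sigma^{-1/2})=e^{s/2}h_\sigma^{-1/2}$, and a scaling argument on distribution functions (using $\tau\circ\theta_s=e^{-s}\tau$ and the requirement $\tau(e_{(t,\infty)}(|x|))\to0$ as $t\to\infty$) shows that no nonzero $\tau$-measurable operator can scale with a positive exponent; equivalently, $h_\sigma$ is never bounded below on its support since $\tau(s(\sigma))=+\infty$. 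Consequently the strong product, the identity $(h_\rho^{1/2}h_\sigma^{-1/2})^*=h_\sigma^{-1/2}h_\rho^{1/2}$, and the conjugation of the inequality $h_\rho\le\alpha h_\sigma$ by $h_\sigma^{-1/2}$ are not covered by the $\widetilde N$-calculus you invoke; they would have to be re-done by hand as statements about quadratic forms and domains. The same unproved conjugation is what your argument for the lower bound $A^*A\ge\beta e$ rests on. So the decisive computation, which is the entire content of the boundedness claim $\|A\|\le\alpha^{1/2}$, is left as a gap.

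The repair is essentially the paper's route, which avoids $h_\sigma^{-1/2}$ altogether: the inequality $\rho\le\alpha\sigma$ gives directly $\|h_\rho^{1/2}\xi\|\le\alpha^{1/2}\|h_\sigma^{1/2}\xi\|$ for $\xi\in\cD(h_\sigma^{1/2})$, so one can \emph{define} $A$ as the bounded operator determined by $A(h_\sigma^{1/2}\xi)=h_\rho^{1/2}\xi$ on the closure of the range of $h_\sigma^{1/2}$ (which is $s(\sigma)L^2$-side, since the support of $h_\sigma$ equals $s(\sigma)$) and $A(1-s(\sigma))=0$; commutation with unitaries of $N'$ gives $A\in N$, $\|A\|\le\alpha^{1/2}$, $A=s(\sigma)As(\sigma)$, and then your $\theta_s$-invariance argument puts $A$ in $M$. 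Your uniqueness argument via faithfulness of $\sigma$ on $s(\sigma)Ms(\sigma)$ is fine and is a clean alternative to the paper's. For the two-sided bound, rather than conjugating by $h_\sigma^{-1/2}$, do as the paper does: produce $B\in s(\sigma)Ms(\sigma)$ with $h_\sigma^{1/2}=Bh_\rho^{1/2}$ from $\beta\sigma\le\rho$, check $AB=BA=s(\sigma)$, and deduce $A^*A=(BB^*)^{-1}\ge\beta s(\sigma)$ from $BB^*\le\beta^{-1}s(\sigma)$.
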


\begin{proof}
From the assumption, we have $\|h_\rho^{1/2}\xi\|\le\alpha^{1/2}\|h_\sigma^{1/2}\xi\|$ for
all $\xi\in\cD(h_\sigma^{1/2})$, the domain of $h_\sigma^{1/2}$. Since
$h_\rho,h_\sigma\in\widetilde N$ ($\tau$-measurable operators), we have a unique operator
$A\in N$ such that $A(h_\sigma^{1/2}\xi)=h_\rho^{1/2}\xi$ for $\xi\in\cD(h_\sigma^{1/2})$ and
$A(1-s(\sigma))=0$. These imply that $A=s(\sigma)A=As(\sigma)$ and $A^*A\le\alpha s(\sigma)$.
Since $\theta_s(h_\rho^{1/2})=e^{-s/2}h_\rho^{1/2}$ and
$\theta_s(h_\sigma^{1/2})=e^{-s/2}h_\sigma^{1/2}$,
$\theta_s(h_\rho^{1/2})=\theta_s(Ah_\sigma^{1/2})$ means that
$h_\rho^{1/2}=\theta_s(A)h_\sigma^{1/2}$. Hence it follows that $\theta_s(A)=A$ for all
$s\in\bR$, implying that $A\in N^\theta=M$ (where $N^\theta$ is the $\theta$-fixed point
algebra). Therefore, $A\in s(\sigma)Ms(\sigma)$.

Next, assume that $\beta\sigma\le\rho$ in addition to $\rho\le\alpha\sigma$. Then
$s(\rho)=s(\sigma)$, and there is a unique $B\in s(\sigma)Ms(\sigma)$ such that
$h_\sigma^{1/2}=Bh_\rho^{1/2}$. It is easy to see that $AB=BA=s(\sigma)$, hence $B=A^{-1}$ in
$s(\sigma)Ms(\sigma)$. Since $BB^*\le\beta^{-1}s(\sigma)$, we have
$\beta s(\sigma)\le A^*A\le\alpha s(\sigma)$.
\end{proof}

\begin{remark}\label{R-2.2}\rm
We have supplied a rather direct proof of Lemma \ref{L-2.1} for the convenience of the
reader. But a more advanced proof can be given with use of the Connes cocycle Radon-Nikodym
derivative $[D\rho:D\sigma]_t$ \cite{C1}, as in \cite[Theorem VIII.3.17]{Ta2}; see also
\cite[Lemma A.1]{Hi1}. In fact, $\rho\le\alpha\sigma$ for some $\alpha>0$ if and only if
$s(\rho)\le s(\sigma)$ and $[D\rho:D\sigma]_t$ extends to a weakly continuous ($M$-valued)
function $[D\rho:D\sigma]_z$ on the strip $-1/2\le\Im z\le0$ which is analytic in the interior.
In this case, $\|[D\rho:D\sigma]_{-i/2}\|\le\alpha^{1/2}$ and
$h_\rho^{1/2}=[D\rho:D\sigma]_{-i/2}h_\sigma^{1/2}$. So, $A\in s(\sigma)Ms(\sigma)$ given in
Lemma \ref{L-2.1} is $[D\rho:D\sigma]_{-i/2}$, which also shows that the operator $A$ is
determined independently of the choice of the standard form of $M$.
\end{remark}

Throughout the paper we assume that $f$ is an \emph{operator convex function} on $(0,+\infty)$,
i.e., $f$ is a real function on $(0,+\infty)$ such that the operator inequality
$$
f(\lambda A+(1-\lambda)B)\le\lambda f(A)+(1-\lambda)f(B),\qquad 0\le\lambda\le1,
$$
holds for every positive invertible operators $A,B$ on any Hilbert space. We set
$$
f(0^+):=\lim_{x\searrow0}f(x),\qquad f'(+\infty):=\lim_{x\to+\infty}f(x)/x,
$$
which are in $(-\infty,+\infty]$.

For $\rho,\sigma\in M_*^+$ we write $\rho\sim\sigma$ if
$\delta\sigma\le\rho\le\delta^{-1}\sigma$ for some $\delta>0$, and we set
\begin{align*}
(M_*^+\times M_*^+)_\sim&:=\{(\rho,\sigma)\in M_*^+\times M_*^+:
\rho\sim\sigma\}, \\
(M_*^+\times M_*^+)_\le&:=\{(\rho,\sigma)\in M_*^+\times M_*^+:
\rho\le\alpha\sigma\ \mbox{for some $\alpha>0$}\}, \\
(M_*^+\times M_*^+)_\ge&:=\{(\rho,\sigma)\in M_*^+\times M_*^+:
\alpha\rho\ge\sigma\ \mbox{for some $\alpha>0$}\},
\end{align*}
which are all convex sets. We first define the maximal $f$-divergence for
$(\rho,\sigma)\in(M_*^+\times M_*^+)_\sim$ and then extend it to general $\rho,\sigma\in M_*^+$.

\begin{definition}\label{D-2.3}\rm
For each $(\rho,\sigma)\in(M_*^+\times M_*^+)_\sim$ let $A\in s(\sigma)Ms(\sigma)$ be as given
in Lemma \ref{L-2.1}, so that $h_\rho^{1/2}=Ah_\sigma^{1/2}$. Since $A^*A$ is a positive
invertible operator in $s(\sigma)Ms(\sigma)$, we define an self-adjoint operator $f(A^*A)$ in
$s(\sigma)Ms(\sigma)$ via functional calculus. We define the \emph{maximal $f$-divergence}
of $\rho$ with respect to $\sigma$ by
\begin{align}\label{F-2.1}
\widehat S_f(\rho\|\sigma):=\sigma(f(A^*A))\in\bR.
\end{align}
Here the symbol $\widehat S$ is used to distinguish the maximal $f$-divergence from the
standard $f$-divergence $S_f(\rho\|\sigma)$ studied in \cite{Hi1}. Since
$h_\rho=h_\sigma^{1/2}A^*Ah_\sigma^{1/2}$, it is natural to denote $A^*A$ by
$h_\sigma^{-1/2}h_\rho h_\sigma^{-1/2}$ though the expression is rather formal. Below we will
sometimes use this expression. Then \eqref{F-2.1} is rewritten as
\begin{align}\label{F-2.2}
\widehat S_f(\rho\|\sigma)=\tr(h_\sigma f(h_\sigma^{-1/2}h_\rho h_\sigma^{-1/2})),
\end{align}
which is in the same form as the maximal $f$-divergence in the matrix case \cite{HM} if we
consider $\tr$ as the usual trace and $h_\rho,h_\sigma$ as the density matrices.
\end{definition}

\begin{lemma}\label{L-2.4}
Let $M_0$ be another von Neumann algebra and $\Phi:M_0\to M$ be a unital positive map that is
normal (i.e., if $\{x_\alpha\}$ is an increasing net in $M_+$ with $x_\alpha\nearrow x\in M_+$,
then $\Phi(x_\alpha)\nearrow\Phi(x)$). Then for every
$(\rho,\sigma)\in(M_*^+\times M_*^+)_\sim$,
$$
\widehat S_f(\rho\circ\Phi\|\sigma\circ\Phi)\le\widehat S_f(\rho\|\sigma).
$$
\end{lemma}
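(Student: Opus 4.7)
First, I would verify that the left-hand side is well-defined. Since $\Phi$ is positive and unital, evaluating the hypothesis $\delta\sigma\le\rho\le\delta^{-1}\sigma$ on $\Phi(y)\in M_+$ for $y\in(M_0)_+$ gives $\delta(\sigma\circ\Phi)\le\rho\circ\Phi\le\delta^{-1}(\sigma\circ\Phi)$ in $(M_0)_*^+$, so by Lemma \ref{L-2.1} there is a unique $A_0\in s(\sigma\circ\Phi)M_0 s(\sigma\circ\Phi)$ with $h_{\rho\circ\Phi}^{1/2}=A_0h_{\sigma\circ\Phi}^{1/2}$, and Definition \ref{D-2.3} gives $\widehat S_f(\rho\circ\Phi\|\sigma\circ\Phi)=(\sigma\circ\Phi)(f(A_0^*A_0))$.

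Next, I would exploit the identity implicit in Lemma \ref{L-2.1}:
$$
\rho(x)=\<h_\rho^{1/2},xh_\rho^{1/2}\>=\<Ah_\sigma^{1/2},xAh_\sigma^{1/2}\>=\sigma(A^*xA),\qquad x\in M,
$$
and the analogous identity $(\rho\circ\Phi)(y)=(\sigma\circ\Phi)(A_0^*yA_0)$ in $M_0$. Equating the two routes for $\rho\circ\Phi$ yields the intertwining relation $(\sigma\circ\Phi)(A_0^*yA_0)=\sigma(A^*\Phi(y)A)$ for every $y\in M_0$; specializing to $y=1$ and using $\Phi(1)=1$ already settles the affine case $f(t)=t$ with equality, and the case $f(t)=1$ with inequality (via $s(\sigma\circ\Phi)\le 1$).

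To handle a general operator convex $f$, my plan is to pass to Haagerup's $L^1$-picture. Writing $\widehat S_f(\rho\|\sigma)=\tr P_f(h_\rho,h_\sigma)$ for the operator perspective $P_f(X,Y):=Y^{1/2}f(Y^{-1/2}XY^{-1/2})Y^{1/2}$, I would observe that the predual map $\widetilde\Phi_*:L^1(M)\to L^1(M_0)$ is positive and, by unitality of $\Phi$, trace-preserving, with $\widetilde\Phi_*(h_\rho)=h_{\rho\circ\Phi}$ and $\widetilde\Phi_*(h_\sigma)=h_{\sigma\circ\Phi}$. Monotonicity then reduces to a Jensen-type perspective inequality $P_f(\widetilde\Phi_*(h_\rho),\widetilde\Phi_*(h_\sigma))\le\widetilde\Phi_*(P_f(h_\rho,h_\sigma))$ in the positive cone, which I would establish by invoking the Kraus--L\"owner integral representation of operator convex functions on $(0,\infty)$ to split $f$ into elementary pieces $(t-1)^2$ and $(t-1)^2/(t+s)$, $s>0$; for each of these, the inequality reduces to the intertwining identity of the previous paragraph combined with operator monotonicity of $t\mapsto-(t+s)^{-1}$ inside $s(\sigma)Ms(\sigma)$. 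Boundary values $f(0^+)=+\infty$ or $f'(+\infty)=+\infty$ would be handled by approximation via $f(t+1/n)$ and lower semicontinuity of $\tr$ on $L^1(M)_+$.

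The main obstacle, in my view, is that $\Phi$ is only assumed positive (not $2$-positive), so the Hansen--Pedersen operator Jensen inequality $f(\Phi(y))\le\Phi(f(y))$ is not directly available, and one must route the convexity argument through the intertwining identity rather than apply Jensen to $\Phi$ itself. A secondary technical issue is that in a general, possibly type III, von Neumann algebra the symbolic expression $h_\sigma^{-1/2}h_\rho h_\sigma^{-1/2}$ is only formal, and the functional calculus must be performed inside the bounded reduced algebra $s(\sigma)Ms(\sigma)$; this is precisely what forces the present lemma to be stated on $(M_*^+\times M_*^+)_\sim$, with the extension to arbitrary $\rho,\sigma\in M_*^+$ deferred to Theorem \ref{T-2.9}.
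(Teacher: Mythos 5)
Your setup (well-definedness of the left-hand side, the intertwining identity, the linear case) is fine, but the proof stalls exactly at the decisive step. The inequality you reduce to, $P_f(\Phi_*(h_\rho),\Phi_*(h_\sigma))\le\Phi_*(P_f(h_\rho,h_\sigma))$, \emph{is} the content of the lemma, and splitting $f$ by the integral representation into the pieces $(t-1)^2$, $(t-1)^2/t$ and $(t-1)^2/(t+s)$ does not make it elementary: after the linear terms (which give equality) are discarded, each piece demands a Lieb--Ruskai-type inequality $\Phi_*(Z)\Phi_*(W)^{-1}\Phi_*(Z)\le\Phi_*(ZW^{-1}Z)$ with, formally, $Z=h_\rho-h_\sigma$ and $W=h_\rho+sh_\sigma$, whose standard proofs use $2$-positivity or the Schwarz inequality --- precisely what you correctly note is unavailable. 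The substitutes you offer do not fill this hole: the intertwining identity $(\sigma\circ\Phi)(A_0^*yA_0)=\sigma(A^*\Phi(y)A)$ is a family of scalar identities linear in $y$, and ``operator monotonicity of $t\mapsto-(t+s)^{-1}$'' would require an operator inequality between $A_0^*A_0$ and some image of $A^*A$, which you never produce (they live in different algebras). If you try to force the comparison of $(\sigma\circ\Phi)((A_0^*A_0+s)^{-1})$ with $\sigma((A^*A+s)^{-1})$ through the variational formula $\omega((T+s)^{-1})=\sup_y\omega(y+y^*-y^*(T+s)y)$ with the test choice $x=\Phi(y)$, the quadratic terms force you to ask for $\Phi(y)^*\Phi(y)\le\Phi(y^*y)$ and for $\sigma(\Phi(y)^*A^*A\,\Phi(y))\le(\sigma\circ\Phi)(y^*A_0^*A_0y)$, i.e.\ Schwarz-type inequalities for $\Phi$ again. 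So as written the argument has a genuine gap at its core.

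The missing idea (and the paper's route) is to apply a Jensen inequality not to $\Phi$ but to an auxiliary map built from $\Phi_*$ and $h_\sigma$. For $x\in(eMe)_+$ with $e:=s(\sigma)$ one has $\Phi_*(h_\sigma^{1/2}xh_\sigma^{1/2})\le\|x\|\Phi_*(h_\sigma)$, so Lemma \ref{L-2.1} yields $b\in M_0$ with $\Phi_*(h_\sigma^{1/2}xh_\sigma^{1/2})^{1/2}=b\,\Phi_*(h_\sigma)^{1/2}$, and $\Psi(x):=b^*b$ --- formally $\Psi(x)=\Phi_*(h_\sigma)^{-1/2}\Phi_*(h_\sigma^{1/2}xh_\sigma^{1/2})\Phi_*(h_\sigma)^{-1/2}$ --- extends to a \emph{unital positive} linear map $\Psi:eMe\to e_0M_0e_0$. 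Choi's Jensen inequality \cite{Ch} holds for unital positive maps and operator convex $f$ with no $2$-positivity assumption, so $f(\Psi(T))\le\Psi(f(T))$ for $T:=A^*A$; since $\Psi(T)=\Phi_*(h_\sigma)^{-1/2}\Phi_*(h_\rho)\Phi_*(h_\sigma)^{-1/2}$, sandwiching by $\Phi_*(h_\sigma)^{1/2}$ and taking $\tr$ gives exactly $\widehat S_f(\rho\circ\Phi\|\sigma\circ\Phi)\le\widehat S_f(\rho\|\sigma)$. (Incidentally, your worry about boundary values is unnecessary in this lemma: on $(M_*^+\times M_*^+)_\sim$ Lemma \ref{L-2.1} gives $\delta s(\sigma)\le A^*A\le\delta^{-1}s(\sigma)$, so $f(A^*A)$ is a bounded operator and no regularization of $f$ at $0$ or $+\infty$ is needed.)
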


\begin{proof}
One can define the predual map $\Phi_*:L^1(M)\to L^1(M_0)$ of $\Phi:M_0\to M$ by
$\Phi_*(h_\psi)=h_{\psi\circ\Phi}$, $\psi\in M_*$, since
$\tr(h_{\psi\circ\Phi}y)=\psi(\Phi(y))=(\Phi_*\psi)(y)$, $y\in M_0$. Then $\Phi_*$ is a
$\tr$-preserving positive map since
$$
\tr\,\Phi_*(h_\psi)=\tr\,h_{\psi\circ\Phi}=\psi\circ\Phi(1)=\psi(1)=\tr\,h_\psi.
$$
Let $e:=s(\rho)=s(\sigma)\in M$ and $e_0:=s(\rho\circ\Phi)=s(\sigma\circ\Phi)\in M_0$. For
every $x\in(eMe)_+$, since $h_\sigma^{1/2}xh_\sigma^{1/2}\in L^1(M)_+$ and
$h_\sigma^{1/2}xh_\sigma^{1/2}\le\|x\|h_\sigma$, we have
$\Phi_*(h_\sigma^{1/2}xh_\sigma^{1/2})\le\|x\|\Phi_*(h_\sigma)$. By Lemma \ref{L-2.1}, there is
a unique $b\in M_0$ such that $b(1-e_0)=0$ and
$\Phi_*(h_\sigma^{1/2}xh_\sigma^{1/2})^{1/2}=b\Phi_*(h_\sigma)^{1/2}$. Define
$\Psi(x):=b^*b\in(e_0M_0e_0)_+$. One can easily find that $\Psi(\alpha x)=\alpha\Psi(x)$
and $\Psi(x_1+x_2)=\Psi(x_1)+\Psi(x_2)$ for every $x,x_1,x_2\in M_+$ and $\alpha\ge0$. In fact,
the former is obvious. For the latter, let $b_i\in M_0$ ($i=1,2$) be such that $b_i(1-e_0)=0$
and $\Phi_*(h_\sigma^{1/2}x_ih_\sigma^{1/2})^{1/2}=b_i\Phi_*(h_\sigma)^{1/2}$. Since
$\Phi_*(h_\sigma^{1/2}x_ih_\sigma^{1/2})=\Phi_*(h_\sigma)^{1/2}b_i^*b_i\Phi_*(h_\sigma)^{1/2}$,
one has
$$
\Phi_*(h_\sigma^{1/2}(x_1+x_2)h_\sigma^{1/2})
=\Phi_*(h_\sigma)^{1/2}(b_1^*b_1+b_2^*b_2)\Phi_*(h_\sigma)^{1/2},
$$
which implies that $\Psi(x_1+x_2)=b_1^*b_1+b_2^*b_2=\Psi(x_1)+\Psi(x_2)$. Then $\Psi$ can
extend to a positive linear map $\Psi:eMe\to e_0M_0e_0$. It is clear that $\Psi$ is unital,
i.e., $\Psi(e)=e_0$. By a Jensen inequality due to Choi \cite{Ch}, for
$T:=h_\sigma^{-1/2}h_\rho h_\sigma^{-1/2}$ (i.e., $T=A^*A$ in Definition \ref{D-2.3}) we have
$$
f(\Psi(T))\le \Psi(f(T)).
$$
Since
$$
\Phi_*(h_\sigma)^{1/2}\Psi(T)\Phi_*(h_\sigma)^{1/2}
=\Phi_*(h_\sigma^{1/2}Th_\sigma^{1/2})=\Phi_*(h_\rho),
$$
we have $\Psi(T)=\Phi_*(h_\sigma)^{-1/2}\Phi_*(h_\rho)\Phi_*(h_\sigma)^{-1/2}$ and
$$
\Phi_*(h_\sigma)^{1/2}f(\Psi(T))\Phi_*(h_\sigma)^{1/2}
\le\Phi_*(h_\sigma)^{1/2}\Psi(f(T))\Phi_*(h_\sigma)^{1/2}
=\Phi_*(h_\sigma^{1/2}f(T)h_\sigma^{1/2}).
$$
Therefore,
\begin{align*}
\widehat S_f(\rho\circ\Phi\|\sigma\circ\Phi)
&=\tr\bigl(\Phi_*(h_\sigma)^{1/2}f(\Psi(T))\Phi_*(h_\sigma)^{1/2}\bigr) \\
&\le\tr\bigl(\Phi_*(h_\sigma^{1/2}f(T)h_\sigma^{1/2})\bigr)
=\tr\bigl(h_\sigma^{1/2}f(T)h_\sigma^{1/2}\bigr)
=\widehat S_f(\rho\|\sigma).
\end{align*}
\end{proof}

\begin{lemma}\label{L-2.5}
$\widehat S_f(\rho\|\sigma)$ is jointly convex on $(M_*^+\times M_*^+)_\sim$. Slightly more
strongly, for any $(\rho_i,\sigma_i)\in(M_*^+\times M_*^+)_\sim$ and $\lambda_i\ge0$
($1\le i\le n$) we have
$$
\widehat S_f\Biggl(\sum_{i=1}^n\lambda_i\rho_i\Bigg\|\sum_{i=1}^n\lambda_i\sigma_i\Biggr)
\le\sum_{i=1}^n\lambda_i\widehat S_f(\rho_i\|\sigma_i).
$$
\end{lemma}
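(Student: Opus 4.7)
The plan is to derive joint convexity from the monotonicity inequality (Lemma \ref{L-2.4}) via the classical ``direct sum plus diagonal map'' trick. Without loss of generality assume $\lambda_i>0$ for all $i$; terms with $\lambda_i=0$ can be discarded.

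First I would introduce the direct sum $\widetilde M:=M\oplus\cdots\oplus M$ ($n$ copies), together with the normal positive linear functionals $\widetilde\rho,\widetilde\sigma\in(\widetilde M_*)_+$ defined by
\[
\widetilde\rho(x_1,\dots,x_n):=\sum_{i=1}^n\lambda_i\rho_i(x_i),\qquad
\widetilde\sigma(x_1,\dots,x_n):=\sum_{i=1}^n\lambda_i\sigma_i(x_i).
\]
From $\delta_i\sigma_i\le\rho_i\le\delta_i^{-1}\sigma_i$ and setting $\delta:=\min_i\delta_i$, one gets $\delta\widetilde\sigma\le\widetilde\rho\le\delta^{-1}\widetilde\sigma$, so $(\widetilde\rho,\widetilde\sigma)\in(\widetilde M_*^+\times\widetilde M_*^+)_\sim$. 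Similarly $(\sum\lambda_i\rho_i,\sum\lambda_i\sigma_i)\in(M_*^+\times M_*^+)_\sim$, so that both maximal $f$-divergences in the desired inequality are well-defined by Definition \ref{D-2.3}.

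Next I would verify the direct sum identity
\[
\widehat S_f(\widetilde\rho\|\widetilde\sigma)=\sum_{i=1}^n\lambda_i\widehat S_f(\rho_i\|\sigma_i).
\]
This uses the canonical decomposition $L^1(\widetilde M)=\bigoplus_{i=1}^nL^1(M)$ (and similarly for $L^2$), under which $h_{\widetilde\sigma}^{1/2}=\bigoplus_i(\lambda_ih_{\sigma_i})^{1/2}$ and $h_{\widetilde\rho}^{1/2}=\bigoplus_i(\lambda_ih_{\rho_i})^{1/2}$. The operator $\widetilde A\in s(\widetilde\sigma)\widetilde Ms(\widetilde\sigma)$ of Lemma \ref{L-2.1} then decomposes block-diagonally as $\widetilde A=\bigoplus_iA_i$, with the same $A_i$ that corresponds to $(\rho_i,\sigma_i)$ (the $\lambda_i$-factors cancel in $\widetilde A^*\widetilde A=h_{\widetilde\sigma}^{-1/2}h_{\widetilde\rho}h_{\widetilde\sigma}^{-1/2}$). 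Applying $f$ block-wise and summing against $\widetilde\sigma$ yields the identity. This is the step that needs the most care, but it is straightforward once one writes out the block structure via \eqref{F-2.2}.

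Finally, let $\Phi:M\to\widetilde M$ be the unital positive normal diagonal embedding $\Phi(x):=(x,\dots,x)$. A direct check gives
\[
\widetilde\rho\circ\Phi=\sum_{i=1}^n\lambda_i\rho_i,\qquad
\widetilde\sigma\circ\Phi=\sum_{i=1}^n\lambda_i\sigma_i,
\]
so Lemma \ref{L-2.4} (applied with $M_0:=M$ and the ambient algebra $\widetilde M$ in place of $M$) yields
\[
\widehat S_f\Bigl(\sum_i\lambda_i\rho_i\Big\|\sum_i\lambda_i\sigma_i\Bigr)
=\widehat S_f(\widetilde\rho\circ\Phi\|\widetilde\sigma\circ\Phi)
\le\widehat S_f(\widetilde\rho\|\widetilde\sigma)
=\sum_i\lambda_i\widehat S_f(\rho_i\|\sigma_i),
\]
which is the desired inequality; joint convexity is the special case $\sum_i\lambda_i=1$. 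The main (mild) obstacle is the direct sum identity, which I would justify either by the explicit Haagerup $L^1$ block decomposition as above, or alternatively by noting that $\widetilde M$ admits the standard form $\bigoplus_iL^2(M)$ and reducing the computation to each block separately.
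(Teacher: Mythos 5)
Your proposal is correct and follows essentially the same route as the paper: the direct sum algebra $\oplus_{i=1}^n M$ with the functionals $\oplus_i\lambda_i\rho_i$, $\oplus_i\lambda_i\sigma_i$, the block-diagonal identity $\widehat S_f(\oplus_i\lambda_i\rho_i\|\oplus_i\lambda_i\sigma_i)=\sum_i\lambda_i\widehat S_f(\rho_i\|\sigma_i)$, and monotonicity under the diagonal embedding via Lemma \ref{L-2.4}. The only difference is that you spell out the block decomposition of $A$ and the cancellation of the $\lambda_i$-factors, which the paper leaves as ``immediate.''
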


\begin{proof}
Let $\cM:=\oplus_{i=1}^nM$ and $\Phi:M\to\cM$ be a unital *-homomorphism (hence, completely
positive) given as $\Phi(x):=x\oplus\cdots\oplus x$, $x\in M$. Note that the standard form
of $\cM$ is given as the direct sum $\oplus_{i=1}^n(M,L^2(M),\,^*,L^2(M)_+)$. For given
$(\rho_i,\sigma_i)\in(M_*^+\times M_*^+)_\sim$ and $\lambda_i$ let
$\rho:=\oplus_{i=1}^n\lambda_i\rho_i$ and $\sigma:=\oplus_{i=1}^n\lambda_i\sigma_i$ in
$\cM_*^+$, so $(\rho,\sigma)\in(\cM_*^+\times\cM_*^+)_\sim$. Since
$\rho\circ\Phi=\sum_{i=1}^n\lambda_i\rho_i$ and
$\sigma\circ\Phi=\sum_{i=1}^n\lambda_i\sigma_i$, Lemma \ref{L-2.4} yields
$$
\widehat S_f\Biggl(\sum_i\lambda_i\rho_i\Bigg\|\sum_i\lambda_i\sigma_i\Biggr)
\le\widehat S_f(\rho\|\sigma).
$$
Since $h_\rho=\oplus_{i=1}^n\lambda_ih_{\rho_i}$ and
$h_\sigma=\oplus_{i=1}^n\lambda_ih_{\sigma_i}$, it is immediate to see that
$$
\widehat S_f(\rho\|\sigma)=\sum_i\lambda_i\widehat S_f(\rho_i\|\sigma_i),
$$
implying the asserted inequality.
\end{proof}

To extend the maximal $f$-divergence $\widehat S_f(\rho\|\sigma)$ to arbitrary
$\rho,\sigma\in M_*^+$, we give the following:

\begin{lemma}\label{L-2.6}
Let $\rho,\sigma\in M_*^+$. For every $\eta\in M_*^+$ with $\eta\sim\rho+\sigma$, the limit
\begin{align}\label{F-2.3}
\lim_{\eps\searrow0}\widehat S_f(\rho+\eps\eta\|\sigma+\eps\eta)\in(-\infty,+\infty]
\end{align}
exists, and moreover the limit is independent of the choice of $\eta$ as above.
\end{lemma}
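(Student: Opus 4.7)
The plan is to use joint convexity (Lemma \ref{L-2.5}) together with the trick of subtracting a suitable operator-affine function from $f$, so as to make the relevant quantity monotone in $\varepsilon$. First, the hypothesis $\eta\sim\rho+\sigma$ easily gives $(\rho+\varepsilon\eta,\sigma+\varepsilon\eta)\in(M_*^+\times M_*^+)_\sim$ for every $\varepsilon>0$, so $\widehat S_f(\rho+\varepsilon\eta\|\sigma+\varepsilon\eta)$ is well-defined by Definition \ref{D-2.3}. Since $f$ is operator convex (hence convex) on $(0,+\infty)$, I would pick $a,b\in\bR$ such that the affine $h(t)=at+b$ satisfies $h\le f$ and $h(1)=f(1)$ (any supporting line at $t=1$); then $g:=f-h$ is operator convex with $g\ge0$ and $g(1)=0$. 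A short direct computation from \eqref{F-2.1} yields $\widehat S_h(\tilde\rho\|\tilde\sigma)=a\|\tilde\rho\|+b\|\tilde\sigma\|$ for $(\tilde\rho,\tilde\sigma)\in(M_*^+\times M_*^+)_\sim$, via the chain $\tr(h_{\tilde\sigma}A^*A)=\|Ah_{\tilde\sigma}^{1/2}\|_2^2=\|h_{\tilde\rho}^{1/2}\|_2^2=\|\tilde\rho\|$ and the defining relation $Ah_{\tilde\sigma}^{1/2}=h_{\tilde\rho}^{1/2}$ from Lemma \ref{L-2.1}. By linearity of $f\mapsto\widehat S_f$ this gives
\[
\widehat S_f(\rho+\varepsilon\eta\|\sigma+\varepsilon\eta)=\widehat S_g(\rho+\varepsilon\eta\|\sigma+\varepsilon\eta)+a\|\rho+\varepsilon\eta\|+b\|\sigma+\varepsilon\eta\|,
\]
whose affine part converges as $\varepsilon\searrow0$ to the $\eta$-independent number $a\|\rho\|+b\|\sigma\|$; it thus suffices to prove the statement for $g$ in place of $f$.

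Set $G_\eta(\varepsilon):=\widehat S_g(\rho+\varepsilon\eta\|\sigma+\varepsilon\eta)\ge0$. The key step is monotonicity of $G_\eta$ in $\varepsilon>0$. For $0<\varepsilon_1\le\varepsilon_2$ I would decompose $(\rho+\varepsilon_2\eta,\sigma+\varepsilon_2\eta)=(\rho+\varepsilon_1\eta,\sigma+\varepsilon_1\eta)+(\varepsilon_2-\varepsilon_1)(\eta,\eta)$ and invoke the stronger (unnormalized) form of Lemma \ref{L-2.5} with weights $1$ and $\varepsilon_2-\varepsilon_1$, combined with the trivial identity $\widehat S_g(\eta\|\eta)=g(1)\|\eta\|=0$; this produces $G_\eta(\varepsilon_2)\le G_\eta(\varepsilon_1)$. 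Monotonicity and nonnegativity then immediately yield existence of the monotone limit $L_g(\eta):=\lim_{\varepsilon\searrow0}G_\eta(\varepsilon)\in[0,+\infty]$.

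For the $\eta$-independence I would establish two further properties of $L_g(\eta)$. Scale invariance follows at once from $G_{c\eta}(\varepsilon)=G_\eta(c\varepsilon)$, so $L_g(c\eta)=L_g(\eta)$ for every $c>0$. For a reversed monotonicity in $\eta$, if $\eta\le\eta'$ then decomposing $\rho+\varepsilon\eta'=(\rho+\varepsilon\eta)+\varepsilon(\eta'-\eta)$ (and similarly for $\sigma$) and repeating the joint-convexity argument yields $G_{\eta'}(\varepsilon)\le G_\eta(\varepsilon)$, hence $L_g(\eta')\le L_g(\eta)$. Any two $\eta_1,\eta_2\sim\rho+\sigma$ are mutually $\sim$-related by transitivity, so there exist $0<c\le C$ with $c\eta_1\le\eta_2\le C\eta_1$; combining scale invariance with reversed monotonicity gives $L_g(\eta_2)\le L_g(c\eta_1)=L_g(\eta_1)$ and symmetrically $L_g(\eta_1)\le L_g(\eta_2)$, so $L_g(\eta_1)=L_g(\eta_2)$. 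Reassembling, $\lim_{\varepsilon\searrow0}\widehat S_f(\rho+\varepsilon\eta\|\sigma+\varepsilon\eta)=L_g(\eta)+a\|\rho\|+b\|\sigma\|\in(-\infty,+\infty]$ exists and is independent of $\eta$. The only delicate technical point is the affine-function identity $\widehat S_h(\tilde\rho\|\tilde\sigma)=a\|\tilde\rho\|+b\|\tilde\sigma\|$ (and the implicit positive homogeneity $\widehat S_f(c\tilde\rho\|c\tilde\sigma)=c\widehat S_f(\tilde\rho\|\tilde\sigma)$), which amounts to careful bookkeeping with $\tau$-measurable operators via Lemma \ref{L-2.1}; beyond that the whole argument is a clean application of joint convexity and the normalization $g(1)=0$.
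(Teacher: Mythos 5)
Your proof is correct, and it uses the same key ingredient as the paper (the subadditive form of Lemma \ref{L-2.5}), but it organizes the argument differently in a way worth noting. For existence of the limit, the paper simply observes that $\eps\mapsto\widehat S_f(\rho+\eps\eta\|\sigma+\eps\eta)$ is convex (the map $\eps\mapsto(\rho+\eps\eta,\sigma+\eps\eta)$ is affine) and invokes the standard fact that a finite convex function on an interval has a limit in $(-\infty,+\infty]$ at the left endpoint; you instead subtract a supporting line $h(t)=at+b$ of $f$ at $t=1$, compute $\widehat S_h$ explicitly, and show that for $g=f-h\ge0$ with $g(1)=0$ the quantity $G_\eta(\eps)$ is nonincreasing in $\eps$, so the limit exists monotonically. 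Your normalization buys a genuinely monotone (not just convex) dependence on $\eps$ and makes the error terms vanish exactly, at the modest cost of the extra bookkeeping for $\widehat S_h$ and the linearity of $f\mapsto\widehat S_f$ (both easy from Definition \ref{D-2.3}, and the explicit affine computation is essentially Example \ref{E-2.14}). For independence of $\eta$, your scale invariance $G_{c\eta}(\eps)=G_\eta(c\eps)$ plus the reversed monotonicity $L_g(\eta')\le L_g(\eta)$ for $\eta\le\eta'$ is the same decomposition-plus-subadditivity trick as the paper's direct estimate $\widehat S_f(\rho+\eps\eta_1\|\cdot)\le\widehat S_f(\rho+\eps\delta\eta_2\|\cdot)+\eps(\eta_1-\delta\eta_2)(1)f(1)$ followed by rescaling $\eps$, just phrased after your normalization so that the additive remainder is identically zero. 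One small remark: the positive homogeneity $\widehat S_f(c\tilde\rho\|c\tilde\sigma)=c\widehat S_f(\tilde\rho\|\tilde\sigma)$ that you flag as delicate is never actually needed, since your scale invariance step only uses the identity $\rho+\eps(c\eta)=\rho+(c\eps)\eta$.
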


\begin{proof}
Let $\eta$ be given as stated. Since $\rho+\eps\eta\sim\sigma+\eps\eta$,
$\widehat S_f(\rho+\eps\eta\|\sigma+\eps\eta)$ is defined for each $\eps>0$ by Definition
\ref{D-2.3}, and $0<\eps\mapsto\widehat S_f(\rho+\eps\eta\|\sigma+\eps\eta)$ is convex by
Lemma \ref{L-2.5}. Hence the limit in \eqref{F-2.3} exists in $(-\infty,+\infty]$.

To prove the independence of the choice of $\eta$, let $\eta_1,\eta_2\in M_*^+$ be such that
$\eta_i\sim\rho+\sigma$ ($i=1,2$). Choose a $\delta>0$ such that
$\delta\eta_1\le\eta_2\le\delta^{-1}\eta_1$. By Lemma \ref{L-2.5} we have
\begin{align*}
\widehat S_f(\rho+\eps\eta_1\|\sigma+\eps\eta_1)
&=\widehat S_f(\rho+\eps\delta\eta_2+\eps(\eta_1-\delta\eta_2)\|
\sigma+\eps\delta\eta_2+\eps(\eta_1-\delta\eta_2)) \\
&\le\widehat S_f(\rho+\eps\delta\eta_2\|\sigma+\eps\delta\eta_2)
+\widehat S_f(\eps(\eta_1-\delta\eta_2)\|\eps(\eta_1-\delta\eta_2)) \\
&=\widehat S_f(\rho+\eps\delta\eta_2\|\sigma+\eps\delta\eta_2)
+\eps(\eta_1-\delta\eta_2)(1)f(1).
\end{align*}
Therefore,
$$
\lim_{\eps\searrow0}\widehat S_f(\rho+\eps\eta_1\|\sigma+\eps\eta_1)
\le\lim_{\eps\searrow0}\widehat S_f(\rho+\eps\delta\eta_2\|\sigma+\eps\delta\eta_2)
=\lim_{\eps\searrow0}\widehat S_f(\rho+\eps\eta_2\|\sigma+\eps\eta_2).
$$
The converse inequality is similar.
\end{proof}

\begin{lemma}\label{L-2.7}
If $(\rho,\sigma)\in(M_*^+\times M_*^+)_\sim$, then
$$
\widehat S_f(\rho\|\sigma)=\lim_{\eps\searrow0}\widehat S_f(\rho+\eps\sigma\|(1+\eps)\sigma).
$$
\end{lemma}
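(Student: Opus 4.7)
The plan is to write both divergences in the explicit form given by Definition \ref{D-2.3}, reducing the claim to the norm continuity of the continuous functional calculus on a fixed compact window of the spectrum.

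First I would fix $\delta\in(0,1)$ with $\delta\sigma\le\rho\le\delta^{-1}\sigma$ and let $A\in s(\sigma)Ms(\sigma)$ be the operator from Lemma \ref{L-2.1}, so that $h_\rho^{1/2}=Ah_\sigma^{1/2}$, $T:=A^*A$ satisfies $\delta s(\sigma)\le T\le\delta^{-1}s(\sigma)$, and $h_\rho=h_\sigma^{1/2}Th_\sigma^{1/2}$. For each $\eps>0$, both $\rho+\eps\sigma$ and $(1+\eps)\sigma$ are sandwiched between positive multiples of $\sigma$, so $(\rho+\eps\sigma,(1+\eps)\sigma)\in(M_*^+\times M_*^+)_\sim$, and both have support $s(\sigma)$. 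Writing $h_\sigma=h_\sigma^{1/2}s(\sigma)h_\sigma^{1/2}$ gives
\[
h_{\rho+\eps\sigma}=h_\sigma^{1/2}(T+\eps\,s(\sigma))h_\sigma^{1/2},\qquad h_{(1+\eps)\sigma}^{1/2}=(1+\eps)^{1/2}h_\sigma^{1/2}.
\]

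Next, let $B_\eps\in s(\sigma)Ms(\sigma)$ be the operator produced by Lemma \ref{L-2.1} for the pair $(\rho+\eps\sigma,(1+\eps)\sigma)$, so $h_{\rho+\eps\sigma}^{1/2}=B_\eps h_{(1+\eps)\sigma}^{1/2}$ and hence
\[
h_{\rho+\eps\sigma}=h_{(1+\eps)\sigma}^{1/2}B_\eps^*B_\eps h_{(1+\eps)\sigma}^{1/2}=(1+\eps)\,h_\sigma^{1/2}B_\eps^*B_\eps h_\sigma^{1/2}.
\]
Equating the two expressions for $h_{\rho+\eps\sigma}$ and using that the sandwich map $x\in s(\sigma)Ms(\sigma)\mapsto h_\sigma^{1/2}xh_\sigma^{1/2}\in L^1(M)$ is injective (since $h_\sigma^{1/2}$ is invertible on $s(\sigma)L^2(M)$), I would conclude
\[
B_\eps^*B_\eps=(1+\eps)^{-1}(T+\eps\,s(\sigma)).
\]
By Definition \ref{D-2.3} this gives
\[
\widehat S_f(\rho+\eps\sigma\|(1+\eps)\sigma)=(1+\eps)\,\sigma\!\left(f\bigl((1+\eps)^{-1}(T+\eps\,s(\sigma))\bigr)\right).
\]

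Finally I would pass to the limit. For all $\eps\in(0,1)$ the spectrum of $(1+\eps)^{-1}(T+\eps\,s(\sigma))$ lies in the fixed compact interval $[\delta/2,\delta^{-1}]\subset(0,\infty)$, on which the operator convex function $f$ is continuous. Since $(1+\eps)^{-1}(T+\eps\,s(\sigma))\to T$ in operator norm as $\eps\searrow0$, the continuous functional calculus yields $f\bigl((1+\eps)^{-1}(T+\eps\,s(\sigma))\bigr)\to f(T)$ in norm, and norm continuity of $\sigma$ gives
\[
\lim_{\eps\searrow0}\widehat S_f(\rho+\eps\sigma\|(1+\eps)\sigma)=\sigma(f(T))=\widehat S_f(\rho\|\sigma).
\]

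The only delicate step is the injectivity argument used to identify $B_\eps^*B_\eps$; once that is granted, everything else reduces to routine norm convergence.
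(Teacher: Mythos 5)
Your proposal is correct and follows essentially the same route as the paper: identify the operator $B_\eps^*B_\eps$ for the pair $(\rho+\eps\sigma,(1+\eps)\sigma)$ as $(1+\eps)^{-1}(A^*A+\eps s(\sigma))$ (the paper does this in the formal notation $h_\sigma^{-1/2}h_\rho h_\sigma^{-1/2}$, via the same uniqueness/injectivity of $x\mapsto h_\sigma^{1/2}xh_\sigma^{1/2}$ that you spell out), and then pass to the limit by norm continuity of the functional calculus on a compact spectral window in $(0,+\infty)$, which the paper phrases through the spectral decomposition $\int_\delta^{\delta^{-1}}f\bigl({t+\eps\over1+\eps}\bigr)\,dE(t)\to\int_\delta^{\delta^{-1}}f(t)\,dE(t)$. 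Only a cosmetic caveat: rather than saying $h_\sigma^{1/2}$ is ``invertible on $s(\sigma)L^2(M)$'' (its inverse is unbounded in general), justify injectivity by noting that $h_\sigma^{1/2}z=0$ forces $s(\sigma)z=0$, which gives the same conclusion.
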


\begin{proof}
We find that
\begin{align*}
\widehat S_f(\rho+\eps\sigma\|(1+\eps)\sigma)
&=(1+\eps)\sigma\biggl(f\biggl({h_\sigma^{-1/2}(h_\rho+\eps h_\sigma)h_\sigma^{-1/2}
\over1+\eps}\biggr)\biggr) \\
&=(1+\eps)\sigma\biggl(f\biggl({h_\sigma^{-1/2}h_\rho h_\sigma^{-1/2}+\eps s(\sigma)
\over 1+\eps}\biggr)\biggr).
\end{align*}
Let $h_\sigma^{-1/2}h_\rho h_\sigma^{-1/2}=\int_\delta^{\delta^{-1}}t\,dE(t)$ be the spectral
decomposition, where $0<\delta<1$ and $\int_\delta^{\delta^{-1}}dE(t)=s(\sigma)$. Then we see
that
$$
f\biggl({h_\sigma^{-1/2}h_\rho h_\sigma^{-1/2}+\eps s(\sigma)\over 1+\eps}\biggr)
=\int_\delta^{\delta^{-1}}f\biggl({t+\eps\over1+\eps}\biggr)\,dE(t)
$$
converges to $\int_\delta^{\delta^{-1}}f(t)\,dE(t)=f(h_\sigma^{-1/2}h_\rho h_\sigma^{-1/2})$
in the operator norm, so that
$$
\lim_{\eps\searrow0}\widehat S_f(\rho+\eps\sigma\|(1+\eps)\sigma)
=\sigma(f(h_\sigma^{-1/2}h_\rho h_\sigma^{-1/2})=\widehat S_f(\rho\|\sigma).
$$
\end{proof}

\begin{definition}\label{D-2.8}\rm
For every $\rho,\sigma\in M_*^+$ define the \emph{maximal $f$-divergence}
$\widehat S_f(\rho\|\sigma)$ by
\begin{align}\label{F-2.4}
\widehat S_f(\rho\|\sigma)
:=\lim_{\eps\searrow0}\widehat S_f(\rho+\eps\eta\|\sigma+\eps\eta)\in(-\infty,+\infty]
\end{align}
for any $\eta\in M_*^+$ with $\eta\sim\rho+\sigma$, where
$\widehat S_f(\rho+\eps\eta\|\sigma+\eps\eta)$ is defined in Definition \ref{D-2.3}. By Lemmas
\ref{L-2.6} and \ref{L-2.7} the definition is well defined independently of the choice of
$\eta$ and extend Definition \ref{D-2.3} for the case $\rho\sim\sigma$.
\end{definition}

\begin{thm}\label{T-2.9}
The monotonicity property of Lemma \ref{L-2.4} and the joint convexity property of Lemma
\ref{L-2.5} hold true for $\widehat S_f(\rho\|\sigma)$ for general $\rho,\sigma\in M_*^+$.
\end{thm}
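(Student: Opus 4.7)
The plan is to derive both properties for arbitrary $\rho,\sigma\in M_*^+$ from the corresponding inequalities for pairs in $(M_*^+\times M_*^+)_\sim$ (already proved in Lemmas \ref{L-2.4} and \ref{L-2.5}), by passing to the limit $\eps\searrow0$ in the defining formula \eqref{F-2.4}. Extended-real-valued inequalities transfer to the limit with no extra analysis, so the task reduces to selecting, in each case, a perturber that is both admissible under Definition \ref{D-2.8} for the limiting pair and compatible with the finitary inequality being passed to the limit.

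For the monotonicity, I would fix $\eta\in M_*^+$ with $\eta\sim\rho+\sigma$ and observe that, since $\Phi$ is a positive map, composition preserves the sandwich $\delta(\rho+\sigma)\le\eta\le\delta^{-1}(\rho+\sigma)$, giving
$$
\delta(\rho\circ\Phi+\sigma\circ\Phi)\le\eta\circ\Phi\le\delta^{-1}(\rho\circ\Phi+\sigma\circ\Phi),
$$
so that $\eta\circ\Phi\sim\rho\circ\Phi+\sigma\circ\Phi$ in $M_{0*}^+$. For each $\eps>0$ the pair $(\rho+\eps\eta,\sigma+\eps\eta)$ lies in $(M_*^+\times M_*^+)_\sim$, and Lemma \ref{L-2.4} yields
$$
\widehat S_f\bigl((\rho+\eps\eta)\circ\Phi\,\big\|\,(\sigma+\eps\eta)\circ\Phi\bigr)\le\widehat S_f(\rho+\eps\eta\|\sigma+\eps\eta).
$$
Letting $\eps\searrow0$, the right-hand side converges to $\widehat S_f(\rho\|\sigma)$ by Definition \ref{D-2.8}, while, via Lemma \ref{L-2.6} applied with the admissible dominator $\eta\circ\Phi$, the left-hand side converges to $\widehat S_f(\rho\circ\Phi\|\sigma\circ\Phi)$, giving the desired monotonicity.

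For joint convexity I would argue in parallel: for each $i$ pick $\eta_i\in M_*^+$ with $\eta_i\sim\rho_i+\sigma_i$ (dropping indices where $\lambda_i=0$), set $\eta:=\sum_i\lambda_i\eta_i$, and note $\eta\sim\sum_i\lambda_i\rho_i+\sum_i\lambda_i\sigma_i$ by the same sandwich computation. Lemma \ref{L-2.5} applied to the $\sim$-equivalent pairs $(\rho_i+\eps\eta_i,\sigma_i+\eps\eta_i)$ yields
$$
\widehat S_f\Bigl(\sum_i\lambda_i\rho_i+\eps\eta\,\Big\|\,\sum_i\lambda_i\sigma_i+\eps\eta\Bigr)\le\sum_i\lambda_i\widehat S_f(\rho_i+\eps\eta_i\|\sigma_i+\eps\eta_i),
$$
and sending $\eps\searrow0$ term by term --- legitimate since the sum on the right is finite and each summand has a limit in $(-\infty,+\infty]$ by Lemma \ref{L-2.6} --- yields the asserted convex combination inequality.

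The only step requiring any genuine care is the bookkeeping just exhibited: one must verify that $\eta\circ\Phi$ (respectively $\sum_i\lambda_i\eta_i$) is $\sim$-equivalent to the total mass of the limiting pair, so that Lemma \ref{L-2.6} legitimises its use as a dominator in \eqref{F-2.4} and thereby identifies the $\eps\searrow0$ limit on the left-hand side with the intended maximal $f$-divergence. Beyond this, the argument is a routine monotone passage to the limit in $(-\infty,+\infty]$.
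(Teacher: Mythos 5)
Your argument is correct and follows essentially the same route as the paper: apply Lemmas \ref{L-2.4} and \ref{L-2.5} to the $\eps$-perturbed pairs and pass to the limit via Definition \ref{D-2.8}. The only difference is that the paper takes the specific perturber $\eta=\rho+\sigma$ (resp.\ $\eta_i=\rho_i+\sigma_i$), so that the composed/combined perturber is automatically the total mass of the limiting pair and the sandwich checks you carry out (legitimate, since positive maps preserve the order of positive normal functionals) become unnecessary.
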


\begin{proof}
Let $\Phi:M_0\to M$ be as in Lemma \ref{L-2.4}. For every $\rho,\sigma\in M_*^+$, by Lemma
\ref{L-2.4} we have for every $\eps>0$,
$$
\widehat S_f((\rho+\eps(\rho+\sigma))\circ\Phi\|(\sigma+\eps(\rho+\sigma))\circ\Phi)
\le\widehat S_f(\rho+\eps(\rho+\sigma)\|\sigma+\eps(\rho+\sigma)).
$$
Thanks to Definition \ref{D-2.8}, letting $\eps\searrow0$ gives
$\widehat S_f(\rho\circ\Phi\|\sigma\circ\Phi)\le\widehat S_f(\rho\|\sigma)$.

For any $\rho_i,\sigma_i\in M_*^+$ and $\lambda_i\ge0$ ($1\le i\le n$), by Lemma \ref{L-2.5}
we have for every $\eps>0$,
\begin{align*}
&\widehat S_f\Biggl(\sum_i\lambda_i\rho_i+\eps\Biggl(
\sum_i\lambda_i\rho_i+\sum_i\lambda_i\sigma_i\Biggr)\Bigg\|
\sum_i\lambda_i\sigma_i+\eps\Biggl(
\sum_i\lambda_i\rho_i+\sum_i\lambda_i\sigma_i\Biggr)\Biggr) \\
&\quad=\widehat S_f\Biggl(\sum_i\lambda_i(\rho_i+\eps(\rho_i+\sigma_i))\Bigg\|
\sum_i\lambda_i(\sigma_i+\eps(\rho_i+\sigma_i))\Biggr) \\
&\quad\le\sum_i\lambda_i\widehat S_f(\rho_i+\eps(\rho_i+\sigma_i)\|
\sigma_i+\eps(\rho_i+\sigma_i)).
\end{align*}
Letting $\eps\searrow0$ gives $\widehat S_f(\sum_i\lambda_i\rho_i\|\sum_i\lambda_i\sigma_i)
\le\sum_i\lambda_i\widehat S_f(\rho_i\|\sigma_i)$.
\end{proof}

Another significant property of $\widehat S_f(\rho\|\sigma)$ is the joint lower semicontinuity,
which we will prove later in Section 5 after developing a general integral formula in Section 4.

The \emph{transpose} $\widetilde f$ of $f$ is defined by
$$
\widetilde f(t):=tf(t^{-1}),\qquad t\in(0,+\infty),
$$
which is again an operator convex function on $(0,+\infty)$, see \cite[Proposition A.1]{HM}.
It is immediate to see that $\widetilde f(0^+)=f'(+\infty)$ and $\widetilde f'(+\infty)=f(0^+)$.
The next proposition shows the symmetry of $\widehat S_f(\rho\|\sigma)$ between two variables
under exchanging $f$ and $\widehat f$.

\begin{prop}\label{P-2.10}
For every $\rho,\sigma\in M_*^+$,
$$
\widehat S_{\widetilde f}(\rho\|\sigma)=\widehat S_f(\sigma\|\rho).
$$
\end{prop}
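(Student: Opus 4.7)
The plan is to reduce the identity to the case $\rho\sim\sigma$ via Definition \ref{D-2.8}, and then to verify it there by direct manipulation of the bounded operator $A$ provided by Lemma \ref{L-2.1}, using the standard intertwining identity $A\cdot g(A^*A)=g(AA^*)\cdot A$.

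For the reduction, fix any $\eta\in M_*^+$ with $\eta\sim\rho+\sigma$ (equivalently, $\eta\sim\sigma+\rho$). Since $\widetilde f$ is operator convex on $(0,+\infty)$ (as recorded just before the statement) and $\rho+\eps\eta\sim\sigma+\eps\eta$ for every $\eps>0$, Definition \ref{D-2.8} yields
\[
\widehat S_{\widetilde f}(\rho\|\sigma)=\lim_{\eps\searrow0}\widehat S_{\widetilde f}(\rho+\eps\eta\|\sigma+\eps\eta),\qquad
\widehat S_f(\sigma\|\rho)=\lim_{\eps\searrow0}\widehat S_f(\sigma+\eps\eta\|\rho+\eps\eta),
\]
so it is enough to prove the identity under the assumption $\rho\sim\sigma$.

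In that case set $e:=s(\rho)=s(\sigma)$ and let $A\in eMe$ be the element from Lemma \ref{L-2.1} with $h_\rho^{1/2}=Ah_\sigma^{1/2}$; by the last part of that lemma $\beta e\le A^*A\le\alpha e$ for some $\alpha,\beta>0$, so $A$ is invertible in $eMe$ and the bounded operators $(A^*A)^{-1}$ and $(AA^*)^{-1}$ are well defined. Taking adjoints of $h_\rho^{1/2}=Ah_\sigma^{1/2}$ gives $h_\rho^{1/2}=h_\sigma^{1/2}A^*$ and hence $h_\rho=Ah_\sigma A^*$. The element of Lemma \ref{L-2.1} corresponding to $(\sigma,\rho)$ is $B:=A^{-1}\in eMe$, for which $B^*B=(AA^*)^{-1}$. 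Writing $\widetilde f(t)=tf(t^{-1})$ and applying cyclicity of $\tr$ on $L^1(M)$ together with the intertwining $A\cdot g(A^*A)=g(AA^*)\cdot A$ with $g(t):=f(t^{-1})$, one computes
\begin{align*}
\widehat S_{\widetilde f}(\rho\|\sigma)
&=\tr\bigl(h_\sigma\cdot A^*A\,f((A^*A)^{-1})\bigr)
=\tr\bigl(h_\sigma A^*\,f((AA^*)^{-1})\,A\bigr)\\
&=\tr\bigl(Ah_\sigma A^*\cdot f((AA^*)^{-1})\bigr)
=\tr\bigl(h_\rho\,f(B^*B)\bigr)
=\widehat S_f(\sigma\|\rho).
\end{align*}

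The only nonroutine ingredient is the intertwining $Ag(A^*A)=g(AA^*)A$, which follows immediately from the polar decomposition $A=U|A|$ in $eMe$ (with $U$ unitary in $eMe$ since $A$ is invertible there) and the trivial commutation $|A|g(|A|^2)=g(|A|^2)|A|$; here $g(t)=f(t^{-1})$ is continuous on the bounded interval $[\beta,\alpha]$ containing $\Sp(A^*A)$. Everything else is bookkeeping in Haagerup's $L^1(M)$, relying on the relation $h_\rho=Ah_\sigma A^*$ and on cyclicity of $\tr$ against bounded multipliers.
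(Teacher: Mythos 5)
Your proof is correct and follows essentially the same route as the paper: reduce to the case $\rho\sim\sigma$ via Definition \ref{D-2.8}, then exploit $B=A^{-1}$ from Lemma \ref{L-2.1} to identify $\widetilde f(A^*A)$ with $A^*f(B^*B)A$. The only (inessential) difference is that the paper verifies this operator identity by checking monomials $f(t)=t^m$ and approximating, whereas you obtain it from the intertwining $Ag(A^*A)=g(AA^*)A$ together with $h_\rho=Ah_\sigma A^*$ and cyclicity of $\tr$, which amounts to the same computation.
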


\begin{proof}
Assume first that $(\rho,\sigma)\in(M_*^+\times M_*^+)_\sim$. Let $A,B\in s(\sigma)Ms(\sigma)$
be as given in the proof of Lemma \ref{L-2.1}. Then we write
$$
\widehat S_{\widetilde f}(\rho\|\sigma)
=\tr\,h_\sigma^{1/2}\widetilde f(A^*A)h_\sigma^{1/2}
$$
and
$$
\widehat S_f(\sigma\|\rho)=\tr\,h_\rho^{1/2}f(B^*B)h_\rho^{1/2}
=\tr\,h_\sigma^{1/2}A^*f(B^*B)Ah_\sigma^{1/2}.
$$
Hence it suffices to show that
\begin{align}\label{F-2.5}
\widetilde f(A^*A)=A^*f(B^*B)A
\end{align}
for every continuous function $f$ on $(0,+\infty)$. By approximation we may show \eqref{F-2.5}
when $f(t)=t^m$ for any non-negative integer $m$. Since $A=B^{-1}$, we have
\begin{align*}
A^*f(B^*B)A&=B^{*-1}(B^*B)^mB^{-1}=(BB^*)^{m-1} \\
&=((A^*A)^{-1})^{m-1}=A^*A((A^*A)^{-1})^m=\widetilde f(A^*A),
\end{align*}
so that \eqref{F-2.5} is shown. Now, the asserted equality for general $\rho,\sigma\in M_*^+$
immediately follows from the above case and Definition \ref{D-2.8}.
\end{proof}

\begin{prop}\label{P-2.11}
Let $\rho_i,\sigma_i\in M_*^+$ ($i=1,2$). If
$s(\rho_1)\vee s(\sigma_1)\perp s(\rho_2)\vee s(\sigma_2)$, then
$$
\widehat S_f(\rho_1+\rho_2\|\sigma_1+\sigma_2)
=\widehat S_f(\rho_1\|\sigma_1)+\widehat S_f(\rho_2\|\sigma_2).
$$
\end{prop}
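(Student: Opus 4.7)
\begin{proof}[Proof proposal]
My plan is to first handle the case where each pair is in the $\sim$ relation, exploiting the orthogonal support decomposition directly on Haagerup $L^2$-representatives, and then to pass to general $\rho_i,\sigma_i\in M_*^+$ by the regularization $+\eps\eta$ of Definition \ref{D-2.8}.

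Set $e_i:=s(\rho_i)\vee s(\sigma_i)$ and assume first $(\rho_i,\sigma_i)\in(M_*^+\times M_*^+)_\sim$ with $s(\rho_i)=s(\sigma_i)=e_i$ for $i=1,2$. Because $e_1\perp e_2$, the $L^1(M)$-representatives satisfy $h_{\rho_1}h_{\rho_2}=h_{\sigma_1}h_{\sigma_2}=h_{\rho_i}h_{\sigma_j}=0$ for $i\ne j$, and since the spectral projections of $h_{\rho_i}$ and $h_{\sigma_i}$ lie under $e_i$, the square roots satisfy $h_{\rho_1+\rho_2}^{1/2}=h_{\rho_1}^{1/2}+h_{\rho_2}^{1/2}$ and $h_{\sigma_1+\sigma_2}^{1/2}=h_{\sigma_1}^{1/2}+h_{\sigma_2}^{1/2}$, with all four cross products zero. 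Let $A_i\in e_iMe_i$ be the operator from Lemma \ref{L-2.1} with $h_{\rho_i}^{1/2}=A_ih_{\sigma_i}^{1/2}$. Then $A:=A_1+A_2\in(e_1+e_2)M(e_1+e_2)$ satisfies $Ah_{\sigma_1+\sigma_2}^{1/2}=h_{\rho_1+\rho_2}^{1/2}$, so by uniqueness in Lemma \ref{L-2.1} this is the operator associated with the pair $(\rho_1+\rho_2,\sigma_1+\sigma_2)$, and
$$
A^*A=A_1^*A_1+A_2^*A_2
$$
because the cross terms vanish by orthogonality of $e_1,e_2$. Applying the functional calculus block by block in $(e_1+e_2)M(e_1+e_2)=e_1Me_1\oplus e_2Me_2$ gives $f(A^*A)=f(A_1^*A_1)+f(A_2^*A_2)$. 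Finally, since $h_{\sigma_j}$ is supported under $e_j$ and $f(A_i^*A_i)\in e_iMe_i$, one has $\sigma_j(f(A_i^*A_i))=\tr(h_{\sigma_j}f(A_i^*A_i))=0$ whenever $i\ne j$, so
$$
\widehat S_f(\rho_1+\rho_2\|\sigma_1+\sigma_2)
=(\sigma_1+\sigma_2)(f(A^*A))
=\sigma_1(f(A_1^*A_1))+\sigma_2(f(A_2^*A_2)),
$$
which is the desired identity in the $\sim$ case.

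For general $\rho_i,\sigma_i\in M_*^+$, pick $\eta_i\in M_*^+$ with $\eta_i\sim\rho_i+\sigma_i$ (so $s(\eta_i)=e_i$). Then $\eta:=\eta_1+\eta_2$ satisfies $\eta\sim(\rho_1+\sigma_1)+(\rho_2+\sigma_2)=(\rho_1+\rho_2)+(\sigma_1+\sigma_2)$, and for every $\eps>0$ the pairs $(\rho_i+\eps\eta_i,\sigma_i+\eps\eta_i)$ are in $(M_*^+\times M_*^+)_\sim$ with support $e_i$, still orthogonal across $i=1,2$. Applying the case already established,
$$
\widehat S_f\bigl(\rho_1+\rho_2+\eps\eta\bigm\|\sigma_1+\sigma_2+\eps\eta\bigr)
=\widehat S_f(\rho_1+\eps\eta_1\|\sigma_1+\eps\eta_1)
+\widehat S_f(\rho_2+\eps\eta_2\|\sigma_2+\eps\eta_2).
$$
Letting $\eps\searrow0$ and using Definition \ref{D-2.8} on each side (applied with perturbing weight $\eta$ on the left and $\eta_i$ on the right) yields the proposition.
\end{proof}

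The main technical point I expect to need care with is the square-root decomposition $h_{\rho_1+\rho_2}^{1/2}=h_{\rho_1}^{1/2}+h_{\rho_2}^{1/2}$ together with the verification that the operator $A_1+A_2$ really is the one produced by Lemma \ref{L-2.1} for the summed pair; everything else is a routine block-diagonal functional calculus argument combined with the standard $\eps\searrow0$ reduction from Definition \ref{D-2.8}.
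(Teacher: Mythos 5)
Your proof is correct and follows essentially the same route as the paper: reduce to the case $(\rho_i,\sigma_i)\in(M_*^+\times M_*^+)_\sim$ with orthogonal supports via the $\eps$-regularization of Definition \ref{D-2.8}, note $h_{\rho_1+\rho_2}^{1/2}=h_{\rho_1}^{1/2}+h_{\rho_2}^{1/2}$ and $h_{\sigma_1+\sigma_2}^{1/2}=h_{\sigma_1}^{1/2}+h_{\sigma_2}^{1/2}$, identify $A=A_1+A_2$ via Lemma \ref{L-2.1}, and apply block-diagonal functional calculus. Your spelled-out $\eps\searrow0$ step with $\eta=\eta_1+\eta_2$ is exactly the reduction the paper leaves implicit.
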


\begin{proof}
In view of Definition \ref{D-2.8} one may show the identity in the case where
$(\rho_i,\sigma_i)\in(M_*^+\times M_*^+)_\sim$ ($i=1,2$) with $s(\sigma_1)\perp s(\sigma_2)$.
For $i=1,2$ choose an $A_i\in s(\sigma_i)Ms(\sigma_i)$ such that
$h_{\rho_i}^{1/2}=A_ih_{\sigma_i}^{1/2}$. Note that
\begin{align*}
(A_1+A_2)h_{\sigma_1+\sigma_2}^{1/2}
&=(A_1+A_2)(h_{\sigma_1}^{1/2}+h_{\sigma_2}^{1/2})
=A_1h_{\sigma_1}^{1/2}+A_2h_{\sigma_2}^{1/2} \\
&=h_{\rho_1}^{1/2}+h_{\rho_2}^{1/2}=h_{\rho_1+\rho_2}^{1/2}
\end{align*}
and $f((A_1+A_2)^*(A_1+A_2))=f(A_1^*A_1)+f(A_2^*A_2)$ as operators in
$s(\sigma_1+\sigma_2)Ms(\sigma_1+\sigma_2)$. Hence the asserted equality follows.
\end{proof}

\begin{example}\label{E-2.12}\rm
When $M$ is semifinite with a faithful normal semifinite trace $\tau_0$, we have the
conventional non-commutative $L^p$-space $L^p(M,\tau_0)$ for $1\le p<\infty$, the space of
$\tau_0$-measurable operators $x$ affiliated with $M$ such that
$\|x\|_p^p=\tau_0(|x|^p)<+\infty$, see \cite{Ne}. The explicit relation between $L^p(M,\tau_0)$
and Haagerup's $L^p(M)$ is found in \cite[pp.\ 62--63]{Te}. In the semifinite case, $M$ is
standardly represented on the Hilbert space $L^2(M,\tau_0)$ by the left multiplication, and one
can define $\widehat S_f(\rho\|\sigma)$ for $\rho,\sigma$ with use of Radon-Nikodym derivatives
$h_\rho:=d\rho/d\tau_0$, $h_\sigma:=d\sigma/d\tau_0$ in $L^1(M,\tau_0)_+$ (so
$\rho(x)=\tau_0(x h_\rho)$ for $x\in M$) in place of Haagerup's $h_\rho,h_\sigma$.

In particular, assume that $M$ is the algebra $\cB(\cH)$ of all linear operators on a
finite-dimensional Hilbert space $\cH$, or $M=\bM_d$, the matrix algebra of size $d:=\dim\cH$.
Let $\rho,\sigma\in\bM_d^+$, which define positive linear functionals $\rho(X):=\Tr\rho X$,
$\sigma(X):=\Tr\sigma X$ for $X\in\bM_d$ with the same notations as $\rho,\sigma$, where $\Tr$
is the usual matrix trace. Then $\widehat S_f(\rho\|\sigma)$ coincides with that defined in
\cite[Definition 3.21]{HM}. In fact, when $\rho,\sigma$ are invertible, Definition \ref{D-2.3}
becomes $\widehat S_f(\rho\|\sigma)=\Tr\sigma f(\sigma^{-1/2}\rho\sigma^{-1/2})$. For general
$\rho,\sigma\in\bM_d^+$ let $e$ be the support projection of $\rho+\sigma$. By Proposition
\ref{P-2.11}, $\widehat S_f(\rho\|\sigma)$ in \cite{HM} is defined as
\begin{align*}
\lim_{\eps\searrow0}\widehat S_f(\rho+\eps I\|\sigma+\eps I)
&=\lim_{\eps\searrow0}\bigl\{\widehat S_f(\rho+\eps e\|\sigma+\eps e)
+\widehat S_f(\eps(I-e)\|\eps(I-e))\bigr\} \\
&=\lim_{\eps\searrow0}\bigl\{\widehat S_f(\rho+\eps e\|\sigma+\eps e)
+\eps f(1)\sigma(I-e)\bigr\} \\
&=\lim_{\eps\searrow0}\widehat S_f(\rho+\eps e\|\sigma+\eps e),
\end{align*}
which is Definition \ref{D-2.8}.
\end{example}

\begin{example}\label{E-2.13}\rm
Let $M$ be an abelian von Neumann algebra such that $M\cong L^\infty(X,\mu)$ on a
$\sigma$-finite measure space $(X,\cX,\mu)$. The standard form of $M\cong L^\infty(X,\mu)$ is
$(L^\infty(X,\mu),L^2(X,\mu),\allowbreak\xi\mapsto\overline\xi,L^2(X,\mu)_+)$, where
$\phi\in L^\infty(X,\mu)$ is represented on $L^2(X,\mu)$ as the multiplication operator
$\xi\mapsto \phi\xi$, $\xi\in L^2(X,\mu)$. Let $\rho,\sigma\in M_*^+$, which are
identified with functions in $L^1(X,\mu)_+$ (denoted here by the same $\rho,\sigma$ instead
of $h_\rho,h_\sigma$) so that $\rho(\phi)=\int_X\phi\rho\,d\mu$,
$\sigma(\phi)=\int_X\phi\sigma\,d\mu$ for $\phi\in L^\infty(X,\mu)$. With $\eta=\rho+\sigma$
note that
$$
\widehat S_f(\rho+\eps\eta\|\sigma+\eps\eta)
=\int_X(\sigma(x)+\eps\eta(x))f\biggl({\rho(x)+\eps\eta(x)
\over\sigma(x)+\eps\eta(x)}\biggr)\,d\mu(x)
=S_f(\rho+\eps\eta\|\sigma+\eps\eta)
$$
by \cite[Example 2.5]{Hi1}. By Definition \ref{D-2.8} and \cite[Corollary 4.4\,(3)]{Hi1},
take the limit of the above as $\eps\searrow0$ to see that $\widehat S_f(\rho\|\sigma)$
coincides with the classical
$f$-divergence $S_f(\rho\|\sigma)=\int_X\sigma f(\rho/\sigma)\,d\mu$.
\end{example}

\begin{example}\label{E-2.14}\rm
Consider a linear function $f(t)=a+bt$ with $a,b\in\bR$. For every
$(\rho,\sigma)\in(M_*^+\times M_*^+)_\sim$ let $A\in eMe$ be as in Lemma \ref{L-2.1} where
$e=s(\rho)=s(\sigma)$, so that $h_\rho^{1/2}=Ah_\sigma^{1/2}$. Then
\begin{align*}
\widehat S_{a+bt}(\rho\|\sigma)&=\sigma(ae+bA^*A)
=a\sigma(e)+b\,\tr(h_\sigma^{1/2}A^*Ah_\sigma^{1/2}) \\
&=a\sigma(1)+b\,\tr\,h_\rho=a\sigma(1)+b\rho(1).
\end{align*}
This holds for all $\rho,\sigma\in M_*^+$ by Definition \ref{D-2.8}. Hence together with
\cite[(2,7)]{Hi1},
\begin{align}\label{F-2.6}
\widehat S_{a+bt}(\rho\|\sigma)=S_{a+bt}(\rho\|\sigma)=a\sigma(1)+b\rho(1),\qquad
\rho,\sigma\in M_*^+.
\end{align}

Next, consider $f(t)=t^2$. Let $(\rho,\sigma)\in(M_*^+\times M_*^+)_\sim$ and $A$ be as above.
Then
$$
\widehat S_{t^2}(\rho\|\sigma)=\tr(h_\sigma^{1/2}(A^*A)^2h_\sigma^{1/2})
=\tr(h_\rho^{1/2}AA^*h_\rho^{1/2})=\|h_\rho^{1/2}A\|_2^2.
$$
On the other hand, since $h_\rho=(h_\rho^{1/2}A)h_\sigma^{1/2}$, by
\cite[Lemma 5.2 and Proposition A.4\,(2)]{Hi1} we note that
$S_{t^2}(\rho\|\sigma)=\|h_\rho^{1/2}A\|_2^2$. Hence
$\widehat S_{t^2}(\rho\|\sigma)=S_{t^2}(\rho\|\sigma)$ for all $\rho,\sigma\in M_*^+$ by
Definition \ref{D-2.8} and \cite[(4.6)]{Hi1}. Thus, $\widehat S_f=S_f$ if $f$ is a quadratic
polynomial.
\end{example}

In the rest of this section we will present more formulas of $\widehat S_f(\rho\|\sigma)$ in
some special situations.

\begin{prop}\label{P-2.15}
For every $(\rho,\sigma)\in(M_*^+\times M_*^+)_\le$,
$$
\widehat S_f(\rho\|\sigma)=\lim_{\eps\searrow0}\widehat S_f(\rho+\eps\sigma\|\sigma).
$$
For every $(\rho,\sigma)\in(M_*^+\times M_*^+)_\ge$,
$$
\widehat S_f(\rho\|\sigma)=\lim_{\eps\searrow0}\widehat S_f(\rho\|\sigma+\eps\rho).
$$
\end{prop}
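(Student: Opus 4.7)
I prove the first identity directly and reduce the second to it via the transpose symmetry. Suppose $\rho\le\alpha\sigma$. Then $\sigma\le\rho+\sigma\le(1+\alpha)\sigma$, so $\sigma\sim\rho+\sigma$, and taking $\eta:=\sigma$ in Definition \ref{D-2.8} gives
\begin{align*}
\widehat S_f(\rho\|\sigma)=\lim_{\eps\searrow0}\widehat S_f(\rho+\eps\sigma\,\|\,(1+\eps)\sigma),
\end{align*}
so it suffices to show this limit coincides with $\lim_{\eps\searrow0}\widehat S_f(\rho+\eps\sigma\,\|\,\sigma)$. For each $\eps>0$ we have $(\rho+\eps\sigma,\sigma)\in(M_*^+\times M_*^+)_\sim$, and Lemma \ref{L-2.1} provides a bounded $T:=h_\sigma^{-1/2}h_\rho h_\sigma^{-1/2}\in (s(\sigma)Ms(\sigma))_+$ with $0\le T\le\alpha s(\sigma)$. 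Exactly as in the proof of Lemma \ref{L-2.7}, writing $\mu$ for the finite positive measure $\mu(B):=\sigma(E_T(B))$ on $[0,\alpha]$ arising from the spectral measure $E_T$ of $T$, one finds
\begin{align*}
\widehat S_f(\rho+\eps\sigma\,\|\,\sigma)=\int_0^\alpha f(t+\eps)\,d\mu(t),\qquad
\widehat S_f(\rho+\eps\sigma\,\|\,(1+\eps)\sigma)=\int_0^\alpha(1+\eps)f\!\biggl(\frac{t+\eps}{1+\eps}\biggr)d\mu(t).
\end{align*}

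The task reduces to showing that both integrals converge, as $\eps\searrow0$, to the common value $\int_0^\alpha f(t)\,d\mu(t)\in(-\infty,+\infty]$ (with $f(0):=f(0^+)$). Operator convexity yields a linear lower bound $f(t)\ge a+bt$, so both integrands are uniformly bounded below and Fatou's lemma gives $\liminf\ge\int_0^\alpha f\,d\mu$ for each. For the matching upper bound I would write $t+\eps=(1-\lambda_\eps(t))t+\lambda_\eps(t)(\alpha+1)$ with $\lambda_\eps(t)=\eps/(\alpha+1-t)\in(0,\eps]$ and $(t+\eps)/(1+\eps)=(1/(1+\eps))t+(\eps/(1+\eps))\cdot1$; applying convexity of $f$ on $(0,\infty)$ at interior points $t\in(0,\alpha]$ produces uniform estimates $f(t+\eps)\le f(t)+\eps K$ and $(1+\eps)f((t+\eps)/(1+\eps))\le f(t)+\eps f(1)$, where $K$ depends only on $\alpha$, $f(\alpha+1)$ and a lower bound for $f$ on $[0,\alpha]$. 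When $\int_0^\alpha f\,d\mu<+\infty$ these give $\limsup\le\int_0^\alpha f\,d\mu$, closing the argument; when $\int_0^\alpha f\,d\mu=+\infty$ the Fatou bound already forces both limits to be $+\infty$. A possible atom at $t=0$ is handled separately via $f(\eps)\to f(0^+)$. The main technical point is precisely this step when $f(0^+)=+\infty$ and $\mu$ charges arbitrarily small $t$: the Fatou/convexity dichotomy just described bypasses any need for a single integrable dominating function.

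For the second assertion, $(\rho,\sigma)\in(M_*^+\times M_*^+)_\ge$ is equivalent to $(\sigma,\rho)\in(M_*^+\times M_*^+)_\le$. Since $\widetilde f$ is operator convex on $(0,\infty)$ and $\widetilde{\widetilde f}=f$, applying Proposition \ref{P-2.10} to the pairs $(\rho,\sigma)$ and $(\sigma+\eps\rho,\rho)$ together with the first assertion applied to $\widetilde f$ gives
\begin{align*}
\widehat S_f(\rho\|\sigma)=\widehat S_{\widetilde f}(\sigma\|\rho)=\lim_{\eps\searrow0}\widehat S_{\widetilde f}(\sigma+\eps\rho\,\|\,\rho)=\lim_{\eps\searrow0}\widehat S_f(\rho\,\|\,\sigma+\eps\rho).
\end{align*}
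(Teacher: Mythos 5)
Your proof is correct and follows essentially the same route as the paper: the same reduction via Definition \ref{D-2.8} with $\eta=\sigma$, the same spectral-integral formulas $\int_0^\alpha f(t+\eps)\,d\mu$ and $(1+\eps)\int_0^\alpha f\bigl({t+\eps\over1+\eps}\bigr)d\mu$ coming from Lemmas \ref{L-2.1} and \ref{L-2.7}, and the second assertion deduced from the first via Proposition \ref{P-2.10}. The only difference is the final interchange of limit and integral: the paper splits $[0,\alpha]$ and uses monotone convergence near $0$ (exploiting that $f$ is decreasing there when $f(0^+)=+\infty$) together with bounded convergence elsewhere, while you use Fatou from below and the uniform convexity bounds $f(t+\eps)\le f(t)+\eps K$, $(1+\eps)f\bigl({t+\eps\over1+\eps}\bigr)\le f(t)+\eps f(1)$ from above with the atom at $t=0$ treated separately; both devices are valid and identify the common limit $\int_0^\alpha f\,d\mu$.
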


\begin{proof}
To show the first assertion, let $(\rho,\sigma)\in(M_*^+\times M_*^+)_\le$; then
$\rho\le\alpha\sigma$ for some $\alpha>0$. Let
$h_\sigma^{-1/2}h_\rho h_\sigma^{-1/2}=\int_0^\alpha t\,dE(t)$ be the spectral decomposition
with $\int_0^\alpha dE(t)=s(\sigma)$. Then, as in the proof of Lemma \ref{L-2.7}, one has
$$
\widehat S_f(\rho+\eps\sigma\|(1+\eps)\sigma)
=(1+\eps)\int_0^\alpha f\biggl({t+\eps\over1+\eps}\biggr)\,d\sigma(E(t)).
$$
Hence, by Definition \ref{D-2.8},
$$
\widehat S_f(\rho\|\sigma)
=\lim_{\eps\searrow0}\widehat S_f(\rho+\eps\sigma\|(1+\eps)\sigma)
=\lim_{\eps\searrow0}\int_0^\alpha f\biggl({t+\eps\over1+\eps}\biggr)\,d\sigma(E(t)).
$$
Since $\widehat S_f(\rho+\eps\sigma\|\sigma)=\int_0^\alpha f(t+\eps)\,d\sigma(E(t))$, it
suffices to show that
\begin{align}\label{F-2.7}
\lim_{\eps\searrow0}\int_0^\alpha f\biggl({t+\eps\over1+\eps}\biggr)\,d\sigma(E(t))
=\lim_{\eps\searrow0}\int_0^\alpha f(t+\eps)\,d\sigma(E(t)).
\end{align}
Consider $f$ as a function on $[0,+\infty)$ by letting $f(0)=f(0^+)\in(-\infty,+\infty]$.
When $f(0^+)<+\infty$, both sides of \eqref{F-2.7} are equal to
$\int_0^\alpha f(t)\,d\sigma(E(t))$ by the bounded convergence theorem. When $f(0^+)=+\infty$,
choose a $\delta>0$ with $\delta\le\min\{\alpha,1\}$ such that $f(t)$ is decreasing on
$(0,\delta)$. Since $f(t+\eps)$ and $f\bigl({t+\eps\over1+\eps}\bigr)$ are increasing to
$f(t)$ as $\delta/2>\eps\searrow0$ for any $t\in[0,\delta/2]$, the monotone convergence
theorem gives
$$
\lim_{\eps\searrow0}\int_0^{\delta/2}f\biggl({t+\eps\over1+\eps}\biggr)\,d\sigma(E(t))
=\lim_{\eps\searrow0}\int_0^{\delta/2}f(t+\eps)\,d\sigma(E(t))
=\int_0^{\delta/2}f(t)\,d\sigma(E(t)).
$$
On the other hand, the bounded convergence theorem gives
$$
\lim_{\eps\searrow0}\int_{\delta/2}^\alpha f\biggl({t+\eps\over1+\eps}\biggr)\,d\sigma(E(t))
=\lim_{\eps\searrow0}\int_{\delta/2}^\alpha f(t+\eps)\,d\sigma(E(t))
=\int_{\delta/2}^\alpha f(t)\,d\sigma(E(t)).
$$
Hence \eqref{F-2.7} follows.

The second assertion is immediate from the first and Proposition \ref{P-2.10}.
\end{proof}

\begin{prop}\label{P-2.16}
If $f(0^+)<+\infty$, then expression \eqref{F-2.1} (or \eqref{F-2.2}) holds for every
$(\rho,\sigma)\in(M_*^+\times M_*^+)_\le$, where $f(0)=f(0^+)$.
\end{prop}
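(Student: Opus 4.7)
The plan is to reduce the claim to a direct spectral computation and then invoke Proposition \ref{P-2.15}. Given $(\rho,\sigma)\in(M_*^+\times M_*^+)_\le$ with $\rho\le\alpha\sigma$, I would first use Lemma \ref{L-2.1} to produce the unique $A\in s(\sigma)Ms(\sigma)$ with $h_\rho^{1/2}=Ah_\sigma^{1/2}$ and $A^*A\le\alpha s(\sigma)$. Since $f$ is operator convex on $(0,+\infty)$ (hence convex and continuous there) and $f(0^+)<+\infty$, setting $f(0):=f(0^+)$ extends $f$ continuously, and so boundedly, to $[0,\alpha]$. Thus $f(A^*A)\in s(\sigma)Ms(\sigma)$ is well defined via continuous functional calculus, and the right-hand side of \eqref{F-2.1} makes sense.

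Next, fix $\eps>0$. Since $\rho\le\alpha\sigma$ forces $s(\rho)\le s(\sigma)$, one has $\eps\sigma\le\rho+\eps\sigma\le(\alpha+\eps)\sigma$ and $s(\rho+\eps\sigma)=s(\sigma)$, so $\rho+\eps\sigma\sim\sigma$ and Definition \ref{D-2.3} gives $\widehat S_f(\rho+\eps\sigma\|\sigma)$ directly. If $A_\eps\in s(\sigma)Ms(\sigma)$ satisfies $h_{\rho+\eps\sigma}^{1/2}=A_\eps h_\sigma^{1/2}$, then the identity $h_{\rho+\eps\sigma}=h_\sigma^{1/2}(A^*A+\eps s(\sigma))h_\sigma^{1/2}$ combined with the uniqueness clause of Lemma \ref{L-2.1} forces $A_\eps^*A_\eps=A^*A+\eps s(\sigma)$. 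Writing the spectral decomposition $A^*A=\int_0^\alpha t\,dE(t)$ with $E([0,\alpha])=s(\sigma)$, the functional calculus then yields
\begin{align*}
\widehat S_f(\rho+\eps\sigma\|\sigma)
=\sigma\bigl(f(A^*A+\eps s(\sigma))\bigr)
=\int_0^\alpha f(t+\eps)\,d\sigma(E(t)).
\end{align*}

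Finally, $t\mapsto\sigma(E(t))$ is a finite positive Borel measure on $[0,\alpha]$ of total mass $\sigma(s(\sigma))=\sigma(1)$, and $f$ is continuous and bounded on $[0,\alpha]$, so the bounded convergence theorem gives
\begin{align*}
\lim_{\eps\searrow0}\widehat S_f(\rho+\eps\sigma\|\sigma)
=\int_0^\alpha f(t)\,d\sigma(E(t))=\sigma(f(A^*A)).
\end{align*}
Combining this with Proposition \ref{P-2.15}, which asserts $\widehat S_f(\rho\|\sigma)=\lim_{\eps\searrow0}\widehat S_f(\rho+\eps\sigma\|\sigma)$, would establish \eqref{F-2.1} on $(M_*^+\times M_*^+)_\le$. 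No serious obstacle is anticipated: the only delicate point is the continuous extension of $f$ to $[0,\alpha]$, which follows immediately from convexity together with the hypothesis $f(0^+)<+\infty$, and the rest is a routine spectral computation plus bounded convergence.
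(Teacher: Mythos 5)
Your proposal is correct and follows essentially the paper's own route: the paper likewise reduces Proposition \ref{P-2.16} to the argument inside Proposition \ref{P-2.15}, where (for $f(0^+)<+\infty$) the identity $\widehat S_f(\rho+\eps\sigma\|\sigma)=\int_0^\alpha f(t+\eps)\,d\sigma(E(t))$ appears and the bounded convergence theorem yields $\widehat S_f(\rho\|\sigma)=\int_0^\alpha f(t)\,d\sigma(E(t))=\sigma(f(A^*A))$. Your only (harmless) variation is to quote the statement of Proposition \ref{P-2.15} and redo the spectral computation for $\widehat S_f(\rho+\eps\sigma\|\sigma)$ explicitly, rather than citing the steps of its proof directly.
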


\begin{proof}
Let $(\rho,\sigma)\in(M_*^+\times M_*^+)_\le$. When $f(0^+)<+\infty$, the proof of Proposition
\ref{P-2.15} gives
$$
\widehat S_f(\rho\|\sigma)=\int_0^\alpha f(t)\,d\sigma(E(t))
=\sigma(f(h_\sigma^{-1/2}h_\rho h_\sigma^{-1/2}),
$$
which shows the assertion.
\end{proof}

\begin{prop}\label{P-2.17}
Let $\rho,\sigma\in M_*^+$. If $s(\rho)\not\le s(\sigma)$ and $f'(+\infty)=+\infty$, then
$\widehat S_f(\rho\|\sigma)=+\infty$. If $s(\sigma)\not\le s(\rho)$ and $f(0+)=+\infty$, then
$\widehat S_f(\rho\|\sigma)=+\infty$.
\end{prop}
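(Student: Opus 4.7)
The plan is to use monotonicity (Theorem \ref{T-2.9}) to compress the problem down to a two-point abelian algebra, where the conclusion reduces to an elementary limit. For the first assertion, assume $s(\rho)\not\le s(\sigma)$ and $f'(+\infty)=+\infty$. Set $e_0:=1-s(\sigma)\in M$; the hypothesis forces $\rho(e_0)=\rho(1)-\rho(s(\sigma))>0$, because $\rho(s(\sigma))=\rho(1)$ would mean $s(\rho)\le s(\sigma)$. I would then consider the unital normal $*$-homomorphism $\Psi:\bC\oplus\bC\to M$ given by $\Psi(a,b):=ae_0+b(1-e_0)$, which is in particular a unital positive normal map, and apply Theorem \ref{T-2.9} to get
$$
\widehat S_f(\rho\|\sigma)\ge\widehat S_f(\rho\circ\Psi\|\sigma\circ\Psi).
$$
The pushforwards are represented on $\bC\oplus\bC$ by the densities $(\rho(e_0),\rho(1-e_0))$ and $(0,\sigma(1))$.

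By Example \ref{E-2.13}, $\widehat S_f(\rho\circ\Psi\|\sigma\circ\Psi)$ equals the classical $f$-divergence on the two-point space. Under the standard convention $0\cdot f(a/0)=af'(+\infty)$ for $a>0$, the first coordinate alone contributes $\rho(e_0)\,f'(+\infty)=+\infty$, proving the first assertion. If one prefers to avoid the convention, the same conclusion comes directly from Definition \ref{D-2.8}: taking $\eta:=\rho\circ\Psi+\sigma\circ\Psi$, the first-coordinate contribution to $\widehat S_f(\rho\circ\Psi+\eps\eta\|\sigma\circ\Psi+\eps\eta)$ equals
$$
\eps\rho(e_0)\,f\!\left(\frac{1+\eps}{\eps}\right)=(1+\eps)\rho(e_0)\cdot\frac{f((1+\eps)/\eps)}{(1+\eps)/\eps},
$$
which tends to $\rho(e_0)\cdot f'(+\infty)=+\infty$ as $\eps\searrow0$, while the second-coordinate contribution stays bounded below.

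For the second assertion, I would appeal to Proposition \ref{P-2.10}, which gives $\widehat S_f(\rho\|\sigma)=\widehat S_{\widetilde f}(\sigma\|\rho)$. Since $\widetilde f'(+\infty)=f(0^+)=+\infty$ and the hypothesis reads $s(\sigma)\not\le s(\rho)$, the first assertion applied to the triple $(\sigma,\rho,\widetilde f)$ yields $\widehat S_{\widetilde f}(\sigma\|\rho)=+\infty$. The main obstacle is purely bookkeeping, namely verifying that $\Psi$ is indeed a unital positive normal map so that Theorem \ref{T-2.9} applies, and confirming that the abelian two-point limit is consistent with the convention encoded in Definition \ref{D-2.8}; conceptually the whole argument is driven by monotonicity plus this elementary limit.
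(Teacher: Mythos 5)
Your proof is correct and follows essentially the same route as the paper: monotonicity (Theorem \ref{T-2.9}) under the unital positive normal map onto the two-point abelian algebra determined by $s(\sigma)$, the classical two-point computation via Example \ref{E-2.13} giving the $f'(+\infty)$ term, and Proposition \ref{P-2.10} with $\widetilde f$ for the second assertion. The only differences are cosmetic (you use $e_0=1-s(\sigma)$ rather than $e=s(\sigma)$, and you add a direct $\eps$-limit check of the boundary convention that the paper instead cites from \cite[Example 2.5]{Hi1}).
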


\begin{proof}
For the first assertion, let $e:=s(\sigma)$ and define a unital positive map $\Phi:\bC^2\to M$
given by $\Phi(a,b):=ae+b(1-e)$ for $(a,b)\in\bC^2$. From the monotonicity property in Theorem
\ref{T-2.9} and Example \ref{E-2.13} we have
$$
\widehat S_f(\rho\|\sigma)\ge\widehat S_f(\rho\circ\Phi\|\sigma\circ\Phi)
=S_f((\rho(e),\rho(1-e))\|(1,0)).
$$
Since $\rho(1-e)>0$, from \cite[Example 2.5]{Hi1} the above right-hans side is
$$
f(\rho(e))+f'(+\infty)\rho(1-e)=+\infty,
$$
where $f(\rho(e))$ means $f(0^+)$ if $\rho(e)=0$. The second assertion follows from the first
and Proposition \ref{P-2.10}.
\end{proof}

\section{Strongly absolutely continuous case}

Let $\rho,\sigma\in M_*^+$. We say that $\rho$ is {\it strongly absolutely continuous} with
respect to $\sigma$ if $\lim_n\rho(x_n^*x_n)=0$ for any sequence $\{x_n\}$ in $M$ such that
$$
\lim_n\sigma(x_n^*x_n)=\lim_{n,m}\rho((x_n-x_m)^*(x_n-x_m))=0.
$$
In this case we write $\rho\ll\sigma$ strongly. Obviously, this implies the simple absolute
continuity, i.e., $\sigma(x^*x)=0$ implies $\rho(x^*x)=0$, equivalently $s(\rho)\le s(\sigma)$.
In terms of $h_\rho$ and $h_\sigma$, we can rewrite the definition of $\rho\ll\sigma$ strongly
as follows: for $\{x_n\}$ in $M$,
$$
\lim_n\|x_nh_\sigma^{1/2}\|_2=\lim_{n,m}\|x_nh_\rho^{1/2}-x_mh_\rho^{1/2}\|_2=0
\ \,\implies\ \,\lim_n\|x_nh_\rho^{1/2}\|_2=0.
$$
This means that the operator
\begin{align}\label{F-3.1}
R=R_{\rho/\sigma}:\ xh_\sigma^{1/2}+\zeta\ \,\longmapsto\ \,xh_\rho^{1/2},
\qquad x\in M,\ \mbox{$\zeta\in L^2(M)$ with $e'\zeta=0$},
\end{align}
is closable, where $e'$ is the projection onto $\overline{Mh_\sigma^{1/2}}$. Note that
$e'=Js(\sigma)J\in M'$ is the $M'$-support of $\sigma$ while $s(\sigma)$ is the projection onto
$\overline{M'h_\sigma^{1/2}}$, the $M$-support of $\sigma$.

We here give the next lemma for completeness; see \cite{Nau,Gudd} for similar
characterizations in a bit more general settings.

\begin{lemma}\label{L-3.1}
The following conditions for $\rho,\sigma\in M_*^+$ are equivalent:
\begin{itemize}
\item[(i)] $\rho\ll\sigma$ strongly;
\item[(ii)] there exists a (unique) positive self-adjoint operator $T=T_{\rho/\sigma}$ on
$L^2(M)$ affiliated with $M'$ such that $Mh_\sigma^{1/2}$ is a core of $T^{1/2}$ and
\begin{align}\label{F-3.2}
\rho(x)=\<T^{1/2}h_\sigma^{1/2},T^{1/2}xh_\sigma^{1/2}\>,\qquad x\in M.
\end{align}
\end{itemize}
\end{lemma}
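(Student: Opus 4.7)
The plan is to extract $T^{1/2}$ from the operator $R=R_{\rho/\sigma}$ of \eqref{F-3.1}: strong absolute continuity is exactly the closability of $R$, and the closure, being affiliated with $M'$, yields $T$ via polar decomposition.

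For (i) $\Rightarrow$ (ii), I first observe that $\rho\ll\sigma$ strongly forces $s(\rho)\le s(\sigma)$ (applying the definition to constant sequences), so $xh_\sigma^{1/2}=0$ implies $xh_\rho^{1/2}=0$; hence $R$ is well defined on the dense subspace $Mh_\sigma^{1/2}+(1-e')L^2(M)$. Closability is immediate from (i): if $\xi_n=x_nh_\sigma^{1/2}+\zeta_n\to 0$ and $R\xi_n=x_nh_\rho^{1/2}$ converges, then $x_nh_\sigma^{1/2}\to 0$ and $\{x_nh_\rho^{1/2}\}$ is Cauchy, so (i) forces $x_nh_\rho^{1/2}\to 0$. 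Let $\bar R$ denote the closure. Since $e'\in M'$, each $u\in M$ preserves the domain of $R$ and intertwines its action, so $\bar R$ is affiliated with $M'$. Setting $T:=\bar R^*\bar R$ then gives a positive self-adjoint operator on $L^2(M)$ affiliated with $M'$, with polar decomposition $\bar R=VT^{1/2}$ and $\cD(T^{1/2})=\cD(\bar R)\supset Mh_\sigma^{1/2}$. For $x,y\in M$, using that $V^*V$ acts as the identity on the range of $T^{1/2}$,
$$
\rho(y^*x)=\<yh_\rho^{1/2},xh_\rho^{1/2}\>=\<\bar R(yh_\sigma^{1/2}),\bar R(xh_\sigma^{1/2})\>=\<T^{1/2}yh_\sigma^{1/2},T^{1/2}xh_\sigma^{1/2}\>,
$$
which at $y=1$ yields \eqref{F-3.2}. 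The core property holds by construction: $Mh_\sigma^{1/2}$ is graph-norm dense in $\cD(\bar R)\cap e'L^2(M)$, while $T^{1/2}$ vanishes on the orthogonal piece $(1-e')L^2(M)$.

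For (ii) $\Rightarrow$ (i), the affiliation of $T$ with $M'$ gives $T^{1/2}xh_\sigma^{1/2}=xT^{1/2}h_\sigma^{1/2}$ for $x\in M$, so polarizing the identity in (ii) yields $\rho(x^*x)=\|T^{1/2}xh_\sigma^{1/2}\|_2^2=\|xh_\rho^{1/2}\|_2^2$. If $\|x_nh_\sigma^{1/2}\|_2\to 0$ and $\{x_nh_\rho^{1/2}\}$ is Cauchy, then $\{T^{1/2}x_nh_\sigma^{1/2}\}$ is Cauchy while $x_nh_\sigma^{1/2}\to 0$; closedness of $T^{1/2}$ then forces $T^{1/2}x_nh_\sigma^{1/2}\to 0$, whence $\|x_nh_\rho^{1/2}\|_2\to 0$, proving (i). Uniqueness of $T$ is extracted from the core hypothesis: the displayed formula pins down $T^{1/2}$ on $Mh_\sigma^{1/2}$ up to the partial isometry part (fixed by affiliation with $M'$), and the core property then extends this determination to the whole of $\cD(T^{1/2})$.

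The main technical nuisance is the bookkeeping around the kernel $(1-e')L^2(M)$ of $R$ when verifying that ``$Mh_\sigma^{1/2}$ is a core of $T^{1/2}$'' for a $T^{1/2}$ that acts trivially there; everything else is the standard closure/polar-decomposition routine, and the affiliation with $M'$ is preserved at each step because $R$ intertwines with left multiplication by $M$ on its original domain.
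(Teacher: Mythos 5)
Your argument is correct and follows essentially the same route as the paper: closability of $R$ is exactly condition (i), the polar decomposition $\overline R=VT^{1/2}$ of the closure with affiliation to $M'$ deduced from $uRu^*=R$ for unitaries $u\in M$ yields (ii), and the converse uses closedness of $T^{1/2}$ together with $\rho(x^*x)=\|T^{1/2}xh_\sigma^{1/2}\|^2$. Your bookkeeping of the piece $(1-e')L^2(M)$, on which $T^{1/2}$ vanishes, matches the paper's (equally informal) treatment of the core and uniqueness claims, the paper handling uniqueness via determination of $T$ by the quadratic form $x\mapsto\rho(x^*x)$ on $Mh_\sigma^{1/2}$.
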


\begin{proof}
(i)$\implies$(ii).\enspace
Condition (i) means that $R$ given in \eqref{F-3.1} is closable, so let $\overline R$ be its
closure and $\overline R=VT^{1/2}$ be the polar decomposition, where $T:=R^*\overline R$. For
every $x,u\in M$ with $u$ unitary, since
$$
uRu^*xh_\sigma^{1/2}=uu^*xh_\rho^{1/2}=xh_\rho^{1/2}=Rxh_\sigma^{1/2},
$$
we have $uRu^*=R$, so $u\overline Ru^*=\overline R$ and hence $uTu^*=T$. Therefore, $T$ is
affiliated with $M'$ and $V\in M'$. By definition, $Mh_\sigma^{1/2}$ is a core of
$\overline R$. Since any core of $\overline R$ is a core of $T^{1/2}$, $Mh_\sigma^{1/2}$ is a
core of $T^{1/2}$. Moreover, for every $x\in M$,
\begin{align*}
\rho(x)&=\<h_\rho^{1/2},xh_\rho^{1/2}\>
=\<VT^{1/2}h_\sigma^{1/2},xVT^{1/2}h_\sigma^{1/2}\>
=\<T^{1/2}h_\sigma^{1/2},xV^*VT^{1/2}h_\sigma^{1/2}\> \\
&=\<T^{1/2}h_\sigma^{1/2},xT^{1/2}h_\sigma^{1/2}\>
=\<T^{1/2}h_\sigma^{1/2},T^{1/2}xh_\sigma^{1/2}\>,
\end{align*}
where the last equality follows from $xT^{1/2}h_\sigma^{1/2}=T^{1/2}xh_\sigma^{1/2}$ since
$T^{1/2}$ is affiliated with $M'$. The uniqueness of $T$ follows since a positive self-adjoint
operator $T$ is determined by the quadratic form $\|T^{1/2}xh_\sigma^{1/2}\|^2=\rho(x^*x)$,
$x\in M$ (see, e.g., \cite[A.7]{St}).

(ii)$\implies$(i).\enspace
Since \eqref{F-3.2} means that $\|xh_\rho^{1/2}\|=\|T^{1/2}xh_\sigma^{1/2}\|$ for all
$x\in M$, the implication immediately follows since $T^{1/2}$ is a closed operator.
\end{proof}

In the rest of this section, we assume that
$$
\mbox{$f$ is an operator convex function on $(0,+\infty)$ such that $f(0^+)<+\infty$,}
$$
so $f$ extends by continuity to an operator convex function on $[0,+\infty)$. We give the next
definition, following the spirit of Belavkin and Staszewski's relative entropy in \cite{BS}
(also Example \ref{E-3.5}).

\begin{definition}\label{D-3.2}\rm
Let $\rho,\sigma\in M_*^+$ be such that $\rho\ll\sigma$ strongly, and $T_{\rho/\sigma}$ be as
given in Lemma \ref{L-3.1}. We then define
\begin{align}\label{F-3.3}
\widehat S'_f(\rho\|\sigma):=\<h_\sigma^{1/2},f(T_{\rho/\sigma})h_\sigma^{1/2}\>
=\int_0^\infty f(t)\,d\|E_{\rho/\sigma}(t)h_\sigma^{1/2}\|^2,
\end{align}
where $T_{\rho/\sigma}=\int_0^\infty t\,dE_{\rho/\sigma}(t)$ is the spectral decomposition of
$T_{\rho/\sigma}$. The inner product expression in \eqref{F-3.3} should be understood, to be
precise, in the sense of a lower-bounded form (see \cite{RS}), which equals the integral
expression in \eqref{F-3.3}. Since $f(t)\ge at+b$ for some $a,b\in\bR$ and
$$
\int_0^\infty t\,d\|E_{\rho/\sigma}(t)h_\sigma^{1/2}\|^2
=\|T_{\rho/\sigma}^{1/2}h_\sigma^{1/2}\|^2<+\infty,
$$
note that $\widehat S'_f(\rho\|\sigma)$ is well defined with value in $(-\infty,+\infty]$.
\end{definition}

\begin{lemma}\label{L-3.3}
For every $(\rho,\sigma)\in(M_*^+\times M_*^+)_\le$,
$$
\widehat S_f(\rho\|\sigma)=\widehat S'_f(\rho\|\sigma).
$$
\end{lemma}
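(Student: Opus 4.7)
Fix $(\rho,\sigma)\in(M_*^+\times M_*^+)_\le$ with $\rho\le\alpha\sigma$ for some $\alpha>0$, and let $A\in s(\sigma)Ms(\sigma)$ be the operator from Lemma~\ref{L-2.1}, so that $h_\rho^{1/2}=Ah_\sigma^{1/2}$ and $A^*A\le\alpha s(\sigma)$. Since $h_\rho^{1/2}$ is self-adjoint we also get the identity
\begin{align*}
h_\rho=h_\rho^{1/2}(h_\rho^{1/2})^*=Ah_\sigma^{1/2}h_\sigma^{1/2}A^*=Ah_\sigma A^*,
\end{align*}
which will play a central role. By Proposition~\ref{P-2.16} we already know $\widehat S_f(\rho\|\sigma)=\sigma(f(A^*A))=\tr(h_\sigma^{1/2}f(A^*A)h_\sigma^{1/2})$, so the goal is to match this with $\widehat S'_f(\rho\|\sigma)=\langle h_\sigma^{1/2},f(T_{\rho/\sigma})h_\sigma^{1/2}\rangle$.

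\emph{Step 1 — Strong absolute continuity and boundedness of $R$.} For each $x\in M$ compute
\begin{align*}
\|xh_\rho^{1/2}\|_2^2=\tr(h_\rho^{1/2}x^*xh_\rho^{1/2})=\rho(x^*x)\le\alpha\,\sigma(x^*x)=\alpha\|xh_\sigma^{1/2}\|_2^2.
\end{align*}
Hence the operator $R=R_{\rho/\sigma}$ of \eqref{F-3.1} is bounded with $\|R\|\le\alpha^{1/2}$. In particular $\rho\ll\sigma$ strongly and, by Lemma~\ref{L-3.1}, $T_{\rho/\sigma}=R^*R$ is a bounded positive element of $M'$.

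\emph{Step 2 — Identification $T_{\rho/\sigma}=J(A^*A)J$.} Both operators are supported on $e'L^2(M)$: the operator $T_{\rho/\sigma}$ by construction of $R$, and $J(A^*A)J=Js(\sigma)(A^*A)s(\sigma)J=e'J(A^*A)Je'$ because $A\in s(\sigma)Ms(\sigma)$ and $e'=Js(\sigma)J$. Using $J\xi=\xi^*$ on $L^2(M)$, for $y\in M$ we have
\begin{align*}
J(A^*A)J\bigl(yh_\sigma^{1/2}\bigr)=y\cdot J(A^*A)J h_\sigma^{1/2}=y\bigl(J A^*A h_\sigma^{1/2}\bigr)^{\!*\,*}_{\text{read}}=yh_\sigma^{1/2}A^*A,
\end{align*}
since $J(A^*A)J\in M'$ commutes with $y$ and $A^*A$ is self-adjoint. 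Consequently, for all $x,y\in M$,
\begin{align*}
\bigl\langle xh_\sigma^{1/2},J(A^*A)J\, yh_\sigma^{1/2}\bigr\rangle
=\tr\bigl(h_\sigma^{1/2}x^*yh_\sigma^{1/2}A^*A\bigr)
=\tr(x^*y\,h_\sigma^{1/2}A^*Ah_\sigma^{1/2})=\tr(x^*yh_\rho)=\rho(x^*y),
\end{align*}
which equals $\langle Rxh_\sigma^{1/2},Ryh_\sigma^{1/2}\rangle=\langle xh_\sigma^{1/2},T_{\rho/\sigma}yh_\sigma^{1/2}\rangle$ by the same computation read through $h_\rho=Ah_\sigma A^*$. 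As $Mh_\sigma^{1/2}$ is dense in $e'L^2(M)$ and both operators vanish on its orthogonal complement, $T_{\rho/\sigma}=J(A^*A)J$.

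\emph{Step 3 — Functional calculus and conclusion.} Since the map $y\mapsto JyJ$ is a linear $*$-anti-isomorphism of $M$ onto $M'$, and $f$ extends continuously to $[0,+\infty)\supseteq\mathrm{spec}(A^*A)\subseteq[0,\alpha]$ because $f(0^+)<+\infty$, functional calculus gives $f(T_{\rho/\sigma})=f(J(A^*A)J)=Jf(A^*A)J$. Applying this operator to $h_\sigma^{1/2}$ and using $J\xi=\xi^*$ together with the self-adjointness of $f(A^*A)$,
\begin{align*}
f(T_{\rho/\sigma})h_\sigma^{1/2}=Jf(A^*A)Jh_\sigma^{1/2}=\bigl(f(A^*A)h_\sigma^{1/2}\bigr)^{\!*}=h_\sigma^{1/2}f(A^*A).
\end{align*}
Therefore
\begin{align*}
\widehat S'_f(\rho\|\sigma)=\langle h_\sigma^{1/2},f(T_{\rho/\sigma})h_\sigma^{1/2}\rangle=\tr\bigl(h_\sigma^{1/2}h_\sigma^{1/2}f(A^*A)\bigr)=\sigma(f(A^*A))=\widehat S_f(\rho\|\sigma),
\end{align*}
where the last equality is Proposition~\ref{P-2.16}. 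The main obstacle is the clean identification $T_{\rho/\sigma}=J(A^*A)J$; once this is in place (via the density of $Mh_\sigma^{1/2}$ and the key matrix-element computation powered by $h_\rho=Ah_\sigma A^*$), everything else reduces to one line of functional calculus.
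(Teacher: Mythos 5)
Your proof is correct and follows essentially the same route as the paper's: both arguments identify $T_{\rho/\sigma}$ with $J(A^*A)J$ (the paper does this at the level of square roots, showing $J|A|J=T_{\rho/\sigma}^{1/2}$ via the polar decomposition $A=v|A|$ and the uniqueness assertion in Lemma~\ref{L-3.1}, while you compare matrix elements of the two bounded operators on the dense subspace $Mh_\sigma^{1/2}$ of $e'L^2(M)$), and then conclude via $f(J(A^*A)J)=Jf(A^*A)J$ together with Proposition~\ref{P-2.16}. The only blemish is cosmetic: $y\mapsto JyJ$ is conjugate-linear and multiplicative rather than a linear anti-isomorphism, which is harmless here since $f$ is real-valued.
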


\begin{proof}
Let $A\in eMe$ be as given in Lemma \ref{L-2.1}, where $e:=s(\sigma)$. Take the polar
decomposition $A=v|A|$, and let $v':=JvJ\in M'$ and $B':=J|A|J\in e'M'e'$ with $e':=JeJ$, the
projection onto $\overline{Mh_\sigma^{1/2}}$ (the $M'$-support of $\sigma$). Note that
$vv^*=s(\rho)$ and so $v'v^{\prime*}=Js(\rho)J$. For every $x\in M$,
$$
B'xh_\sigma^{1/2}=J|A|Jxh_\sigma^{1/2}=Jv^*(Ah_\sigma^{1/2}x^*)
=Jv^*(h_\rho^{1/2}x^*)=v^{\prime*}J(h_\rho^{1/2}x^*)=v^{\prime*}xh_\rho^{1/2}
$$
so that $v'B'=\overline R_{\rho/\sigma}$. In particular,
$B'h_\sigma^{1/2}=v^{\prime*}h_\rho^{1/2}$ and
$$
v'v^{\prime*}h_\rho^{1/2}=Jvv^*h_\rho^{1/2}=Jh_\rho^{1/2}=h_\rho^{1/2}.
$$
Hence we have for every $x\in M$,
\begin{align*}
\<B'h_\sigma^{1/2},B'xh_\sigma^{1/2}\>&=\<B'h_\sigma^{1/2},xB'h_\sigma^{1/2}\>
=\<v^{\prime*}h_\rho^{1/2},xv^{\prime*}h_\rho^{1/2}\> \\
&=\<h_\rho^{1/2},xv'v^{\prime*}h_\rho^{1/2}\>=\<h_\rho^{1/2},xh_\rho^{1/2}\>=\rho(x).
\end{align*}
From the uniqueness assertion in Lemma \ref{L-3.1} this implies that
$B'=T_{\rho/\sigma}^{1/2}$. Therefore,
\begin{align*}
\widehat S'_f(\rho\|\sigma)&=\<h_\sigma^{1/2},f(B^{\prime2})h_\sigma^{1/2}\>
=\<h_\sigma^{1/2},Jf(|A|^2)Jh_\sigma^{1/2}\> \nonumber\\
&=\<f(|A|^2)h_\sigma^{1/2},h_\sigma^{1/2}\>=\sigma(f(A^*A))
=\widehat S_f(\rho\|\sigma),
\end{align*}
where the last equality is due to Proposition \ref{P-2.16}.
\end{proof}

\begin{thm}\label{T-3.4}
Let $\rho,\sigma\in M_*^+$ and assume that $\rho\ll\sigma$ strongly. Then
$$
\widehat S_f(\rho\|\sigma)=\widehat S'_f(\rho\|\sigma).
$$
\end{thm}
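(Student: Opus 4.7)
The plan is to extend Lemma \ref{L-3.3} from the bounded case to the strongly absolutely continuous case via the approximation built into Definition \ref{D-2.8}. Setting $\eta:=\rho+\sigma$ and $\rho_\eps:=\rho+\eps\eta$, $\sigma_\eps:=\sigma+\eps\eta$, a straightforward calculation gives $\rho_\eps\le(\eps^{-1}+2)\sigma_\eps$, so $(\rho_\eps,\sigma_\eps)\in(M_*^+\times M_*^+)_\le$ and Lemma \ref{L-3.3} yields $\widehat S_f(\rho_\eps\|\sigma_\eps)=\widehat S'_f(\rho_\eps\|\sigma_\eps)$. Since $\widehat S_f(\rho_\eps\|\sigma_\eps)\to\widehat S_f(\rho\|\sigma)$ by Definition \ref{D-2.8}, the theorem reduces to verifying that $\lim_{\eps\searrow 0}\widehat S'_f(\rho_\eps\|\sigma_\eps)=\widehat S'_f(\rho\|\sigma)$.

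To handle this limit I would introduce a common bounded master operator $T:=T_{\rho/\tau}$, where $\tau:=\rho+\sigma$; here $T$ and $S:=T_{\sigma/\tau}=e'_\tau-T$ are bounded positive operators affiliated with $M'$ on $e'_\tau L^2(M)$ with $0\le T\le e'_\tau$, and the hypothesis $\rho\ll\sigma$ strongly translates into $E_T(\{1\})h_\tau^{1/2}=0$, so that $T_{\rho/\sigma}=T(e'_\tau-T)^{-1}$ is a well-defined positive self-adjoint operator affiliated with $M'$. Via the uniqueness in Lemma \ref{L-3.1}, one identifies $T_{\rho_\eps/\sigma_\eps}=g_\eps(T)$ on $e'_\tau L^2(M)$, where $g_\eps(\lambda):=(\lambda+\eps)/(1-\lambda+\eps)$ for $\lambda\in[0,1]$. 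Combining this with the identity $\sigma_\eps(y)=\<h_\tau^{1/2},y(S+\eps e'_\tau)h_\tau^{1/2}\>$ for $y\in M$, the Tomita relation $\<h_{\sigma_\eps}^{1/2},Ah_{\sigma_\eps}^{1/2}\>=\sigma_\eps(JAJ)$ for self-adjoint $A\in M'$, and the coincidence $d\|E_{JTJ}(\lambda)h_\tau^{1/2}\|^2=d\|E_T(\lambda)h_\tau^{1/2}\|^2$ (from $J$-antiunitarity and $Jh_\tau^{1/2}=h_\tau^{1/2}$), one arrives at the scalar integral representation
\begin{align*}
\widehat S'_f(\rho_\eps\|\sigma_\eps)=\int_{[0,1]}f(g_\eps(\lambda))(1-\lambda+\eps)\,d\|E_T(\lambda)h_\tau^{1/2}\|^2.
\end{align*}

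As $\eps\searrow 0$, $g_\eps(\lambda)\nearrow\lambda/(1-\lambda)$ and $1-\lambda+\eps\to 1-\lambda$ pointwise on $[0,1)$, and $\{1\}$ carries no $\|E_T(\cdot)h_\tau^{1/2}\|^2$-mass by strong absolute continuity. Writing $f(t)=(f(t)-at-b)+(at+b)$ with $a,b\in\bR$ chosen so that $f(t)-at-b\ge 0$ on $(0,+\infty)$, monotone convergence on the non-negative part and direct computation for the affine part (using Example \ref{E-2.14}) give
\begin{align*}
\lim_{\eps\searrow 0}\widehat S'_f(\rho_\eps\|\sigma_\eps)=\int_{[0,1)}f\bigl(\lambda/(1-\lambda)\bigr)(1-\lambda)\,d\|E_T(\lambda)h_\tau^{1/2}\|^2,
\end{align*}
which, via the change of variable $t=\lambda/(1-\lambda)$ and the identification of the resulting pushforward measure with $d\|E_{\rho/\sigma}(t)h_\sigma^{1/2}\|^2$ (the factor $(1-\lambda)=S$ converts $h_\tau^{1/2}$ into $h_\sigma^{1/2}$, since $\|xh_\sigma^{1/2}\|^2=\sigma(x^*x)=\<h_\tau^{1/2},x^*xSh_\tau^{1/2}\>$), equals $\widehat S'_f(\rho\|\sigma)$.

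The main obstacle will be the modular-theoretic bookkeeping in the middle paragraph: identifying $T_{\rho_\eps/\sigma_\eps}$ with $g_\eps(T_{\rho/\tau})$ on the cyclic subspace via Lemma \ref{L-3.1}, and reducing the vector inner product defining $\widehat S'_f(\rho_\eps\|\sigma_\eps)$ to a clean scalar integral against $d\|E_T(\lambda)h_\tau^{1/2}\|^2$, since $h_{\sigma_\eps}^{1/2}$ and $(S+\eps e'_\tau)^{1/2}h_\tau^{1/2}$ differ by a partial isometry in $M'$ that one must show, via the $J$-duality between $M$ and $M'$, does not affect the expectation of the $M'$-valued spectral function $f(T_{\rho_\eps/\sigma_\eps})$.
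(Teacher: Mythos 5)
Your reduction (Definition \ref{D-2.8} plus Lemma \ref{L-3.3}, so that everything comes down to $\widehat S'_f(\rho+\eps\eta\|\sigma+\eps\eta)\to\widehat S'_f(\rho\|\sigma)$ with $\eta=\rho+\sigma$) is exactly the paper's, and your scalar formula $\widehat S'_f(\rho_\eps\|\sigma_\eps)=\int_{[0,1]}(1-\lambda+\eps)f\bigl({\lambda+\eps\over1-\lambda+\eps}\bigr)\,d\|E_T(\lambda)h_\tau^{1/2}\|^2$ is correct; it is the computation the paper performs in the proof of Proposition \ref{P-4.1} with the bounded operator $T_{\rho/\rho+\sigma}$, whereas the paper's own proof of Theorem \ref{T-3.4} works directly with the (possibly unbounded) $T_{\rho/\sigma}$ and the measure $\|E_{\rho/\sigma}(\cdot)h_\sigma^{1/2}\|^2$ on $[0,+\infty)$. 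The genuine gap is in how you pass to the limit. After subtracting an affine minorant so that $h:=f-at-b\ge0$, the integrand $(1-\lambda+\eps)\,h\bigl({\lambda+\eps\over1-\lambda+\eps}\bigr)$ is \emph{not} monotone in $\eps$, so monotone convergence does not apply as claimed: already for $f(t)=t^2$ (so $h=f$) the integrand is $(\lambda+\eps)^2/(1-\lambda+\eps)$, which decreases as $\eps\searrow0$ for $\lambda<2/3$ and increases for $\lambda>2/3$; likewise $g_\eps(\lambda)\nearrow\lambda/(1-\lambda)$ fails for $\lambda<1/2$, where $g_\eps(\lambda)$ decreases. This interchange of limit and integral is precisely where the paper invests its effort: it uses the integral representation \eqref{F-3.9} of $f$ (available because $f(0^+)<+\infty$), splits the domain into a compact part (bounded convergence) and a neighbourhood of the singular endpoint, and checks monotonicity in $\eps$ term by term there (see the proof of \eqref{F-3.8} and Appendix A for \eqref{F-4.6}, \eqref{F-4.7}). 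Your step can be repaired either in that way, or more quickly by observing that the integrand is the perspective of the convex function $h$ evaluated at $(\lambda+\eps,1-\lambda+\eps)$, so subadditivity of the perspective gives the pointwise bound $(1-\lambda+\eps)h\bigl({\lambda+\eps\over1-\lambda+\eps}\bigr)\le(1-\lambda)h\bigl({\lambda\over1-\lambda}\bigr)+\eps h(1)$, which combined with Fatou's lemma (nonnegativity) yields the limit; but some such argument must be supplied, since the cited one fails.

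Two further identifications are asserted rather than proved, and since the paper's proof does not need them you would owe them: (i) that $\rho\ll\sigma$ strongly forces $E_T(\{1\})h_\tau^{1/2}=0$ (true, but it needs the short argument that $E_T(\{1\})\in\pi_\tau(M)'$, that $h_\tau^{1/2}$ is separating for the commutant, and that a nonzero vector in the range of $E_T(\{1\})$ approximated by $\pi_\tau(x_n)h_\tau^{1/2}$ produces a sequence violating strong absolute continuity); and (ii) the identification of $T_{\rho/\sigma}$ with a partial-isometry conjugate of $T(e'-T)^{-1}$ together with the pushforward identity turning $(1-\lambda)\,d\|E_T(\lambda)h_\tau^{1/2}\|^2$ under $t=\lambda/(1-\lambda)$ into $d\|E_{\rho/\sigma}(t)h_\sigma^{1/2}\|^2$; this involves genuinely unbounded operators (even with no point mass at $1$, $e'-T$ need not be boundedly invertible) and in substance amounts to proving the Section 4 formula \eqref{F-4.9} in the strongly absolutely continuous case. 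The paper's proof of Theorem \ref{T-3.4} sidesteps both points by taking $T_{\rho/\sigma}$, whose existence is exactly the hypothesis, as the master operator. So your route is workable and anticipates Theorem \ref{T-4.2}, but as written the key convergence step is mis-justified and the two identifications above still require proofs.
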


\begin{proof}
In view of Definition \ref{D-2.8} and Lemma \ref{L-3.3}, it suffices to show that
\begin{align}\label{F-3.4}
\widehat S'_f(\rho\|\sigma)
=\lim_{\eps\searrow0}\widehat S'_f(\rho+\eps(\rho+\sigma)\|\sigma+\eps(\rho+\sigma)).
\end{align}
Let $e'$ be the $M'$-support of $\sigma$, and $T=T_{\rho/\sigma}$ be as given in Lemma
\ref{L-3.1}. For every $\eps>0$, since $(1+\eps)\sigma+\eps\rho\ll\sigma$ strongly and
$\sigma+\eps\rho\ll\sigma$ strongly, we have $\overline R_{(1+\eps)\sigma+\eps\rho/\sigma}$
and $\overline R_{\sigma+\eps\rho/\sigma}$ with the polar decompositions
$$
\overline R_{(1+\eps)\sigma+\eps\rho/\sigma}
=V_1T_{(1+\eps)\sigma+\eps\rho/\sigma}^{1/2},\qquad
\overline R_{\sigma+\eps\rho/\sigma}=V_2T_{\sigma+\eps\rho/\sigma}^{1/2},
$$
where $V_1,V_2\in M'$ are partial isometries with $V_1^*V_1=V_2^*V_2=e'$. It is easy to verify
that for every $x\in M$,
\begin{align}
(\eps\sigma+(1+\eps)\rho)(x)
&=\bigl\<(\eps e'+(1+\eps)T)^{1/2}h_\sigma^{1/2},
(\eps e'+(1+\eps)T)^{1/2}xh_\sigma^{1/2}\bigr\>, \label{F-3.5}\\
((1+\eps)\sigma+\eps \rho)(x)
&=\bigl\<((1+\eps)e'+\eps T)^{1/2}h_\sigma^{1/2},
((1+\eps)e'+\eps T)^{1/2}xh_\sigma^{1/2}\bigr\>.
\label{F-3.6}
\end{align}
It follows from \eqref{F-3.6} that $T_{(1+\eps)\sigma+\eps\rho/\sigma}=(1+\eps)e'+\eps T$
so that
$V_1((1+\eps)e'+\eps T)^{1/2}h_\sigma^{1/2}=h_{(1+\eps)\sigma+\eps\rho}^{1/2}$.
Therefore,
\begin{align}\label{F-3.7}
h_\sigma^{1/2}=((1+\eps)e'+\eps T)^{-1/2}V_1^*h_{(1+\eps)\sigma+\eps \rho}^{1/2},
\end{align}
where we note that $(1+\eps)e'+\eps T$ and $e'+\eps T$ have the bounded inverses in $e'Me'$.

Inserting \eqref{F-3.7} into \eqref{F-3.5} gives
\begin{align*}
(\eps\sigma+(1+\eps)\rho)(x)
&=\Bigl\<(\eps e'+(1+\eps)T)^{1/2}((1+\eps)e'+\eps T)^{-1/2}
V_1^*h_{(1+\eps)\sigma+\eps\rho}^{1/2}, \\
&\hskip1cm
(\eps e'+(1+\eps)T)^{1/2}x((1+\eps)e'+\eps T)^{-1/2}
V_1^*h_{(1+\eps)\sigma+\eps\rho}^{1/2}\Bigr\> \\
&=\Bigl\<V_1((\eps e'+(1+\eps)T)((1+\eps)e'+\eps T)^{-1})^{-1/2}
V_1^*h_{(1+\eps)\sigma+\eps\rho}^{1/2}, \\
&\hskip1cm
V_1((\eps e'+(1+\eps)T)((1+\eps)e'+\eps T)^{-1})^{-1/2}
V_1^*xh_{(1+\eps)\sigma+\eps\rho}^{1/2}\Bigr\>,
\end{align*}
which implies that
$$
T_{\eps \sigma+(1+\eps)\rho/(1+\eps)\sigma+\eps \rho}
=V_1(\eps e'+(1+\eps)T)((1+\eps)e'+\eps T)^{-1}V_1^*.
$$
Therefore,
\begin{align*}
&\widehat S'_f(\eps\sigma+(1+\eps)\rho\|(1+\eps)\sigma+\eps\rho) \\
&\quad=\Bigl\<h_{(1+\eps)\sigma+\eps\rho}^{1/2},
f(V_1(\eps e'+(1+\eps)T)((1+\eps)e'+\eps T)^{-1}V_1^*)
h_{(1+\eps)\sigma+\eps\rho}^{1/2}\Bigr\>, \\
&\quad=\Bigl\<V_1((1+\eps)e'+\eps T)^{1/2}h_\sigma^{1/2}, \\
&\qquad\qquad V_1f((\eps e'+(1+\eps)T)((1+\eps)e'+\eps T)^{-1})V_1^*
V_1((1+\eps)e'+\eps T)^{1/2}h_\sigma^{1/2}\Bigr\> \\
&\quad=\Bigl\<((1+\eps)e'+\eps T)^{1/2}h_\sigma^{1/2}, \\
&\qquad\qquad f((\eps e'+(1+\eps)T)((1+\eps)e'+\eps T)^{-1})
((1+\eps)e'+\eps T)^{1/2}h_\sigma^{1/2}\Bigr\> \\
&\quad=\int_0^\infty((1+\eps)+\eps t)
f\biggl({\eps+(1+\eps)t\over(1+\eps)+\eps t}\biggr)
\,d\|E_{\rho/\sigma}(t)h_\sigma^{1/2}\|^2,
\end{align*}
where $(\eps e'+(1+\eps)T)((1+\eps)e'+\eps T)^{-1}$ is a bounded operator and
$E_{\rho/\sigma}(\cdot)$ is the spectral measure of $T$. Hence, to show \eqref{F-3.4}, it
suffices to prove that
\begin{align}\label{F-3.8}
\lim_{\eps\searrow0}\int_0^\infty(1+\eps(1+t))
f\biggl({\eps+(1+\eps)t\over(1+\eps)+\eps t}\biggr)\,d\nu(t)
=\int_0^\infty f(t)\,d\nu(t),
\end{align}
where $d\nu(t):=d\|E_{\rho/\sigma}(t)h_\sigma^{1/2}\|^2$, a finite positive measure on
$[0,\infty)$.

To prove \eqref{F-3.8}, recall \cite[Theorem 8.1]{HMPB} that $f$ admits the integral expression
\begin{align}\label{F-3.9}
f(t)=a+bt+ct^2+\int_{(0,+\infty)}\psi_s(t)\,d\mu(s),
\qquad t\in[0,+\infty),
\end{align}
where $a,b\in\bR$, $c\ge0$,
$$
\psi_s(t):={t\over1+s}-{t\over t+s},\qquad s\in(0,+\infty),\ t\in[0,+\infty),
$$
and $\mu$ is a positive measure on $(0,+\infty)$ satisfying
$\int_{(0,+\infty)}(1+s)^{-2}\,d\mu(s)<+\infty$. Now, let $0<\eps<1/2$ and divide the left-hand
integral in \eqref{F-3.8} into two parts on $[0,5]$ and $(5,+\infty)$. Since
$(1+\eps(1+t))f\Bigl({\eps+(1+\eps)t\over(1+\eps)+\eps t}\Bigr)$ is uniformly bounded for
$\eps\in[0,1/2]$ and $t\in[0,5]$ and
$(1+\eps(1+t))f\Bigl({\eps+(1+\eps)t\over(1+\eps)+\eps t}\Bigr)\to f(t)$ as $\eps\searrow0$ for
every $t\in[0,5]$, the bounded convergence theorem gives
$$
\lim_{\eps\searrow0}\int_{[0,5]}(1+\eps(1+t))
f\biggl({t+\eps(1+t)\over1+\eps(1+t)}\biggr)\,d\nu(t)
=\int_{[0,5]}f(t)\,d\nu(t).
$$
To deal with the integral on $(5,+\infty)$, note that for every $\eps\in(0,1/2)$ and
$t\in(5,+\infty)$,
$$
{d\over d\eps}\,{(t+\eps(1+t))^2\over1+\eps(1+t)}
={(1+t)(t+\eps(1+t))(2-t+\eps(1+t))\over
(1+\eps(1+t))^2}<0,
$$
so that for any $t>5$,
$$
{(t+\eps(1+t))^2\over1+\eps(1+t)}\ \nearrow t^2\ \quad\mbox{as $\eps\searrow0$}.
$$
Moreover, we compute
\begin{align*}
(1+\eps(1+t))
\psi_s\biggl({t+\eps(1+t)\over1+\eps(1+t)}\biggr)
&={t+\eps(1+t)\over1+s}-{t+\eps(1+t)\over
{t+\eps(1+t)\over1+\eps(1+t)}+s} \\
&={t-1\over1+s}\cdot{t+\eps(1+t)\over t+s+\eps(1+t)(1+s)},
\end{align*}
and for any $s\in(0,+\infty)$ and $t>5$,
\begin{align*}
{d\over d\eps}\,{t+\eps(1+t)\over t+s+\eps(1+t)(1+s)}
={(1+t)(1-t)s\over(t+s+\eps(1+t)(1+s))^2}<0,
\end{align*}
so that
$$
0\le(1+\eps(1+t))
\psi_s\biggl({t+\eps(1+t)\over1+\eps(1+t)}\biggr)
\ \nearrow\ \psi_s(t)\quad\mbox{as $\eps\searrow0$}.
$$
The monotone convergence theorem yields that
\begin{align*}
&\int_{(5,+\infty)}(1+\eps(1+t))
f\biggl({t+\eps(1+t)\over1+\eps(1+t)}\biggr)\,d\nu(t) \\
&\quad=(a+(a+b)\eps)\int_{(5,+\infty)}d\nu(t)
+(b+(a+b)\eps)\int_{(5,+\infty)}t\,d\nu(t) \\
&\qquad\quad+c\int_{(5,+\infty)}{(t+\eps(1+t))^2\over
1+\eps(1+t)}\,d\nu(t) \\
&\qquad\quad+\int_{(5,+\infty)}\int_{(0,+\infty)}
(1+\eps(1+t))\psi_s\biggl({t+\eps(1+t)\over
1+\eps(1+t)}\biggr)\,d\mu(s)\,d\nu(t)
\end{align*}
converges as $\eps\searrow0$ to
\begin{align*}
&a\int_{(5,+\infty)}d\nu(t)+b\int_{(5,+\infty)}t\,d\nu(t)
+c\int_{(5,+\infty)}t^2\,d\nu(t)
+\int_{(5,+\infty)}\int_{(0,+\infty)}\psi_s(t)\,d\mu(s)\,d\nu(t) \\
&\qquad=\int_{(5,+\infty)}\biggl(a+bt+ct^2
+\int_{(0,+\infty)}\psi_s(t)\,d\mu(s)\biggr)\,d\nu(t)
=\int_{(5,+\infty)}f(t)\,d\nu(t).
\end{align*}
Hence \eqref{F-3.8} follows.
\end{proof}

\begin{example}\label{E-3.5}\rm
In \cite{BS} Belavkin and Staszewski introduced a kind of relative entropy for states on a
$C^*$-algebra. Here we restrict to the von Neumann algebra setting. Let $\rho,\sigma\in M_*^+$
and assume that $\rho\ll\sigma$ strongly. Let $R_{\rho/\sigma}$ and $T=T_{\rho/\sigma}$ be
as in \eqref{F-3.1} and Lemma \ref{L-3.1} so that $\overline R_{\rho/\sigma}=VT^{1/2}$. Note
that
$$
T^{1/2}xh_\sigma^{1/2}=V^*xh_\rho^{1/2}=xV^*h_\rho^{1/2},\qquad x\in M.
$$
Hence, the vector $\xi$ and the positive self-adjoint operator $\overline{\rho(\xi)}$ in
\cite[(3.2)]{BS} are
$$
\xi=V^*h_\rho^{1/2}=T^{1/2}h_\sigma^{1/2},\qquad\overline{\rho(\xi)}=T^{1/2}.
$$
Therefore, {\it Belavkin and Staszewski's relative entropy} $D_\BS(\rho\|\sigma)$ is given as
\begin{align*}
D_\BS(\rho\|\sigma)&=\<T^{1/2}h_\sigma^{1/2},(\log T)T^{1/2}h_\sigma^{1/2}\> \\
&=\lim_{\delta\searrow0}\int_\delta^{\delta^{-1}}\log t
\,d\|E(t)T^{1/2}h_\sigma^{1/2}\|^2 \\
&=\lim_{\delta\searrow0}\int_\delta^{\delta^{-1}}\eta(t)
\,d\|E(t)h_\sigma^{1/2}\|^2 \\
&=\<h_\sigma^{1/2},\eta(T)h_\sigma^{1/2}\>
=\widehat S'_\eta(\rho\|\sigma)=\widehat S_\eta(\rho\|\sigma).
\end{align*}
where $T=\int_0^\infty t\,dE(t)$ is the spectral decomposition and $\eta(t):=t\log t$. In this
way, when $\rho\ll\sigma$ strongly, $D_\BS(\rho\|\sigma)$ is realized as the maximal
$f$-divergence $\widehat S_f(\rho\|\sigma)$ with $f=\eta$. Thus we may and do define
$D_\BS(\rho\|\sigma):=\widehat S_\eta(\rho\|\sigma)$ for arbitrary $\rho,\sigma\in M_*^+$.
\end{example}

\section{General integral formula}

We modify the arguments in the previous section to show the following:

\begin{prop}\label{P-4.1}
Let $\rho,\sigma\in M_*^+$.
\begin{itemize}
\item[(1)] If $f(0^+)<+\infty$, then
$$
\widehat S_f(\rho\|\sigma)=\lim_{\eps\searrow0}\widehat S_f(\rho\|\sigma+\eps\rho).
$$
\item[(2)] If $f'(+\infty)<+\infty$, then
$$
\widehat S_f(\rho\|\sigma)=\lim_{\eps\searrow0}\widehat S_f(\rho+\eps\sigma\|\sigma).
$$
\end{itemize}
\end{prop}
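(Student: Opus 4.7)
The plan is to reduce part (2) to part (1) using Proposition~\ref{P-2.10}: one has $\widehat S_f(\rho+\eps\sigma\|\sigma)=\widehat S_{\widetilde f}(\sigma\|\rho+\eps\sigma)$, and since $\widetilde f(0^+)=f'(+\infty)<+\infty$, applying (1) with $(\widetilde f,\sigma,\rho)$ in place of $(f,\rho,\sigma)$ immediately yields (2).

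For (1), I would mimic the cyclic-representation analysis in the proof of Theorem~\ref{T-3.4}, working relative to $\rho+\sigma$ rather than $\sigma$. Since $\rho\le\rho+\sigma$, Lemma~\ref{L-3.1} produces a positive contraction $T:=T_{\rho/\rho+\sigma}\in M'$ with spectrum in $[0,1]$ such that, setting $\xi:=h_{\rho+\sigma}^{1/2}$ and $e':=Js(\rho+\sigma)J$, for all $x\in M$,
\[
\rho(x^*x)=\|T^{1/2}x\xi\|^2,\quad
\sigma(x^*x)=\|(e'-T)^{1/2}x\xi\|^2,\quad
(\sigma+\eps\rho)(x^*x)=\|(e'-(1-\eps)T)^{1/2}x\xi\|^2,
\]
and $e'-(1-\eps)T\ge\eps e'$ is boundedly invertible on $e'L^2(M)$. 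For $\eps>0$ the pair $(\rho,\sigma+\eps\rho)$ lies in $(M_*^+\times M_*^+)_\le$, so $\rho\ll\sigma+\eps\rho$ strongly; repeating the uniqueness argument of Theorem~\ref{T-3.4} produces a partial isometry $V_\eps\in M'$ with $h_{\sigma+\eps\rho}^{1/2}=V_\eps(e'-(1-\eps)T)^{1/2}\xi$ and $T_{\rho/\sigma+\eps\rho}=V_\eps T(e'-(1-\eps)T)^{-1}V_\eps^*$. Since $f(0^+)<+\infty$, Lemma~\ref{L-3.3} delivers the spectral identity
\[
\widehat S_f(\rho\|\sigma+\eps\rho)
=\int_0^1(1-(1-\eps)t)\,f\biggl(\frac{t}{1-(1-\eps)t}\biggr)\,d\|E_T(t)\xi\|^2,
\]
where $E_T(\cdot)$ denotes the spectral measure of $T$.

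An analogous computation for the pair $(\rho+\delta(\rho+\sigma),\sigma+\delta(\rho+\sigma))\in(M_*^+\times M_*^+)_\sim$, whose associated quadratic forms become $\|(T+\delta e')^{1/2}x\xi\|^2$ and $\|(e'-T+\delta e')^{1/2}x\xi\|^2$, yields
\[
\widehat S_f(\rho+\delta(\rho+\sigma)\|\sigma+\delta(\rho+\sigma))
=\int_0^1(1-t+\delta)\,f\biggl(\frac{t+\delta}{1-t+\delta}\biggr)\,d\|E_T(t)\xi\|^2,
\]
which tends to $\widehat S_f(\rho\|\sigma)$ as $\delta\searrow0$ by Definition~\ref{D-2.8}. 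Both integrands converge pointwise on $[0,1]$ to the common expression $H(t):=(1-t)f(t/(1-t))$, with the value at $t=1$ read as $H(1)=f'(+\infty)$ via $\eps f(1/\eps)\to f'(+\infty)$ and $\delta f((1+\delta)/\delta)\to f'(+\infty)$. Part (1) therefore reduces to showing that both spectral integrals converge to the common limit $\int_0^1 H(t)\,d\|E_T(t)\xi\|^2\in(-\infty,+\infty]$.

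I would establish this convergence via the L\"owner integral representation $f(t)=a+bt+ct^2+\int_{(0,+\infty)}\psi_s(t)\,d\mu(s)$ used in the proof of Theorem~\ref{T-3.4}, treating the four pieces separately: the $a$ and $bt$ contributions are elementary, the $ct^2$ piece follows by monotone convergence (as $ct^2/(1-(1-\eps)t)\nearrow ct^2/(1-t)$ when $\eps\searrow0$), and for the $\psi_s$ piece the direct identity
\[
(1-t)\psi_s\biggl(\frac{t}{1-t}\biggr)-(1-(1-\eps)t)\psi_s\biggl(\frac{t}{1-(1-\eps)t}\biggr)
=\frac{\eps t^3}{(t+s(1-t))(t+s(1-(1-\eps)t))}
\]
shows the difference is non-negative and monotonically decreases to $0$ as $\eps\searrow0$. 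The main technical obstacle is the potential unboundedness of $H$ near $t=1$ when $f'(+\infty)=+\infty$, which I would resolve exactly as in Theorem~\ref{T-3.4}: split $[0,1]=[0,1-\eta]\cup(1-\eta,1]$, apply bounded convergence on the first interval (where $t/(1-t)$ is bounded) and monotone convergence on the second (where $t/(1-t)\ge 1$, so $\psi_s(t/(1-t))\ge 0$ and $\psi_s(t/(1-(1-\eps)t))\ge 0$, making each integrand non-negative), and finally let $\eta\searrow0$. A parallel argument handles the $F$-integrand, completing the identification of both limits with $\int_0^1 H(t)\,d\|E_T(t)\xi\|^2$.
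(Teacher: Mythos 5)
Your proposal is correct and takes essentially the same route as the paper: part (2) by the transpose $\widetilde f$ and Proposition \ref{P-2.10}, and part (1) by expressing both $\widehat S_f(\rho\|\sigma+\eps\rho)$ and the defining approximant as spectral integrals against $d\|E_{\rho/\rho+\sigma}(t)h_{\rho+\sigma}^{1/2}\|^2$ (via Lemmas \ref{L-3.1} and \ref{L-3.3}) and then passing to the limit through the integral representation of $f$, with bounded convergence away from $t=1$ and monotone convergence near $t=1$, which is exactly the paper's Appendix A argument. The only cosmetic difference is that you work with the approximant $\rho+\delta(\rho+\sigma)\,\|\,\sigma+\delta(\rho+\sigma)$ directly where the paper first rescales it to $\rho+\eps\sigma\,\|\,\sigma+\eps\rho$; your $\psi_s$-difference identity checks out and the parallel monotonicity computations do go through.
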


\begin{proof}
(1)\enspace
Assume that $f(0^+)<+\infty$, and extend $f$ to $[0,+\infty)$ by $f(0)=f(0^+)$. Set
$\eta:=\rho+\sigma$ and let $\eps>0$. Since $\rho$, $\sigma$ and $\sigma+\eps\rho$ are all
dominated by $\eta$, we have the three (bounded) positive self-adjoint operators
$T_1:=T_{\rho/\eta}$, $T_2:=T_{\sigma/\eta}$ and $T_3:=T_{\sigma+\eps\rho/\eta}$ as follows:
\begin{align}\label{F-4.1}
\overline R_{\rho/\eta}=V_1T_1^{1/2},\qquad
\overline R_{\sigma/\eta}=V_2T_2^{1/2},\qquad
\overline R_{\sigma+\eps\rho/\eta}=V_3T_3^{1/2},
\end{align}
where $V_k$'s are partial isometries in $e'M'e'$ with $V_3^*V_3=e'$, where $e'$ is the
projection onto $\overline{Mh_\eta^{1/2}}$. Since
$$
\rho(x)=\bigl\<T_1h_\eta^{1/2},xh_\eta^{1/2}\bigr\>,\qquad
\sigma(x)=\bigl\<T_2h_\eta^{1/2},xh_\eta^{1/2}\bigr\>,\qquad x\in M,
$$
we see that $T_1+T_2=e'$. Moreover, since
$$
(\sigma+\eps\rho)(x)=\bigl\<(T_2+\eps T_1)h_\eta^{1/2},xh_\eta^{1/2}\bigr\>,\qquad x\in M,
$$
we have $\overline R_{\sigma+\eps\rho}=V_3(T_2+\eps T_1)^{1/2}$ so that
$V_3(T_2+\eps T_1)^{1/2}h_\eta^{1/2}=h_{\sigma+\eps\rho}^{1/2}$.
We find that
\begin{align*}
(\rho+\eps\sigma)(x)
&=\bigl\<(T_1+\eps T_2)h_\eta^{1/2},xh_\eta^{1/2}\bigr\> \\
&=\bigl\<(T_1+\eps T_2)(T_2+\eps T_1)^{-1/2}V_3^*h_{\sigma+\eps\rho}^{1/2},
x(T_2+\eps T_1)^{-1/2}V_3^*h_{\sigma+\eps\rho}^{1/2}\bigr\> \\
&=\bigl\<V_3(T_1+\eps T_2)(T_2+\eps T_1)^{-1}V_3^*h_{\sigma+\eps\rho}^{1/2},
xh_{\sigma+\eps\rho}^{1/2}\bigr\>,\qquad x\in M,
\end{align*}
which implies that
$$
T_{\rho+\eps\sigma/\sigma+\eps\rho}=V_3(T_1+\eps T_2)(T_2+\eps T_1)^{-1}V_3^*.
$$
Therefore,
\begin{align*}
&\widehat S'_f(\rho+\eps\sigma\|\sigma+\eps\rho) \\
&\quad=\bigl\<h_{\sigma+\eps\rho}^{1/2},
f(V_3(T_1+\eps T_2)(T_2+\eps T_1)^{-1}V_3^*)h_{\sigma+\eps\rho}^{1/2}\bigr\> \\
&\quad=\bigl\<V_3(T_2+\eps T_1)^{1/2}h_\eta^{1/2},
V_3f((T_1+\eps T_2)(T_2+\eps T_1)^{-1})V_3^*V_3(T_2+\eps T_1)^{1/2}h_\eta^{1/2}\bigr\> \\
&\quad=\bigl\<h_\eta^{1/2},
(T_2+\eps T_1)f((T_1+\eps T_2)(T_2+\eps T_1)^{-1})h_\eta^{1/2}\bigr\>.
\end{align*}
Now, since $0\le T_1\le e'$ and $T_2=e'-T_1$, taking the spectral decomposition
$T_1=\int_0^1t\,dE_1(t)$ with $\int_0^1dE_1(t)=e'$, one can write
\begin{align}\label{F-4.2}
\widehat S'_f(\rho+\eps\sigma\|\sigma+\eps\rho)
=\int_0^1(1-t+\eps t)f\biggl({t+\eps(1-t)\over1-t+\eps t}\biggr)\,d\nu(t),
\end{align}
where $d\nu(t):=d\|E_1(t)h_\eta^{1/2}\|^2$, a finite positive measure on $[0,1]$.

Similarly, one has
$$
\rho(x)=\bigl\<V_3T_1(T_2+\eps T_1)^{-1}V_3^*h_{\sigma+\eps\rho}^{1/2},
xh_{\sigma+\eps\rho}^{1/2}\bigr\>,\qquad x\in M,
$$
so that
$$
T_{\rho/\sigma+\eps\rho}=V_3T_1(T_2+\eps T_1)^{-1}V_3^*.
$$
Therefore,
\begin{align}
\widehat S'_f(\rho\|\sigma+\eps\rho)
&=\<h_\eta^{1/2},(T_2+\eps T_1)f(T_1(T_2+\eps T_1)^{-1})h_\eta^{1/2}\> \nonumber\\
&=\int_0^1(1-t+\eps t)f\biggl({t\over1-t+\eps t}\biggr)\,d\nu(t). \label{F-4.3}
\end{align}

By Lemma \ref{L-3.3} one has
\begin{align}\label{F-4.4}
\widehat S_f(\rho+\eps\sigma\|\sigma+\eps\rho)
=\widehat S'_f(\rho+\eps\sigma\|\sigma+\eps\rho),\qquad
\widehat S_f(\rho\|\sigma+\eps\rho)=\widehat S'_f(\rho\|\sigma+\eps\rho).
\end{align}
Furthermore, by Definition \ref{D-2.8} one sees that
\begin{align}
\widehat S_f(\rho\|\sigma)
&=\lim_{\eps\searrow0}\widehat S_f(\rho+\eps(\rho+\sigma)\|\sigma+\eps(\rho+\sigma))
\nonumber \\
&=\lim_{\eps\searrow0}\widehat S_f((1+\eps)\rho+\eps\sigma\|(1+\eps)\sigma+\eps\rho)
\nonumber \\
&=\lim_{\eps\searrow0}\widehat S_f\biggl(\rho+{\eps\over1+\eps}\,\sigma\|
\sigma+{\eps\over1+\eps}\,\rho\biggr) \nonumber\\
&=\lim_{\eps\searrow0}\widehat S_f(\rho+\eps\sigma\|\sigma+\eps\rho). \label{F-4.5}
\end{align}
From \eqref{F-4.2}--\eqref{F-4.5} it suffices to prove that the integrals in \eqref{F-4.2} and
\eqref{F-4.3} have the same limit
as $\eps\searrow0$. For this we may prove that
\begin{align}
\lim_{\eps\searrow0}\int_0^1(1-t+\eps t)
f\biggl({t+\eps(1-t)\over1-t+\eps t}\biggr)\,d\nu(t)
&=\int_0^1(1-t)f\biggl({t\over1-t}\biggr)\,d\nu(t), \label{F-4.6}\\
\lim_{\eps\searrow0}\int_0^1(1-t+\eps t)
f\biggl({t\over1-t+\eps t}\biggr)\,d\nu(t)
&=\int_0^1(1-t)f\biggl({t\over1-t}\biggr)\,d\nu(t), \label{F-4.7}
\end{align}
where $(1-t)f(t/(1-t))$ at $t=1$ is understood as
$\lim_{t\nearrow1}(1-t)f\bigl({t\over1-t}\bigr)=f'(+\infty)$. Let us transfer the proofs of
\eqref{F-4.6} and \eqref{F-4.7} into Appendix A, which are more or less similar to that of
\eqref{F-3.8}.

(2) is immediate from (1) and Proposition \ref{P-2.10}.
\end{proof}

In the proof of Theorem \ref{T-3.4} we used the integral expression of an operator convex
function $f$ on $(0,+\infty)$ satisfying $f(0^+)<+\infty$. For a general operator convex
function $f$ on $(0,+\infty)$, recall \cite{LR} (see also \cite[Theorem 5.1]{FHR}) that $f$
has an integral expression
\begin{align}\label{F-4.8}
f(t)=a+b(t-1)+c(t-1)^2+d\,{(t-1)^2\over t}+\int_{(0,+\infty)}{(t-1)^2\over t+s}\,d\mu(s),
\quad t\in(0,+\infty).
\end{align}
where $a,b\in\bR$, $c,d\ge0$ and $\mu$ is a positive measure on $[0,+\infty)$ with
$\int_{[0,+\infty)}(1+s)^{-1}\,d\mu(s)<+\infty$, and moreover $a,b,c,d$ and $\mu$ are uniquely
determined.

Based on the arguments in the proof of Proposition \ref{P-4.1}, we next present a general
integral formula of $\widehat S_f(\rho\|\sigma)$, which can be the second definition of the
maximal $f$-divergences.

\begin{thm}\label{T-4.2}
For every $\rho,\sigma\in M_*^+$ let $T_{\rho/\rho+\sigma}=\int_0^1t\,dE_{\rho/\rho+\sigma}(t)$
be the spectral decomposition. Then for every operator convex
function $f$ on $(0,+\infty)$,
\begin{align}\label{F-4.9}
\widehat S_f(\rho\|\sigma)=\int_0^1(1-t)f\biggl({t\over1-t}\biggr)
\,d\|E_{\rho/\rho+\sigma}(t)h_{\rho+\sigma}^{1/2}\|^2,
\end{align}
where $(1-t)f\bigl({t\over1-t}\bigr)$ is understood as $f(0^+)$ for $t=0$ and $f'(+\infty)$
for $t=1$.
\end{thm}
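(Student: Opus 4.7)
The plan is to prove \eqref{F-4.9} in three stages: Stage 1 handles $f$ with $f(0^+)<+\infty$ and essentially recycles the proof of Proposition \ref{P-4.1}(1); Stage 2 handles $f$ with $f'(+\infty)<+\infty$ via the transpose identity (Proposition \ref{P-2.10}); Stage 3 reduces an arbitrary operator convex $f$ to a sum $f_1+f_2$ covered by Stages 1 and 2, by partitioning the measure $\mu$ in the integral representation \eqref{F-4.8}.

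Stage 1. When $f(0^+)<+\infty$, formula \eqref{F-4.3} combined with Lemma \ref{L-3.3} gives, for every $\eps>0$,
\begin{equation*}
\widehat S_f(\rho\|\sigma+\eps\rho)
=\int_0^1(1-t+\eps t)\,f\Bigl(\frac{t}{1-t+\eps t}\Bigr)\,d\nu(t),
\end{equation*}
where $\nu:=\|E_{\rho/\rho+\sigma}(\cdot)h_\eta^{1/2}\|^2$ and $\eta:=\rho+\sigma$. Proposition \ref{P-4.1}(1) yields $\widehat S_f(\rho\|\sigma)=\lim_{\eps\searrow0}\widehat S_f(\rho\|\sigma+\eps\rho)$, and the limit identity \eqref{F-4.7} produces \eqref{F-4.9}.

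Stage 2. When $f'(+\infty)<+\infty$, one has $\widetilde f(0^+)<+\infty$. Apply Stage 1 to $\widetilde f$ with the pair $(\sigma,\rho)$, and combine with Proposition \ref{P-2.10}. The identity $T_{\sigma/\rho+\sigma}=e'-T_{\rho/\rho+\sigma}$ makes $\|E_{\sigma/\sigma+\rho}(\cdot)h_\eta^{1/2}\|^2$ the pushforward of $\nu$ under $t\mapsto 1-t$, so the substitution $s=1-t$ yields
\begin{equation*}
\widehat S_f(\rho\|\sigma)=\widehat S_{\widetilde f}(\sigma\|\rho)
=\int_0^1 t\,\widetilde f\bigl((1-t)/t\bigr)\,d\nu(t)
=\int_0^1(1-t)\,f\bigl(t/(1-t)\bigr)\,d\nu(t),
\end{equation*}
using $\widetilde f(s)=sf(1/s)$ in the last step.

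Stage 3. For a general operator convex $f$, write \eqref{F-4.8} and split $\mu=\mu_1+\mu_2$ with $\mu_1:=\mu|_{[1,+\infty)}$ and $\mu_2:=\mu|_{(0,1)}$, giving $f=f_1+f_2$ where
\begin{equation*}
f_1(t):=a+b(t-1)+c(t-1)^2+\int_{[1,+\infty)}\frac{(t-1)^2}{t+s}\,d\mu_1(s),\qquad
f_2(t):=d\,\frac{(t-1)^2}{t}+\int_{(0,1)}\frac{(t-1)^2}{t+s}\,d\mu_2(s).
\end{equation*}
Both $f_i$ are operator convex. Since $\mu([1,+\infty))<+\infty$ (coming from $\int(1+s)^{-1}d\mu(s)<+\infty$), one gets $f_1(0^+)<+\infty$. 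Since $\mu((0,1))<+\infty$ and $(t-1)^2/(t(t+s))\to 1$ uniformly in $s\in(0,1)$ as $t\to+\infty$, one has $f_2'(+\infty)=d+\mu((0,1))<+\infty$. Stages 1 and 2 thus apply to $f_1$ and $f_2$ respectively. Additivity $\widehat S_f=\widehat S_{f_1}+\widehat S_{f_2}$ is verified first on $(\rho+\eps\eta,\sigma+\eps\eta)\in(M_*^+\times M_*^+)_\sim$ using Definition \ref{D-2.3} together with Fubini for the $\mu_i$-integrals (whose integrands are non-negative), and then extended to $(\rho,\sigma)$ via Definition \ref{D-2.8}; no $+\infty+(-\infty)$ collision arises, since $f_2\ge0$ and $f_1$ is bounded below by a linear function. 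Adding the Stage 1 and Stage 2 integral formulas then gives \eqref{F-4.9} for $f$.

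The main obstacle is Stage 3: executing the additivity and Fubini manipulations cleanly in the presence of possibly infinite contributions from $\widehat S_{f_2}$ or the $\mu$-integral. The non-negativity of every non-polynomial summand in \eqref{F-4.8} is precisely what rules out cancellation, and the boundary conventions at $t=0,1$ in \eqref{F-4.9} are consistent with the decomposition because the endpoint limits $f_1(0^+)$ and $f_2'(+\infty)$ are both finite.
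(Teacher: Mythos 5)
Your proposal is correct and follows essentially the same route as the paper's proof: the same splitting $f=f_1+f_2$ built from \eqref{F-4.8}, the same appeal to the computations in the proof of Proposition \ref{P-4.1}\,(1) (i.e.\ \eqref{F-4.2}--\eqref{F-4.7}), and the same reflection argument via Proposition \ref{P-2.10} together with $T_{\sigma/\rho+\sigma}=e'-T_{\rho/\rho+\sigma}$. Your explicit verification of the additivity $\widehat S_f=\widehat S_{f_1}+\widehat S_{f_2}$, which the paper asserts without comment, is a minor amplification rather than a different approach.
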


\begin{proof}
In view of the integral expression in \eqref{F-4.8}, we can write $f=f_1+f_2$ with operator
convex functions $f_1,f_2$ on $(0,+\infty)$ such that $f_1(0^+)<+\infty$ and
$f_2'(+\infty)<+\infty$ and so $\widetilde f_2(0^+)<+\infty$. In fact, we may define
\begin{align*}
f_1(t)&:=a+b(t-1)+c(t-1)^2+\int_{[1,+\infty)}{(t-1)^2\over t+s}\,d\mu(s), \\
f_2(t)&:=d\,{(t-1)^2\over t}+\int_{(0,1)}{(t-1)^2\over t+s}\,d\mu(s).
\end{align*}
We then have
$$
\widehat S_f(\rho\|\sigma)
=\widehat S_{f_1}(\rho\|\sigma)+\widehat S_{f_2}(\rho\|\sigma)
=\widehat S_{f_1}(\rho\|\sigma)+\widehat S_{\widetilde f_2}(\sigma\|\rho).
$$
It follows from \eqref{F-4.5}, \eqref{F-4.2} and \eqref{F-4.6} in the proof of Proposition
\ref{P-4.1}\,(1) that
\begin{align}\label{F-4.10}
\widehat S_{f_1}(\rho\|\sigma)=\int_0^1(1-t)f_1\biggl({t\over1-t}\biggr)
\,d\|E_{\rho/\rho+\sigma}(t)h_{\rho+\sigma}^{1/2}\|^2,
\end{align}
and similarly, with $\rho,\sigma$ interchanged,
\begin{align}\label{F-4.11}
\widehat S_{\widetilde f_2}(\sigma\|\rho)
=\int_0^1(1-t)\widetilde f_2\biggl({t\over1-t}\biggr)
\,d\|E_{\sigma/\rho+\sigma}(t)h_{\rho+\sigma}^{1/2}\|^2,
\end{align}
where $T_{\sigma/\rho+\sigma}=\int_0^1t\,dE_{\sigma/\rho+\sigma}(t)$ is the spectral
decomposition. Since $T_{\sigma/\rho+\sigma}=e'-T_{\rho/\rho+\sigma}$, we find that
$E_{\sigma/\rho+\sigma}([0,t])=E_{\rho/\rho+\sigma}([1-t,1])$ for all $t\in[0,1]$.
Applying this to \eqref{F-4.11} gives
\begin{align}
\widehat S_{\widetilde f_2}(\sigma\|\rho)
&=\int_0^1t\widetilde f_2\biggl({1-t\over t}\biggr)
\,d\|E_{\rho/\rho+\sigma}(t)h_{\rho+\sigma}^{1/2}\|^2 \nonumber\\
&=\int_0^1(1-t)f_2\biggl({t\over1-t}\biggr)
\,d\|E_{\rho/\rho+\sigma}(t)h_{\rho+\sigma}^{1/2}\|^2. \label{F-4.12}
\end{align}
Hence \eqref{F-4.9} follows by adding \eqref{F-4.10} and \eqref{F-4.12}.
\end{proof}

\begin{cor}\label{C-4.4}
If $f(0^+)<+\infty$ and $f'(+\infty)<+\infty$, then $\widehat S_f(\rho\|\sigma)$
is finite for every $\rho,\sigma\in M_*^+$.
\end{cor}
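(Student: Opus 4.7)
The plan is to deduce the corollary directly from the integral formula \eqref{F-4.9} in Theorem \ref{T-4.2}, reducing the finiteness of $\widehat S_f(\rho\|\sigma)$ to the boundedness of the integrand $g(t):=(1-t)f\bigl(t/(1-t)\bigr)$ on $[0,1]$ against a finite positive measure.

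First I would introduce the function
\[
g(t):=(1-t)f\!\left({t\over1-t}\right),\qquad t\in(0,1),
\]
and extend it to $[0,1]$ by $g(0):=f(0^+)$ and $g(1):=f'(+\infty)$ as in the statement of Theorem \ref{T-4.2}. Since $f$ is operator convex on $(0,+\infty)$, it is in particular continuous on $(0,+\infty)$, so $g$ is continuous on $(0,1)$. The assumptions $f(0^+)<+\infty$ and $f'(+\infty)<+\infty$ mean that the boundary values $g(0)$ and $g(1)$ are finite and equal to the one-sided limits of $g$ at the endpoints, so $g$ extends to a continuous function on the compact interval $[0,1]$. Consequently there is a constant $C<+\infty$ with $|g(t)|\le C$ for all $t\in[0,1]$.

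Next I would observe that the positive Borel measure
\[
d\nu(t):=d\|E_{\rho/\rho+\sigma}(t)h_{\rho+\sigma}^{1/2}\|^2
\]
on $[0,1]$ is finite. Indeed, since $T_{\rho/\rho+\sigma}\in\pi_{\rho+\sigma}(M)'_+$ satisfies $T_{\rho/\rho+\sigma}\le e'$ where $e'$ is the $M'$-support of $\rho+\sigma$, the projection $e'$ fixes $h_{\rho+\sigma}^{1/2}$, so the total mass is
\[
\nu([0,1])=\|h_{\rho+\sigma}^{1/2}\|_2^2=\tr(h_{\rho+\sigma})=(\rho+\sigma)(1)<+\infty.
\]

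Finally, invoking Theorem \ref{T-4.2} I can write
\[
\widehat S_f(\rho\|\sigma)=\int_0^1 g(t)\,d\nu(t),
\]
and the bound $|g|\le C$ together with $\nu([0,1])<+\infty$ yields $|\widehat S_f(\rho\|\sigma)|\le C(\rho+\sigma)(1)<+\infty$. There is no real obstacle here; the only subtlety is verifying that the boundary values used to extend $g$ continuously to $[0,1]$ are exactly the ones stipulated in Theorem \ref{T-4.2}, which is immediate from the definitions of $f(0^+)$ and $f'(+\infty)$ and standard convex-function convergence at the endpoints.
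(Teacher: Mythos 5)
Your proof is correct and follows exactly the paper's route: the paper likewise observes that $(1-t)f\bigl(t/(1-t)\bigr)$ is bounded on $[0,1]$ when $f(0^+)<+\infty$ and $f'(+\infty)<+\infty$ and then concludes finiteness directly from the integral expression \eqref{F-4.9}. Your extra verifications (continuity of the integrand up to the endpoints and the total mass $\nu([0,1])=(\rho+\sigma)(1)$) are just the details the paper leaves implicit.
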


\begin{proof}
If $f(0^+)<+\infty$ and $f'(+\infty)<+\infty$, then the function
$(1-t)f\bigl({t\over1-t}\bigr)$ is bounded on $[0,1]$. Hence the result is obvious from
expression \eqref{F-4.9}.
\end{proof}

\begin{remark}\label{R-4.4}\rm
Similarly to the definition of the standard $f$-divergence $S_f(\rho\|\sigma)$ in
\cite[(2.6)]{Hi1}, one can write \eqref{F-4.9} as sum of three terms
\begin{align}
\widehat S_f(\rho\|\sigma)
&=\int_{(0,1)}(1-t)f\biggl({t\over1-t}\biggr)
\,d\|E_{\rho/\rho+\sigma}(t)h_{\rho+\sigma}^{1/2}\|^2 \nonumber\\
&\quad+f(0^+)\bigl\<h_{\rho+\sigma}^{1/2},
E_{\rho/\rho+\sigma}(\{0\})h_{\rho+\sigma}^{1/2}\bigr\>
+f'(+\infty)\bigl\<h_{\rho+\sigma}^{1/2},
E_{\rho/\rho+\sigma}(\{1\})h_{\rho+\sigma}^{1/2}\bigr\>, \label{F-4.13}
\end{align}
where $E_{\rho/\rho+\sigma}(\{0\})$ and $E_{\rho/\rho+\sigma}(\{1\})$ are the spectral
projections of $T_{\rho/\rho+\sigma}$ for $\{0\}$ and $\{1\}$. One might expect that the above
boundary terms with $f(0^+)$ and $f'(+\infty)$ are equal to the corresponding terms
$f(0^+)\sigma(1-s(\rho))$ and $f'(+\infty)\rho(1-s(\sigma))$, respectively, in
\cite[(2.6)]{Hi1}. But it is not true, as will explicitly be seen in Example \ref{E-4.5} below.
Instead, it is not difficult to find that
\begin{align*}
\bigl\<h_{\rho+\sigma}^{1/2},
V_1E_{\rho/\rho+\sigma}(\{0\})V_1^*h_{\rho+\sigma}^{1/2}\bigr\>
&=\sigma(1-s(\rho)), \\
\bigl\<h_{\rho+\sigma}^{1/2},
V_2E_{\rho/\rho+\sigma}(\{1\})V_2^*h_{\rho+\sigma}^{1/2}\bigr\>
&=\rho(1-s(\sigma)),
\end{align*}
where $V_1,V_2$ are partial isometries in \eqref{F-4.1}.
\end{remark}

\begin{example}\label{E-4.5}\rm
Although we noted in Example \ref{E-2.12} that the definition of $\widehat S_f(\rho\|\sigma)$
in \eqref{F-2.4} coincides with \cite[Definition 3.21]{HM} in the finite-dimensional case, we
examine formula \eqref{F-4.9} in the matrix case $M=\bM_d$. For $\rho,\sigma\in\bM_d^+$, since
$R_{\rho/\rho+\sigma}(X(\rho+\sigma)^{1/2})=X\rho^{1/2}$, one has
$R_{\rho/\rho+\sigma}X=X(\rho+\sigma)^{-1/2}\rho^{1/2}$ and hence
$R_{\rho/\rho+\sigma}^*X=X\rho^{1/2}(\rho+\sigma)^{-1/2}$, where $(\rho+\sigma)^{-1/2}$ is
defined in the sense of generalized inverse. Therefore,
$$
T_{\rho/\rho+\sigma}X=R_{\rho/\rho+\sigma}^*R_{\rho/\rho+\sigma}X
=X(\rho+\sigma)^{-1/2}\rho(\rho+\sigma)^{-1/2},
$$
so that $T_{\rho/\rho+\sigma}=R_{(\rho+\sigma)^{-1/2}\rho(\rho+\sigma)^{-1/2}}$ and similarly
$T_{\sigma/\rho+\sigma}=R_{(\rho+\sigma)^{-1/2}\sigma(\rho+\sigma)^{-1/2}}$, where $R_A$ is
the right multiplication by $A$ on $\bM_d$. Note that
$T_{\rho/\rho+\sigma}+T_{\sigma/\rho+\sigma}=e'$ is the right multiplication of the support
projection of $\rho+\sigma$. Here, for simplicity, assume that $\sigma$ is invertible,
and let $\sigma^{1/2}(\rho+\sigma)^{-1/2}= VQ^{1/2}$ be the polar decomposition where
$Q:=(\rho+\sigma)^{-1/2}\sigma(\rho+\sigma)^{-1/2}$. Then formula \eqref{F-4.9} is written
as
\begin{align}
\widehat S_f(\rho\|\sigma)
&=\bigl\<(\rho+\sigma)^{1/2},R_Qf(R_{I-Q}R_Q^{-1})(\rho+\sigma)^{1/2}\bigr\> \nonumber\\
&=\bigl\<R_Q^{1/2}(\rho+\sigma)^{1/2},R_{f(Q^{-1}-I)}R_Q^{1/2}(\rho+\sigma)^{1/2}\bigr\>
\nonumber\\
&=\Tr Q^{1/2}(\rho+\sigma)Q^{1/2}f(Q^{-1}-I)=\Tr V^*\sigma Vf(Q^{-1}-I) \nonumber\\
&=\Tr\sigma f(V(Q^{-1}-I)V^*)=\Tr\sigma f((VQV^*)^{-1}-I) \nonumber\\
&=\Tr\sigma f((\sigma^{1/2}(\rho+\sigma)^{-1}\sigma^{1/2})^{-1}-I)
=\Tr\sigma f(\sigma^{-1/2}\rho\sigma^{-1/2}), \label{F-4.14}
\end{align}
which coincides with \cite[Definition 3.21]{HM}, as mentioned in Definition \ref{D-2.8}.

When $\rho,\sigma$ are not invertible, the two boundary terms in \eqref{F-4.13} are
\begin{align}
f(0^+)&\<(\rho+\sigma)^{1/2},R_{E_0}(\rho+\sigma)^{1/2}\>, \label{F-4.15}\\
f'(+\infty)&\<(\rho+\sigma)^{1/2},R_{E_1}(\rho+\sigma)^{1/2}\>, \label{F-4.16}
\end{align} 
where $E_0,E_1$ are the spectral projections of
$(\rho+\sigma)^{-1/2}\sigma(\rho+\sigma)^{-1/2}$ for the eigenvalues $0,1$, respectively.
For example, consider the $2\times2$ matrix case where
$\rho:=\begin{bmatrix}{3\over2}&0\\0&0\end{bmatrix}$ and
$\sigma:=\begin{bmatrix}1&1\\1&1\end{bmatrix}$. By direct computations we find that
\eqref{F-4.15} and \eqref{F-4.16} are equal to
$$
2f(0^+),\qquad{3\over2}\,f'(+\infty),
$$
respectively. On the other hand, the corresponding terms in the definition of
$S_f(\rho\|\sigma)$ in \cite[(2.6)]{Hi1} are equal to
$$
f(0^+)\sigma(1-s(\rho))=f(0^+),\qquad
f'(+\infty)\rho(1-s(\sigma))={3\over4}\,f'(+\infty),
$$
in this case. Thus, the two boundary terms with $f(0^+)$ and $f'(+\infty)$ for
$S_f(\rho\|\sigma)$ and $\widehat S_f(\rho\|\sigma)$ are different each other.
\end{example}

\section{Lower semicontinuity and martingale convergence}

For each $n\in\bN$, as in \cite[(3.6)]{Hi1}, we consider the approximation of $f$ in
\eqref{F-4.8} with integral on the cut-off interval $[1/n,n]$, that is,
\begin{align}
f_n(t)&:=a+b(t-1)+c\,{n(t-1)^2\over t+n}+d\,{(t-1)^2\over t+(1/n)} \nonumber\\
&\qquad+\int_{[1/n,n]}{(t-1)^2\over t+s}\,d\mu(s),\qquad d\in(0,+\infty). \label{F-5.1}
\end{align}
The following lemma is \cite[Lemma 3.1]{Hi1}.

\begin{lemma}\label{L-5.1}
For each $n\in\bN$, $f_n$ is operator convex on $(0,+\infty)$, $f_n(0^+)<+\infty$,
$f_n'(+\infty)<+\infty$ and
$$
f_n(0^+)\,\nearrow\,f(0^+),\qquad f_n'(+\infty)\,\nearrow\,f'(+\infty),\qquad
f_n(t)\,\nearrow\,f(t)
$$
for all $t\in(0,+\infty)$ as $n\to\infty$.
\end{lemma}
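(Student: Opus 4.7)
The plan is to verify the three assertions term by term using the explicit form of $f_n$ in \eqref{F-5.1} together with the integral representation \eqref{F-4.8} of $f$. All four ingredients of $f_n$ have the shape of (a positive multiple of) either an affine function or a function of the form $(t-1)^2/(t+s)$ with $s\in(0,+\infty)$, so the proof reduces to elementary checks on each.

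First I would establish operator convexity. Recall that for every $s\ge0$ the function $t\mapsto(t-1)^2/(t+s)$ is operator convex on $(0,+\infty)$ (it is an operator convex function listed in the integral representation of operator convex functions on $(0,\infty)$; alternatively $(t-1)^2/t=t-2+t^{-1}$ is operator convex since $t^{-1}$ is, and the general case follows similarly). Since $a+b(t-1)$ is affine, the term $c\,n(t-1)^2/(t+n)$ is a positive multiple of $(t-1)^2/(t+n)$, the term $d(t-1)^2/(t+(1/n))$ is a positive multiple of $(t-1)^2/(t+(1/n))$, and the remaining term is a positive integral of operator convex functions against the positive measure $\mu|_{[1/n,n]}$, the sum $f_n$ is operator convex on $(0,+\infty)$. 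Here one uses that $\mu([1/n,n])<+\infty$, which follows from $\int_{[0,+\infty)}(1+s)^{-1}\,d\mu(s)<+\infty$ since $(1+s)^{-1}\ge1/(1+n)$ on $[1/n,n]$.

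Next I would compute the boundary values. Taking $t\searrow 0$ termwise gives
\[
f_n(0^+)=a-b+c+dn+\int_{[1/n,n]}s^{-1}\,d\mu(s),
\]
which is finite (the integral is dominated by $n\,\mu([1/n,n])$). Dividing by $t$ and letting $t\to+\infty$ termwise gives
\[
f_n'(+\infty)=b+cn+d+\mu([1/n,n]),
\]
again finite. The same termwise limits applied to \eqref{F-4.8} yield
\[
f(0^+)=a-b+c+d\lim_{t\searrow0}\frac{(t-1)^2}{t}+\int_{(0,+\infty)}s^{-1}\,d\mu(s),
\]
\[
f'(+\infty)=b+c\lim_{t\to\infty}\frac{(t-1)^2}{t}+d+\mu((0,+\infty)),
\]
where the diverging limits are interpreted in $[0,+\infty]$. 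Hence $f_n(0^+)\nearrow f(0^+)$ and $f_n'(+\infty)\nearrow f'(+\infty)$, since each of $dn$, $cn$, $\int_{[1/n,n]}s^{-1}\,d\mu(s)$, $\mu([1/n,n])$ increases to the corresponding quantity in $f(0^+)$ or $f'(+\infty)$.

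Finally, for fixed $t\in(0,+\infty)$ the pointwise monotone convergence $f_n(t)\nearrow f(t)$ follows from three obvious monotonicities: $n(t-1)^2/(t+n)=(t-1)^2/(1+t/n)\nearrow(t-1)^2$ as $t/n\searrow 0$; $(t-1)^2/(t+1/n)\nearrow(t-1)^2/t$; and $\int_{[1/n,n]}(t-1)^2/(t+s)\,d\mu(s)\nearrow\int_{(0,+\infty)}(t-1)^2/(t+s)\,d\mu(s)$ by the monotone convergence theorem, since the nested intervals $[1/n,n]$ exhaust $(0,+\infty)$ and the integrand is non-negative. Adding these and the affine terms gives $f_n(t)\nearrow f(t)$. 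There is no real obstacle here; the only point requiring slight attention is verifying $\mu([1/n,n])<+\infty$ from the integrability condition on $\mu$, which was already used above.
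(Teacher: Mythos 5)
Your verification is correct: decomposing $f_n$ into the affine part, the two rational terms and the cut-off integral, checking operator convexity of $(t-1)^2/(t+s)$ for each $s\ge0$ (with $\mu([1/n,n])<+\infty$ from $\int(1+s)^{-1}\,d\mu(s)<+\infty$), and then computing the termwise monotone limits at $0^+$, at $+\infty$ and pointwise is exactly the natural argument. Note that the paper itself gives no proof of this statement, simply quoting \cite[Lemma 3.1]{Hi1}, so your direct termwise verification from \eqref{F-5.1} and \eqref{F-4.8} is precisely what is expected and contains no gaps.
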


\begin{lemma}\label{L-5.2}
For every $\rho,\sigma\in M_*^+$,
$$
\widehat S_f(\rho\|\sigma)=\lim_{n\to\infty}\widehat S_{f_n}(\rho\|\sigma)
\quad\mbox{increasingly}.
$$
Hence $\widehat S_f(\rho\|\sigma)=\sup_{n\ge1}\widehat S_{f_n}(\rho\|\sigma)$.
\end{lemma}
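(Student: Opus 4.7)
The plan is to apply the integral formula \eqref{F-4.9} of Theorem \ref{T-4.2} to both $\widehat S_f(\rho\|\sigma)$ and $\widehat S_{f_n}(\rho\|\sigma)$, and then reduce the convergence assertion to a monotone convergence statement for integrals against the finite positive measure $\nu(dt):=d\|E_{\rho/\rho+\sigma}(t)h_{\rho+\sigma}^{1/2}\|^2$ on $[0,1]$.

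First I would write, via Theorem \ref{T-4.2},
\begin{align*}
\widehat S_f(\rho\|\sigma)&=\int_0^1(1-t)f\biggl({t\over1-t}\biggr)\,d\nu(t), \\
\widehat S_{f_n}(\rho\|\sigma)&=\int_0^1(1-t)f_n\biggl({t\over1-t}\biggr)\,d\nu(t),
\end{align*}
where each integrand is extended to $[0,1]$ with the boundary values $f(0^+)$, $f_n(0^+)$ at $t=0$ and $f'(+\infty)$, $f_n'(+\infty)$ at $t=1$. Note that the formula for $\widehat S_{f_n}$ is legitimate since $f_n$ is operator convex on $(0,+\infty)$ by Lemma \ref{L-5.1}.

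Next I would verify that the integrand for $f_n$ increases pointwise on $[0,1]$ to that for $f$. For $t\in(0,1)$ this is immediate from Lemma \ref{L-5.1}, which gives $f_n(s)\nearrow f(s)$ for every $s\in(0,+\infty)$ applied with $s=t/(1-t)$. At the boundary points $t=0$ and $t=1$ the monotone increase is exactly the statements $f_n(0^+)\nearrow f(0^+)$ and $f_n'(+\infty)\nearrow f'(+\infty)$ from Lemma \ref{L-5.1}. In particular the convergence $\widehat S_{f_n}(\rho\|\sigma)\to\widehat S_f(\rho\|\sigma)$ will be monotone increasing.

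Finally I would justify passing the limit into the integral. Since $f_1(0^+)<+\infty$ and $f_1'(+\infty)<+\infty$ (also by Lemma \ref{L-5.1}), the function $t\mapsto(1-t)f_1(t/(1-t))$ is continuous on the compact interval $[0,1]$ and hence bounded, so it is integrable with respect to the finite measure $\nu$. Because $f_n-f_1\ge0$ on $(0,+\infty)$ and $(f_n-f_1)(0^+)$, $(f_n-f_1)'(+\infty)\ge0$, the non-negative sequence $(1-t)(f_n-f_1)(t/(1-t))$ increases pointwise on $[0,1]$, so the monotone convergence theorem applies to yield
$$
\int_0^1(1-t)f_n\biggl({t\over1-t}\biggr)\,d\nu(t)\ \nearrow\
\int_0^1(1-t)f\biggl({t\over1-t}\biggr)\,d\nu(t),
$$
which is the desired identity. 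The only real issue is bookkeeping the two boundary points of the integration, and this is handled precisely by the boundary statements of Lemma \ref{L-5.1}; everything else is a routine application of monotone convergence.
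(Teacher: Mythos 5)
Your proposal is correct and follows essentially the same route as the paper: apply the integral formula \eqref{F-4.9} to both $f$ and $f_n$, use Lemma \ref{L-5.1} to get the pointwise monotone increase of $(1-t)f_n\bigl({t\over1-t}\bigr)$ on $[0,1]$ including the boundary values, and conclude by monotone convergence. Your extra step of subtracting the bounded integrand coming from $f_1$ to make the sequence nonnegative is a sensible way to make the (possibly negative) monotone convergence argument fully rigorous, a point the paper leaves implicit.
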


\begin{proof}
By Lemma \ref{L-5.1}, as $n\to\infty$,
$$
(1-t)f_n\biggl({t\over 1-t}\biggr)\ \nearrow\ (1-t)f\biggl({t\over1-t}\biggr),\qquad
t\in[0,1].
$$
Hence the result follows from the monotone convergence theorem applied to the integral formula
in \eqref{F-4.9} for $f_n$ and $f$.
\end{proof}

In addition to Lemma \ref{L-5.1}, it is readily verified that
$$
\lim_{t\searrow0}f_n'(t)=\lim_{t\searrow0}{f_n(t)-f_n(0^+)\over t}>-\infty.
$$
So, to prove the joint lower semicontinuity of $(\rho,\sigma)\mapsto\widehat S_f(\rho\|\sigma)$,
we may and do assume that $f(0^+)<+\infty$, $f'(+\infty)<+\infty$ and
$\lim_{t\searrow0}f'(t)>-\infty$. Such an operator monotone function $f$ on
$[0,+\infty)$, with $f(0)=f(0^+)$, has the integral expression
\begin{align}\label{F-5.2}
f(t)=a+bt+\int_{(0,+\infty)}{t^2\over t+s}\,d\nu(s),\qquad t\in[0,+\infty),
\end{align}
where $a,b\in\bR$ and $\nu$ is a finite positive measure on $(0,\infty)$. Indeed, the function
$g(t):=(f(t)-f(0^+))/t$ is operator monotone on $[0,+\infty)$ by \cite[Theorem 2.4]{HP}, so
$g$ has the integral expression
$$
g(t)=b+ct+\int_{(0,+\infty)}{t\over t+s}\,d\nu(s),\qquad t\in[0,+\infty),
$$
where $b\in\bR$, $c\ge0$ and $\nu$ is a positive measure on $(0,+\infty)$, see
\cite[(V.53)]{Bh} (also \cite[Theorem 2.7.11]{Hi}). Since $g(+\infty)=f'(+\infty)<+\infty$, it
must follow that $c=0$ and $\nu$ is a finite measure. Hence $f$ has the expression in
\eqref{F-5.2}. In view of \eqref{F-2.6} we may and do furthermore assume that
\begin{align}\label{F-5.3}
f(t)=\int_{(0,+\infty)}{t^2\over t+s}\,d\nu(s),\qquad t\in[0,+\infty),
\end{align}
where $\nu$ is as above.

\begin{lemma}\label{L-5.3}
Let $f$ be given in \eqref{F-5.3}. Then for every $\rho,\sigma\in M_*^+$,
$$
\widehat S_f(\rho\|\sigma)=\lim_{\eps\searrow0}\widehat S_f(\rho\|\sigma+\eps\rho)
\quad\mbox{increasingly}.
$$
Hence $\widehat S_f(\rho\|\sigma)=\sup_{\eps>0}\widehat S_f(\rho\|\sigma+\eps\rho)$.
\end{lemma}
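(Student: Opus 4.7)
The plan is to combine Proposition~\ref{P-4.1}(1) (which handles the convergence itself) with a direct monotonicity check on an explicit integral formula for $\widehat S_f(\rho\|\sigma+\eps\rho)$. Since $f$ in \eqref{F-5.3} satisfies $f(0^+)=0<+\infty$, Proposition~\ref{P-4.1}(1) already delivers the equality $\widehat S_f(\rho\|\sigma)=\lim_{\eps\searrow0}\widehat S_f(\rho\|\sigma+\eps\rho)$, so the only remaining task is to prove that the map $\eps\mapsto\widehat S_f(\rho\|\sigma+\eps\rho)$ is decreasing on $(0,+\infty)$.

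For the monotonicity, I would reuse formulas \eqref{F-4.3}--\eqref{F-4.4} from the proof of Proposition~\ref{P-4.1}, which apply here for every $\eps>0$ because $\rho\le\eps^{-1}(\sigma+\eps\rho)$ and $f(0^+)<+\infty$: writing $\eta=\rho+\sigma$ and letting $d\nu(t)=d\|E_1(t)h_\eta^{1/2}\|^2$ for $E_1$ the spectral measure of $T_{\rho/\eta}$, one has
\begin{align*}
\widehat S_f(\rho\|\sigma+\eps\rho)
=\int_0^1(1-t+\eps t)f\biggl({t\over1-t+\eps t}\biggr)\,d\nu(t),
\end{align*}
with the measure $\nu$ independent of $\eps$. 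For the elementary building block $f_s(t):=t^2/(t+s)$ with $s>0$, a one-line computation gives
\begin{align*}
(1-t+\eps t)f_s\biggl({t\over1-t+\eps t}\biggr)={t^2\over t+s(1-t)+\eps st},
\end{align*}
whose $\eps$-derivative $-st^3/(t+s(1-t)+\eps st)^2$ is $\le0$, so the integrand is decreasing in $\eps$ for each fixed $t\in[0,1]$ and $s>0$.

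Since $f=\int_{(0,+\infty)}f_s\,d\nu_f(s)$ with a finite positive measure $\nu_f$ by \eqref{F-5.3}, Tonelli's theorem (all integrands being nonnegative) gives $(1-t+\eps t)f(t/(1-t+\eps t))=\int_{(0,+\infty)}t^2/(t+s(1-t)+\eps st)\,d\nu_f(s)$, which is again decreasing in $\eps$ for each $t$ as a positive superposition of decreasing functions. Integrating against $\nu$ shows that $\eps\mapsto\widehat S_f(\rho\|\sigma+\eps\rho)$ is decreasing on $(0,+\infty)$, and combined with Proposition~\ref{P-4.1}(1) this yields the required increasing convergence. There is no substantive obstacle: the only point that requires care is confirming that formulas \eqref{F-4.3}--\eqref{F-4.4} are legitimately available for every $\eps>0$, which holds thanks to $f(0^+)<+\infty$, and that the special form \eqref{F-5.3} of $f$ (rather than the general \eqref{F-4.8}) is precisely what makes the integrand manifestly monotone in $\eps$.
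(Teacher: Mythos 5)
Your proposal is correct, and the part that differs from the paper is the proof of the monotonicity in $\eps$. The paper also invokes Proposition \ref{P-4.1}\,(1) for the convergence, but for the decrease of $\eps\mapsto\widehat S_f(\rho\|\sigma+\eps\rho)$ it argues at the operator level: via Proposition \ref{P-2.10} it rewrites $\widehat S_f(\rho\|\sigma+\eps\rho)=\widehat S_{\widetilde f}(\sigma+\eps\rho\|\rho)$, approximates by $\widehat S_{\widetilde f}(\sigma+\eps_i\rho+\delta\eta\|\rho+\delta\eta)$ with $\delta\searrow0$, writes $h_{\sigma_2}^{1/2}=Ah_\omega^{1/2}$ and $h_{\sigma_1}^{1/2}=BAh_\omega^{1/2}$ with a contraction $B$, and uses that $\widetilde f(t)=\int_{(0,+\infty)}(1+st)^{-1}\,d\nu(s)$ is operator monotone decreasing, so $A^*B^*BA\le A^*A$ gives $\widetilde f(A^*B^*BA)\ge\widetilde f(A^*A)$. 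You instead recycle formulas \eqref{F-4.3}--\eqref{F-4.4} from the proof of Proposition \ref{P-4.1} (legitimately available for each $\eps>0$ since $\rho\le\eps^{-1}(\sigma+\eps\rho)$ and $f(0^+)<+\infty$), observe that the representing measure $\nu$ there depends only on $\eta=\rho+\sigma$ and not on $\eps$, and check by elementary calculus, through the representation \eqref{F-5.3}, that the scalar integrand $(1-t+\eps t)f\bigl({t\over1-t+\eps t}\bigr)=\int_{(0,+\infty)}t^2/(t+s(1-t)+\eps st)\,d\nu_f(s)$ is pointwise decreasing in $\eps$. This buys a more elementary argument (no transpose symmetry, no auxiliary $\delta$-limit, no operator monotonicity of $\widetilde f$, no appeal to Lemma \ref{L-2.1} beyond what is already in Proposition \ref{P-4.1}), at the cost of leaning on the internal computations \eqref{F-4.2}--\eqref{F-4.4} rather than only on stated results; both routes ultimately exploit exactly the special form \eqref{F-5.3} of $f$, the paper through operator monotone decreasingness of $\widetilde f$, you through the manifest $\eps$-monotonicity of the integrand.
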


\begin{proof}
Since $f(0^+)<+\infty$, the convergence is in Proposition \ref{P-4.1}\,(1). So we need to show
that $0<\eps\mapsto\widehat S_f(\rho\|\sigma+\eps\rho)$ is decreasing. Let $\eta:=\rho+\sigma$.
By Proposition \ref{P-2.10} we have 
$$
\widehat S_f(\rho\|\sigma+\eps\rho)=\widehat S_{\widetilde f}(\sigma+\eps\rho\|\rho)
=\lim_{\delta\searrow0}\widehat S_{\widetilde f}(\sigma+\eps\rho+\delta\eta\|\rho+\delta\eta).
$$
Hence it suffices to show that for $0<\eps_1<\eps_2$ and $\delta>0$,
$$
\widehat S_{\widetilde f}(\sigma+\eps_1\rho+\delta\eta\|\rho+\delta\eta)
\ge\widehat S_{\widetilde f}(\sigma+\eps_2\rho+\delta\eta\|\rho+\delta\eta).
$$
Set $\sigma_i:=\sigma+\eps_i\rho+\delta\eta$ and $\omega:=\rho+\delta\eta$; then
$\sigma_1\le\sigma_2$ and $\sigma_1\sim\sigma_2\sim\omega$. Let $e:=s(\omega)$. By Lemma
\ref{L-2.1} there is an $A\in eMe$ such that $h_{\sigma_2}^{1/2}=Ah_\omega^{1/2}$. Also there
is a $B\in eMe$ such that $\|B\|\le1$ and $h_{\sigma_1}^{1/2}=Bh_{\sigma_2}^{1/2}$. Since
$h_{\sigma_1}^{1/2}=BAh_\omega^{1/2}$, one has
$$
\widehat S_{\widetilde f}(\sigma_1\|\omega)=\omega(\widetilde f(A^*B^*BA)),\qquad
\widehat S_{\widetilde f}(\sigma_2\|\omega)=\omega(\widetilde f(A^*A)).
$$
From \eqref{F-5.3} the function $\widetilde f$ is expressed as
$$
\widetilde f(t)=tf(t^{-1})=\int_{(0,+\infty)}{1\over1+st}\,d\nu(s),\qquad t\in[0,+\infty),
$$
which is operator monotone decreasing on $[0,+\infty)$. Since $A^*B^*BA\le A^*A$, it follows
that $f(A^*B^*BA)\ge f(A^*A)$ and hence $\widehat S_{\widetilde f}(\sigma_1\|\omega)\ge
\widehat S_{\widetilde f}(\sigma_2\|\omega)$, as desired.
\end{proof}

\begin{lemma}\label{L-5.4}
Let $T,T_n$ ($n\in\bN$) be positive bounded linear operators on a Hilbert space $\cH$. Assume
that $\sup_n\|T_n\|<+\infty$ and $T_n\to T$ in the weak operator topology. Then for any
operator convex function $f$ on $[0,+\infty)$ with $f(0)=0$ and any $\xi\in\cH$,
$$
\<\xi,f(T)\xi\>\le\liminf_{n\to\infty}\<\xi,f(T_n)\xi\>.
$$
\end{lemma}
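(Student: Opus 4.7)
My plan is to use the integral representation \eqref{F-4.8} to reduce the problem to lower semicontinuity of each elementary summand. Since $f$ is operator convex on $[0,+\infty)$ with $f(0)=0$, in particular $f(0^+)<+\infty$, which forces the coefficient $d$ in \eqref{F-4.8} to vanish, the measure $\mu$ to carry no atom at $0$, and $\int_{(0,+\infty)}s^{-1}\,d\mu(s)<+\infty$; thus
\[
f(t)=a+b(t-1)+c(t-1)^2+\int_{(0,+\infty)}\frac{(t-1)^2}{t+s}\,d\mu(s),\qquad t\in[0,+\infty),
\]
with $a,b\in\bR$ and $c\ge0$. Passing to the operator level by functional calculus,
\[
\<\xi,f(T)\xi\>=a\|\xi\|^2+b\<\xi,(T-I)\xi\>+c\|(T-I)\xi\|^2+\int_{(0,+\infty)}\<\xi,(T-I)^2(T+s)^{-1}\xi\>\,d\mu(s),
\]
and analogously for $T_n$. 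With $K:=\sup_n\|T_n\|<+\infty$, it suffices to check WOT-lower semicontinuity of each of the four summands on $\{\|T\|\le K\}$.

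The constant and affine-in-$T$ pieces are WOT-continuous since $T\mapsto\<\xi,T\xi\>$ is. For $c\|(T-I)\xi\|^2$, note $T_n\to T$ in WOT yields $(T_n-I)\xi\to(T-I)\xi$ weakly, and the weak lower semicontinuity of the Hilbert-space norm (together with $c\ge0$) gives the desired bound.

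For the kernel $(t-1)^2/(t+s)$ with $s>0$ I would exploit the algebraic identity
\[
\frac{(t-1)^2}{t+s}=t-(s+2)+\frac{(s+1)^2}{t+s}
\]
to rewrite $\<\xi,(T-I)^2(T+s)^{-1}\xi\>=\<\xi,T\xi\>-(s+2)\|\xi\|^2+(s+1)^2\<\xi,(T+s)^{-1}\xi\>$, and then apply the variational characterisation
\[
\<\xi,(T+s)^{-1}\xi\>=\sup_{\eta\in\cH}\bigl\{2\Re\<\eta,\xi\>-\<\eta,(T+s)\eta\>\bigr\},
\]
valid because $T+s$ is positive invertible (the supremum being attained at $\eta=(T+s)^{-1}\xi$). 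For each fixed $\eta$ the map $T\mapsto2\Re\<\eta,\xi\>-\<\eta,(T+s)\eta\>$ is WOT-continuous, hence the supremum is WOT-lower semicontinuous; combined with the identity this shows that $T\mapsto\<\xi,(T-I)^2(T+s)^{-1}\xi\>$ is WOT-lower semicontinuous for every $s>0$.

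The last step is to pass $\liminf_n$ through the $\mu$-integral. Each integrand is nonnegative since $(t-1)^2/(t+s)\ge0$ makes $(T_n-I)^2(T_n+s)^{-1}$ a positive operator by functional calculus, so Fatou's lemma on $(0,+\infty)$ with respect to $\mu$ yields
\[
\int_{(0,+\infty)}\<\xi,(T-I)^2(T+s)^{-1}\xi\>\,d\mu(s)\le\liminf_n\int_{(0,+\infty)}\<\xi,(T_n-I)^2(T_n+s)^{-1}\xi\>\,d\mu(s).
\]
Adding the four termwise inequalities via the superadditivity of $\liminf$ (with the first two summands actually converging, which ensures no indeterminacy in the sum) gives $\<\xi,f(T)\xi\>\le\liminf_n\<\xi,f(T_n)\xi\>$. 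I expect the main obstacle to be the fourth summand: deriving the WOT-lower semicontinuity of $\<\xi,(T+s)^{-1}\xi\>$ via the variational formula and then justifying Fatou for the $\mu$-integral; once these are set up, the remainder assembles routinely.
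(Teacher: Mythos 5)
Your proof is correct, but it takes a genuinely different route from the paper. The paper compresses to a net of finite-rank projections $E_\alpha\nearrow I$, applies Choi's operator Jensen inequality to get $f(E_\alpha T_nE_\alpha)\le E_\alpha f(T_n)E_\alpha$ (this is where operator convexity and $f(0)=0$ enter), uses that WOT convergence becomes norm convergence after finite-rank compression, invokes the SOT-continuity of continuous functional calculus for $E_\alpha TE_\alpha\to T$, and finally exchanges the limits in $\alpha$ and $n$ via the uniform bound $\sup_n\|f(T_n)\|<+\infty$. You instead feed the operator convexity in through the L\"owner-type integral representation \eqref{F-4.8}: the hypothesis $f(0)=f(0^+)<+\infty$ correctly kills the $d$-term and forces $\int_{(0,+\infty)}s^{-1}\,d\mu(s)<+\infty$, after which the statement reduces to WOT-lower semicontinuity of an affine part, of $c\|(T-I)\xi\|^2$ (weak lsc of the norm), and of the resolvent term $\<\xi,(T+s)^{-1}\xi\>$, which you handle by the variational formula exhibiting it as a supremum of WOT-continuous affine functionals; Fatou then passes the $\liminf$ through the $\mu$-integral, and your algebraic identity $\frac{(t-1)^2}{t+s}=t-(s+2)+\frac{(s+1)^2}{t+s}$ checks out. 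Your approach is more hands-on and avoids Choi's theorem and the projection net entirely; it even shows that the uniform bound $\sup_n\|T_n\|<+\infty$ is not really needed, since each term of the decomposition is finite for each fixed bounded $T_n$. The paper's argument, by contrast, is representation-free and works directly from the operator Jensen inequality. Two small points you should make explicit for a complete write-up: the interchange of the $\mu$-integral with the spectral integral of $T$ (Tonelli, justified since the kernel is nonnegative and $\int\min(1,s^{-1})\,d\mu(s)<+\infty$), and the fact that the representation \eqref{F-4.8}, stated in the paper on $(0,+\infty)$, extends to $t=0$ once $d=0$ and $s^{-1}$ is $\mu$-integrable (dominated convergence as $t\searrow0$), so that it may legitimately be plugged into the functional calculus of operators whose spectrum contains $0$.
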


\begin{proof}
Let $\{E_\alpha\}$ be a net of finite-dimensional orthogonal projections on $\cH$ such that
$E_\alpha\nearrow I$. Since $\|E_\alpha T_nE_\alpha-E_\alpha TE_\alpha\|\to0$ as $n\to\infty$,
$\|f(E_\alpha T_nE_\alpha)-f(E_\alpha TE_\alpha)\|\to0$ as $n\to\infty$ for every $\alpha$. By
\cite[Theorem 2.5]{Ch} applied to the map $\Phi(A)=E_\alpha AE_\alpha+(I-E_\alpha)A(I-E_\alpha)$
on $\cB(\cH)$, we have
\begin{align}\label{F-5.4}
f(E_\alpha T_nE_\alpha)\le E_\alpha f(T_n)E_\alpha.
\end{align}
Therefore,
\begin{align}\label{F-5.5}
\<\xi,f(E_\alpha TE_\alpha)\xi\>&=\lim_{n\to\infty}\<\xi,f(E_\alpha T_nE_\alpha)\xi\>
\le\liminf_{n\to\infty}\<\xi,E_\alpha f(T_n)E_\alpha\xi\>.
\end{align}
Since $E_\alpha TE_\alpha\to T$ in the strong operator topology and the continuous functional
calculus is continuous with respect to the strong operator topology (see, e.g.,
\cite[Theorem A.2]{St}), the left-hand side of \eqref{F-5.5} converges to $\<\xi,f(T)\xi\>$. On
the other hand, since $K:=\sup_n\|f(T_n)\|<+\infty$, note that
\begin{align}
&|\<E_\alpha\xi,f(T_n)E_\alpha\xi\>-\<\xi,f(T_n)\xi\>| \nonumber\\
&\quad\le|\<E_\alpha\xi-\xi,f(T_n)E_\alpha\xi\>|+|\<\xi,f(T_n)(E_\alpha\xi-\xi)\>|
\le2K\|\xi\|\cdot\|E_\alpha\xi-\xi\|, \label{F-5.6}
\end{align}
so $\<E_\alpha\xi,f(T_n)E_\alpha\xi\>$ converges to $\<\xi,f(T_n)\xi\>$ as $\alpha\to``\infty"$ uniformly for
$n$. This implies that the right-hand side of \eqref{F-5.5} converges to
$\liminf_{n\to\infty}\<\xi,f(T_n)\xi\>$ as $\alpha\to``\infty"$. Hence the result follows.
\end{proof}

We are now in a position to prove the joint lower semicontinuity.

\begin{thm}\label{T-5.5}
The function $(\rho,\sigma)\in M_*^+\times M_*^+\mapsto\widehat S_f(\rho\|\sigma)$ is jointly
lower semicontinuous in the norm topology.
\end{thm}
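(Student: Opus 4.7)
The plan is to carry out a chain of reductions on $f$ and on the arguments $(\rho,\sigma)$ so that the problem collapses to a weak-operator-topology argument for the bounded operators $T_{\rho/\omega}$ produced by Lemma~\ref{L-3.1}. First, since a pointwise supremum of lower semicontinuous functions is lower semicontinuous, Lemma~\ref{L-5.2} allows me to assume $f(0^+)<+\infty$ and $f'(+\infty)<+\infty$; the integral expression \eqref{F-5.2} and the norm-continuity of the affine part (see \eqref{F-2.6}) let me further reduce $f$ to the form \eqref{F-5.3}, which in particular has $f(0)=0$. A second application of the ``supremum of LSC is LSC'' principle to Lemma~\ref{L-5.3} then reduces the task to showing, for each fixed $\eps>0$, that $(\rho,\sigma)\mapsto\widehat S_f(\rho\|\sigma+\eps\rho)$ is lower semicontinuous in the norm topology of $M_*\times M_*$.

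Fix $\eps>0$ and set $\omega:=\sigma+\eps\rho$; since $\rho\le\eps^{-1}\omega$, Lemma~\ref{L-3.3} gives $\widehat S_f(\rho\|\omega)=\<\xi,f(T_{\rho/\omega})\xi\>$, where $\xi:=h_\omega^{1/2}$ and $T_{\rho/\omega}\in M'_+$ satisfies $\|T_{\rho/\omega}\|\le\eps^{-1}$. Given $(\rho_n,\sigma_n)\to(\rho,\sigma)$ in norm, put $\omega_n:=\sigma_n+\eps\rho_n$, $\xi_n:=h_{\omega_n}^{1/2}$, $T_n:=T_{\rho_n/\omega_n}$. The Powers--Stormer inequality $\|\xi_n-\xi\|_2^2\le\|h_{\omega_n}-h_\omega\|_1$ yields $\xi_n\to\xi$ in $L^2(M)$, while $\|T_n\|\le\eps^{-1}$ uniformly. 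Passing to a subsequence that realises $\liminf_n\widehat S_f(\rho_n\|\omega_n)$, and, if needed, restricting attention to a separable countably generated subspace of $L^2(M)$ containing $\xi$ and the $\xi_n$, Banach--Alaoglu supplies a further subsequence along which $T_n\to T'$ in the weak operator topology for some $T'\in M'_+$ with $\|T'\|\le\eps^{-1}$. Passing to the limit in $\rho_n(x)=\<\xi_n,xT_n\xi_n\>$, valid by Lemma~\ref{L-3.1} since $T_n\in M'$, using $\rho_n(x)\to\rho(x)$, $\xi_n\to\xi$ in norm, and the uniform bound on $T_n$, I obtain $\rho(x)=\<\xi,xT'\xi\>$ for every $x\in M$; hence the uniqueness in Lemma~\ref{L-3.1} forces $e'T'e'=T_{\rho/\omega}$, where $e'$ is the $M'$-support of $\omega$ and $\xi=e'\xi$.

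The Hansen--Pedersen--Jensen inequality, applicable because $f$ is operator convex with $f(0)=0$ and $e'$ is a projection (hence a contraction), gives $f(e'T'e')\le e'f(T')e'$, so that
\begin{align*}
\<\xi,f(T_{\rho/\omega})\xi\>=\<\xi,f(e'T'e')\xi\>\le\<\xi,f(T')\xi\>.
\end{align*}
Lemma~\ref{L-5.4} applied to the WOT-convergent sequence $T_n\to T'$ and the fixed vector $\xi$ then yields $\<\xi,f(T')\xi\>\le\liminf_n\<\xi,f(T_n)\xi\>$, and the difference $|\<\xi_n,f(T_n)\xi_n\>-\<\xi,f(T_n)\xi\>|$ tends to zero because $\|f(T_n)\|\le\sup_{t\in[0,\eps^{-1}]}|f(t)|<+\infty$ and $\|\xi_n-\xi\|_2\to0$. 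Chaining these inequalities gives the desired LSC bound. The main obstacle is the identification step: the weak limit $T'$ is pinned down by Lemma~\ref{L-3.1} only on $e'L^2(M)=\overline{M\xi}$, with no control on its $(I-e')$-part, and the argument succeeds precisely because the reduction to \eqref{F-5.3} ensures $f(0)=0$, which is the hypothesis needed for Hansen--Pedersen to bridge the gap between $f(T_{\rho/\omega})=f(e'T'e')$ and $f(T')$.
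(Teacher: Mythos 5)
Your strategy is sound and, in most respects, coincides with the paper's own proof: the same reduction via Lemma \ref{L-5.2} to functions with $f(0^+),f'(+\infty)<+\infty$ (note that \eqref{F-5.2} also needs $\lim_{t\searrow0}f'(t)>-\infty$, which the $f_n$ do satisfy) and then to the form \eqref{F-5.3}, the same reduction via Lemma \ref{L-5.3} to the fixed-$\eps$ functional $(\rho,\sigma)\mapsto\widehat S_f(\rho\|\sigma+\eps\rho)$, the same Powers--St\o rmer estimate for $\xi_n\to\xi$, the same combination of Lemma \ref{L-5.4} with the Jensen compression inequality $f(e'T'e')\le e'f(T')e'$ (legitimate because $f(0)=0$), and the same final swap of $\xi_n$ for $\xi$ using the uniform bound on $\|f(T_n)\|$. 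The one genuinely different ingredient is how the weak-operator limit is produced. The paper works with $A_n^*A_n\in M$ (your $T_n=T_{\rho_n/\omega_n}$ is its mirror $JA_n^*A_nJ$ in $M'$) and proves \emph{directly}, by an explicit estimate based on $A_n(h_{\omega_n}^{1/2}x)=h_{\rho_n}^{1/2}x$ and the $L^2$-convergences, that the compressions $eA_n^*A_ne$ converge in the weak operator topology to $A^*A$ along the full sequence; no cluster-point extraction and no identification step are needed. You instead extract a WOT cluster point $T'$ by compactness and pin down $e'T'e'=T_{\rho/\omega}$ through the uniqueness clause of Lemma \ref{L-3.1}; that identification, and the subsequent Jensen and Lemma \ref{L-5.4} steps, are correct.

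The soft spot is the extraction itself. When $L^2(M)$ is nonseparable, Banach--Alaoglu (WOT compactness of the ball) yields only a sub\emph{net}, and your fallback device of compressing to a separable subspace containing $\xi$ and the $\xi_n$ does not deliver what you need: the identity $\rho(x)=\langle\xi,xT'\xi\rangle$ must be tested against \emph{every} $x\in M$, i.e.\ you need convergence of $\langle x^*\xi,T_n\xi\rangle$ for all $x\in M$, and the vectors $x^*\xi$ do not lie in any separable subspace chosen in advance; moreover the compression of $T_n$ to such a subspace is no longer in $M'$, and Lemma \ref{L-5.4} is invoked for operators on the whole space. The repair is routine: $\{T\in M'_+:\|T\|\le\eps^{-1}\}$ is WOT-compact, so after passing to a subsequence realizing the liminf you take a WOT-convergent subnet; the limit $T'$ then lies in $M'_+$ automatically, your identification goes through verbatim, and the proof of Lemma \ref{L-5.4} works for nets without change. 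With that adjustment your argument is complete; the paper's direct identification of the WOT limit is precisely what allows it to stay with plain sequences and avoid any compactness argument.
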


\begin{proof}
By the argument above Lemma \ref{L-5.3}, we may assume that $f$ is given in \eqref{F-5.3}. By
Lemma \ref{L-5.3} it suffices to prove that
$(\rho,\sigma)\in M_*^+\times M_*^+\mapsto \widehat S_f(\rho\|\sigma+\eps\rho)$ is continuous
in the norm topology for any $\eps>0$. Let $\rho_n,\rho,\sigma_n,\sigma\in M_*^+$, $n\in\bN$,
be such that $\|\rho_n-\rho\|\to0$ and $\|\sigma_n-\sigma\|\to0$. Let
$\eta_n:=\sigma_n+\eps\rho_n$ and $\eta:=\sigma+\eps\rho$; then $\rho_n\le\eps^{-1}\eta_n$,
$\rho\le\eps^{-1}\eta$ and $\|\eta_n-\eta\|\to0$. By Lemma \ref{L-2.1} we have
$A\in s(\eta)Ms(\eta)$ and $A_n\in s(\eta_n)Ms(\eta_n)$ such that $h_\rho^{1/2}=Ah_\eta^{1/2}$
and $h_{\rho_n}^{1/2}=A_nh_{\eta_n}^{1/2}$ and $\|A\|,\|A_n\|\le\eps^{-1/2}$. Let $x,y\in M$.
Note that
$$
\|h_{\eta_n}^{1/2}x-h_\eta^{1/2}x\|_2
\le\|h_{\eta_n}^{1/2}-h_\eta^{1/2}\|_2\|x\|\le\|\eta_n-\eta\|^{1/2}\|x\|
$$
thanks to \cite[Lemma 2.10\,(2)]{Ha} and similarly
$\|h_{\rho_n}^{1/2}x-h_\rho^{1/2}x\|_2\le\|\rho_n-\rho\|^{1/2}\|x\|$. Hence,
$$
\<h_{\rho_n}^{1/2}x,h_{\rho_n}^{1/2}y\>\ \longrightarrow\ \<h_\rho^{1/2}x,h_\rho^{1/2}y\>
\quad\mbox{as $n\to\infty$}.
$$
Since $A(h_\eta^{1/2}x)=h_\rho^{1/2}x$ and $A_n(h_{\eta_n}^{1/2}x)=h_{\rho_n}^{1/2}x$, one can
estimate
\begin{align*}
&|\<h_\eta^{1/2}x,A_n^*A_n(h_\eta^{1/2}y)\>-\<h_\eta^{1/2}x,A^*A(h_\eta^{1/2}y)\>| \\
&\quad\le|\<h_\eta^{1/2}x-h_{\eta_n}^{1/2}x,A_n^*A_n(h_\eta^{1/2}y)\>|
+|\<h_{\eta_n}^{1/2}x,A_n^*A_n(h_\eta^{1/2}y-h_{\eta_n}^{1/2}y)\>| \\
&\quad\quad
+|\<h_{\eta_n}^{1/2}x,A_n^*A_n(h_{\eta_n}^{1/2}y)\>-\<h_\eta^{1/2}x,A^*A(h_\eta^{1/2}y)\>| \\
&\quad\le\|h_\eta^{1/2}x-h_{\eta_n}^{1/2}x\|_2\|A_n^*A_n(h_\eta^{1/2}y)\|_2
+\|h_{\eta_n}^{1/2}x\|_2\|A_n^*A_n(h_\eta^{1/2}y-h_{\eta_n}^{1/2}y)\|_2 \\
&\quad\quad
+|\<A_n(h_{\eta_n}^{1/2}x),A_n(h_{\eta_n}^{1/2}y)\>-\<A(h_\eta^{1/2}x),A(h_\eta^{1/2}y)\>| \\
&\quad\le\eps^{-1}\|\eta_n-\eta\|^{1/2}(\|\eta\|^{1/2}+\|\eta_n\|^{1/2})\|x\|\,\|y\|
+|\<h_{\rho_n}^{1/2}x,h_{\rho_n}^{1/2}y\>-\<h_\rho^{1/2}x,h_\rho^{1/2}y\>| \\
&\quad\longrightarrow\ 0\quad\mbox{as $n\to\infty$}.
\end{align*}
Since $e:=s(\eta)$ is the projection onto $\overline{M'h_\eta^{1/2}}=\overline{h_\eta^{1/2}M}$,
the above estimate implies that $eA_n^*A_ne\to A^*A$ in the weak operator topology. Therefore,
by Lemma \ref{L-5.4} one has
\begin{align}
\widehat S_f(\rho\|\eta)&=\<h_\eta^{1/2},f(A^*A)h_\eta^{1/2}\>
\le\liminf_{n\to\infty}\<h_\eta^{1/2},f(eA_n^*A_ne)h_\eta^{1/2}\> \nonumber\\
&\le\liminf_{n\to\infty}\<h_\eta^{1/2},f(A_n^*A_n)h_\eta^{1/2}\>, \label{F-5.7}
\end{align}
where the last inequality follows from $f(eT_ne)\le ef(T_n)e$ similarly to \eqref{F-5.4}.
Moreover, since $\sup_{n\ge1}\|f(A_n^*A_n)\|<+\infty$, it follows as in \eqref{F-5.6} that
$$
|\<h_{\eta_n}^{1/2},f(A_n^*A_n)h_{\eta_n}^{1/2}\>
-\<h_\eta^{1/2},f(A_n^*A_n)h_\eta^{1/2}\>|\ \longrightarrow\ 0
\quad\mbox{as $n\to\infty$},
$$
which implies that
\begin{align}\label{F-5.8}
\liminf_{n\to\infty}\widehat S_f(\rho_n\|\eta_n)
=\liminf_{n\to\infty}\<h_{\eta_n}^{1/2},f(A_n^*A_n)h_{\eta_n}^{1/2}\>
=\liminf_{n\to\infty}\<h_\eta^{1/2},f(A_n^*A_n)h_\eta^{1/2}\>.
\end{align}
Hence $\widehat S_f(\rho\|\eta)\le\liminf_{n\to\infty}\widehat S_f(\rho_n\|\eta_n)$ follows
from \eqref{F-5.7} and \eqref{F-5.8}.
\end{proof}

In the rest of this section we establish the martingale convergence for
$\widehat S_f(\rho\|\sigma)$. Let $\rho,\sigma\in M_*^+$ and $\eta:=\rho+\sigma$. Below it
will be convenient to work with the \emph{cyclic representation} $(\cH_\eta,\pi_\eta,\xi_\eta)$
of $M$ associated with $\eta$, rather than the standard representation of $M$, that is,
$\pi_\eta$ is a representation of $M$ on a Hilbert space $\cH_\eta$ with a cyclic vector
$\xi_\eta$ for $\pi_\eta(M)$, i.e., $\cH_\eta=\overline{\pi_\eta(M)\xi_\eta}$, such that
$\eta(x)=\<\xi_\eta,\pi_\eta(x)\xi_\eta\>$, $x\in M$. Then there exists a unique
$T=T(\rho/\eta)\in\pi_\eta(M)'_+$ such that
\begin{align}\label{F-5.9}
\rho(x)=\<\xi_\eta,T\pi_\eta(x)\xi_\eta\>,\qquad x\in M.
\end{align}
See, e.g., \cite[Theorems 2.3.16, 2.3.19]{BR}. Note that $T$ is independent of the choice of
the cyclic representation up to unitary conjugation. That is, let $(\hat\cH,\hat\pi,\hat\xi)$
be another cyclic representation of $M$ associated with $\eta$. There is a unitary
$U:\cH_\eta\to\hat\cH$ such that $\hat\xi=U\xi_\eta$ and $\hat\pi(x)=U\pi_\eta(x)U^*$,
$x\in M$; then $\hat T\in\hat\pi(M)'_+$ as in \eqref{F-5.9} for $(\hat\cH,\hat\pi,\hat\xi)$ is
$\hat T=UT(\rho/\eta)U^*$. A particular choice of $(\cH_\eta,\pi_\eta,\xi_\eta)$ is taken as
$\cH_\eta:=\overline{Mh_\eta^{1/2}}=L^2(M)s(\eta)$, $\pi_\eta(x)$ is the left multiplication of
$x$ on $\cH_\eta\subset L^2(M)$, and $\xi_\eta=h_\eta^{1/2}$. In this case, $T=T(\rho/\eta)$ in
\eqref{F-5.9} coincides with $T=T_{\rho/\eta}|_{\cH\eta}$, where $T_{\rho/\eta}\in M'_+$ is
given in Lemma \ref{L-3.1}. Since $h_\eta^{1/2}\in\cH_\eta$, the formula in \eqref{F-4.9} holds
as well when $T_{\rho/\eta}$ is replaced with $T_{\rho/\eta}|_{\cH_\eta}$. Therefore,
\eqref{F-4.9} is rewritten as
\begin{align}\label{F-5.10}
\widehat S_f(\rho\|\sigma)=\int_0^1(1-t)f\biggl({t\over1-t}\biggr)\,d\|E(t)\xi_\eta\|^2,
\end{align}
for any cyclic representation $(\cH_\eta,\pi_\eta,\xi_\eta)$ of $M$ associated with
$\eta=\rho+\sigma$ and the spectral decomposition $T(\rho/\eta)=\int_0^1t\,dE(t)$.

Our martingale convergence theorem is

\begin{thm}\label{T-5.6}
Let $\{M_\alpha\}$ be an increasing net of unital von Neumann subalgebras of $M$ such that
$\bigl(\bigcup_\alpha M_\alpha\bigr)''=M$. Then for every $\rho,\sigma\in M_*^+$,
$$
\widehat S_f(\rho|_{M_\alpha}\|\sigma|_{M_\alpha})\ \nearrow\ \widehat S_f(\rho\|\sigma).
$$
\end{thm}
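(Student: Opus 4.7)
My approach is to transport everything into the cyclic representation of $M$ associated with $\eta:=\rho+\sigma$ and use the integral formula \eqref{F-5.10}. Fix this representation $(\cH_\eta,\pi_\eta,\xi_\eta)$ and let $T:=T(\rho/\eta)\in\pi_\eta(M)'_+$. For each $\alpha$, take $\cH_{\eta,\alpha}:=\overline{\pi_\eta(M_\alpha)\xi_\eta}$ and the orthogonal projection $P_\alpha$ onto it. Then $(\cH_{\eta,\alpha},\pi_\eta|_{M_\alpha},\xi_\eta)$ is a cyclic representation of $M_\alpha$ associated with $\eta|_{M_\alpha}$. Moreover $P_\alpha$ commutes with $\pi_\eta(M_\alpha)$, and since $T\in\pi_\eta(M)'\subset\pi_\eta(M_\alpha)'$, the operator $T_\alpha:=P_\alpha T P_\alpha|_{\cH_{\eta,\alpha}}$ lies in $(\pi_\eta|_{M_\alpha}(M_\alpha))'_+$ and satisfies $\langle\xi_\eta,T_\alpha\pi_\eta(x)\xi_\eta\rangle=\rho(x)$ for $x\in M_\alpha$. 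By the uniqueness built into \eqref{F-5.9}, $T_\alpha=T(\rho|_{M_\alpha}/\eta|_{M_\alpha})$. Finally, because $\bigcup_\alpha M_\alpha$ is strongly dense in $M$ and the GNS representation $\pi_\eta$ is normal, $\bigcup_\alpha\pi_\eta(M_\alpha)\xi_\eta$ is dense in $\cH_\eta$, so $P_\alpha\nearrow I$ strongly, whence $P_\alpha TP_\alpha\to T$ strongly (as $T$ is bounded).

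The increasing property is immediate from the monotonicity Theorem \ref{T-2.9} applied to the inclusions $M_\alpha\hookrightarrow M_\beta\hookrightarrow M$ (unital, positive, normal), so the limit $L:=\sup_\alpha\widehat S_f(\rho|_{M_\alpha}\|\sigma|_{M_\alpha})$ exists and satisfies $L\le\widehat S_f(\rho\|\sigma)$. To establish $L\ge\widehat S_f(\rho\|\sigma)$, I first handle the case $f(0^+)<+\infty$ and $f'(+\infty)<+\infty$, so that $\phi(t):=(1-t)f(t/(1-t))$ extends continuously to $[0,1]$. In this case, using the block decomposition $P_\alpha TP_\alpha=T_\alpha\oplus0$ on $\cH_{\eta,\alpha}\oplus(\cH_\eta\ominus\cH_{\eta,\alpha})$ and $\xi_\eta\in\cH_{\eta,\alpha}$, the formula \eqref{F-5.10} applied to $M_\alpha$ gives
\begin{equation*}
\widehat S_f(\rho|_{M_\alpha}\|\sigma|_{M_\alpha})
=\langle\xi_\eta,\phi(T_\alpha)\xi_\eta\rangle
=\langle\xi_\eta,\phi(P_\alpha TP_\alpha)\xi_\eta\rangle.
\end{equation*}
Since $P_\alpha TP_\alpha\to T$ strongly with all operators bounded in $[0,1]$, the continuous functional calculus (uniform approximation of $\phi$ by polynomials on $[0,1]$) yields $\phi(P_\alpha TP_\alpha)\to\phi(T)$ strongly, hence $\widehat S_f(\rho|_{M_\alpha}\|\sigma|_{M_\alpha})\to\langle\xi_\eta,\phi(T)\xi_\eta\rangle=\widehat S_f(\rho\|\sigma)$.

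For a general operator convex $f$, I use the approximants $f_n$ from \eqref{F-5.1}, which satisfy $f_n(0^+)<+\infty$ and $f_n'(+\infty)<+\infty$, together with Lemma \ref{L-5.2} giving $\widehat S_{f_n}\nearrow\widehat S_f$ both for the pair $(\rho,\sigma)$ and for the pairs $(\rho|_{M_\alpha},\sigma|_{M_\alpha})$. Applying the case already proved to each $f_n$,
\begin{equation*}
\widehat S_{f_n}(\rho\|\sigma)=\sup_\alpha\widehat S_{f_n}(\rho|_{M_\alpha}\|\sigma|_{M_\alpha})\le\sup_\alpha\widehat S_f(\rho|_{M_\alpha}\|\sigma|_{M_\alpha})=L,
\end{equation*}
and taking the supremum over $n$ yields $\widehat S_f(\rho\|\sigma)\le L$, completing the proof.

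\textbf{Main obstacle.} The principal technical point is the identification $T(\rho|_{M_\alpha}/\eta|_{M_\alpha})=P_\alpha T(\rho/\eta)P_\alpha|_{\cH_{\eta,\alpha}}$, which requires verifying that the compression lies in the commutant of $\pi_\eta|_{M_\alpha}(M_\alpha)$ on $\cH_{\eta,\alpha}$ and represents $\rho|_{M_\alpha}$; once this is in hand, everything reduces to the strong-operator limit $P_\alpha TP_\alpha\to T$ and continuous functional calculus on $[0,1]$, with the reduction via the $f_n$ handled cleanly by Lemma \ref{L-5.2} and simple interchange of suprema.
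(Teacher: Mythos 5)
Your proposal is correct and follows essentially the same route as the paper: monotonicity for the increasing part, identification of $T(\rho|_{M_\alpha}/\eta|_{M_\alpha})$ with the compression $P_\alpha T P_\alpha|_{\cH_{\eta,\alpha}}$ in the cyclic representation, strong convergence $P_\alpha TP_\alpha\to T$ plus continuous functional calculus, and the cut-off approximants $f_n$ of \eqref{F-5.1} with Lemma \ref{L-5.2} and an exchange of suprema. The only cosmetic difference is that you first state the convergence for $f$ with $f(0^+),f'(+\infty)<+\infty$ and then apply it to the $f_n$, whereas the paper works with the continuous functions $k_n(t)=(1-t)f_n(t/(1-t))$ directly; the substance is identical.
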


\begin{proof}
Let $\eta:=\rho+\sigma$, $\rho_\alpha:=\rho|_{M_\alpha}$, $\sigma_\alpha:=\sigma|_{M_\alpha}$
and $\eta_\alpha:=\eta|_{M_\alpha}$. From the monotonicity property in Theorem \ref{T-2.9}
(applied to injections $M_\alpha\hookrightarrow M_\beta\hookrightarrow M$ for $\alpha\le\beta$)
we see that $\widehat S_f(\rho_\alpha\|\sigma_\alpha)$ is increasing and
$\widehat S_f(\rho_\alpha\|\sigma_\alpha)\le\widehat S_f(\rho\|\sigma)$. Hence it suffices to
show that $\widehat S_f(\rho\|\sigma)\le\sup_\alpha\widehat S_f(\rho_\alpha\|\sigma_\alpha)$.
Choose a cyclic representation $(\cH_\eta,\pi_\eta,\xi_\eta)$ of $M$ and $T=T(\rho/\eta)$
associated with $\eta=\rho+\sigma$. Let $\cH_\alpha:=\overline{\pi_\eta(M_\alpha)\xi_\eta}$
and $P_\alpha$ be the orthogonal projection from $\cH_\eta$ onto $\cH_\alpha$. Since
$$
P_\alpha\pi_\eta(x)\pi_\eta(y)\xi_\eta
=\pi_\eta(x)P_\alpha\pi_\eta(y)\xi_\eta,\qquad x,y\in M_\alpha,
$$
one has $P_\alpha\pi_\eta(x)P_\alpha=\pi_\eta(x)P_\alpha$ for any $x\in M_\alpha$, and hence
$P_\alpha\in\pi_\eta(M_\alpha)'$. So one can define a representation $\pi_\alpha$ of
$M_\alpha$ on $\cH_\alpha$ by
$$
\pi_\alpha(x):=\pi_\eta(x)|_{\cH_\alpha},\qquad x\in M_\alpha.
$$
Since $\overline{\pi_\alpha(M_\alpha)\xi_\eta}=\cH_\alpha$ and
$$
\<\xi_\eta,\pi_\alpha(x)\xi_\eta\>=\<\xi_\eta,\pi_\eta(x)\xi_\eta\>
=\eta(x)=\eta_\alpha(x),\qquad x\in M_\alpha,
$$
it follows that $(\cH_\alpha,\pi_\alpha,\xi_\eta)$ is the cyclic representation of $M_\alpha$
associated with $\eta_\alpha=\rho_\alpha+\sigma_\alpha$. Since
$$
\pi_\alpha(M_\alpha)'=(\pi_\eta(M_\alpha)|_{\cH_\alpha})'
=P_\alpha\pi_\eta(M_\alpha)'P_\alpha|_{\cH_\alpha}
\supset P_\alpha\pi_\eta(M)'P_\alpha|_{\cH_\alpha},
$$
one has $P_\alpha TP_\alpha|_{\cH_\alpha}\in\pi_\alpha(M_\alpha)_+'$. Moreover, since
$$
\<\xi_\eta,P_\alpha TP_\alpha\pi_\alpha(x)\xi_\eta\>
=\<\xi_\eta,T\pi_\eta(x)\xi_\eta\>=\rho(x)=\rho_\alpha(x),
\qquad x\in M_\alpha,
$$
we find that
\begin{align}\label{F-5.11}
T(\rho_\alpha/\eta_\alpha)=P_\alpha TP_\alpha|_{\cH_\alpha}.
\end{align}
Since $P_\alpha\nearrow I$ in the strong operator topology, it follows that
$P_\alpha TP_\alpha\to T$ in the strong operator topology.

Now, for each $n\in\bN$ let $f_n$ be given in \eqref{F-5.1} and set
$k_n(t):=(1-t)f_n\bigl({t\over1-t}\bigr)$, $t\in[0,1]$. Then $k_n$ is a continuous function
on $[0,1]$, so from formula \eqref{F-5.10} and \eqref{F-5.11} it follows that
$$
\widehat S_{f_n}(\rho\|\sigma)=\<\xi_\eta,k_n(T)\xi_\eta\>,
\qquad\widehat S_{f_n}(\rho_\alpha\|\sigma_\alpha)
=\<\xi_\eta,k_n(P_\alpha TP_\alpha)\xi_\eta\>
$$
for every $\alpha$. For each $n\in\bN$, since $k_n(P_\alpha TP_\alpha)\to k_n(T)$ in the
strong operator topology, we obtain
$$
\widehat S_{f_n}(\rho_\alpha\|\sigma_\alpha)\ \longrightarrow
\ \widehat S_{f_n}(\rho\|\sigma)\quad\mbox{as $\alpha\to``\infty"$}.
$$
From this and Lemma \ref{L-5.2} we find that
\begin{align*}
\sup_\alpha\widehat S_f(\rho_\alpha\|\sigma_\alpha)
&=\sup_\alpha\sup_n\widehat S_{f_n}(\rho_\alpha\|\sigma_\alpha)
=\sup_n\sup_\alpha\widehat S_{f_n}(\rho_\alpha\|\sigma_\alpha) \\
&\ge\sup_n\widehat S_{f_n}(\rho\|\sigma)=\widehat S_f(\rho\|\sigma),
\end{align*}
as desired.
\end{proof}

\begin{remark}\label{R-5.7}\rm
We have shown \cite[Theorem 4.1\,(i)]{Hi1} that the standard $f$-divergence $S_f(\rho\|\sigma)$
is jointly lower semicontinuous in the $\sigma(M_*,M)$-topology. It follows from Theorem
\ref{T-5.6} that this property stronger than Theorem \ref{T-5.5} holds for
$\widehat S_f(\rho\|\sigma)$ as well whenever $M$ is injective, or equivalently, there is an
increasing net $\{M_\alpha\}$ of finite-dimensional unital subalgebras of $M$ such that
$M=\bigl(\bigcup_\alpha M_\alpha\bigr)''$, see \cite{C2,Ell}. In fact, in this case,
$\widehat S_f(\rho\|\sigma)=\sup_\alpha\widehat S_f(\rho|_{M_\alpha}\|\sigma|_{M_\alpha})$
by Theorem \ref{T-5.6} and
$(\rho,\sigma)\mapsto\widehat S_f(\rho|_{M_\alpha}\|\sigma|_{M_\alpha})$ is lower
semicontinuous in the $\sigma(M_*,M)$-topology. However, it is unknown whether
$\widehat S_f(\rho\|\sigma)$ is jointly lower semicontinuous in the $\sigma(M_*,M)$-topology
for general $M$.
\end{remark}

\section{Minimal reverse test}

Let $\rho,\sigma\in M_*^+$ be arbitrary, $\eta:=\rho+\sigma$ and $e:=s(\eta)$. Let $A\in eMe$
be such that $h_\rho^{1/2}=Ah_\eta^{1/2}$ (Lemma \ref{L-2.1}). Since $0\le A^*A\le I$, we
take the spectral decomposition $A^*A=\int_0^1t\,dE(t)$, where $E(\cdot)$ is a spectral
measure on $[0,1]$ with $\int_0^1dE(t)=e$, and define a finite Borel measure $\nu$ on $[0,1]$
by
$$
\nu:=\eta(E(\cdot))=\tr(h_\eta E(\cdot))=\|E(\cdot)h_\eta^{1/2}\|^2,
$$
and consider an abelian von Neumann algebra $L^\infty([0,1],\nu)=L^1([0,1],\nu)^*$. Note that
$A^*A=JT_{\rho/\eta}J$ from the proof of Lemma \ref{L-3.3}, so that
$\nu=\|E_{\rho/\eta}(\cdot)h_\eta^{1/2}\|^2$ (see Theorem \ref{T-4.2}).

\begin{lemma}\label{L-6.1}
Define $\Phi_0:M\to L^\infty([0,1],\nu)$ by
\begin{align}\label{F-6.1}
\Phi_0(x)={d\,\tr(xh_\eta^{1/2}E(\cdot)h_\eta^{1/2})\over d\nu}
\ \mbox{(the Radon-Nikodym derivative)},\qquad x\in M.
\end{align}
Then $\Phi_0$ is a unital positive normal map and its predual map
$$
\Phi_{0*}:L^1([0,1],\nu)\ \longrightarrow\ L^1(M)\cong M_*
$$
satisfies
\begin{align}\label{F-6.2}
\Phi_{0*}(\phi)=h_\eta^{1/2}\biggl(\int_0^1\phi(t)\,dE(t)\biggr)h_\eta^{1/2}
\end{align}
for every $\phi\in L^\infty([0,1],\nu)$ ($\subset L^1([0,1],\nu)$). In particular,
$\Phi_{0*}(t)=h_\rho$ and $\Phi_{0*}(1-t)=h_\sigma$, where $t$ denotes the identity function
$t\mapsto t$ on $[0,1]$.
\end{lemma}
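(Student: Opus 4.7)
The plan is to verify the four assertions in order: well-definedness of $\Phi_0$, its unitality and positivity, the predual formula \eqref{F-6.2}, and the identifications $\Phi_{0*}(t)=h_\rho$, $\Phi_{0*}(1-t)=h_\sigma$. Normality will then drop out for free from the existence of the predual $\Phi_{0*}$.

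First, for each $x\in M$ the set function $B\mapsto\tr(xh_\eta^{1/2}E(B)h_\eta^{1/2})$ is a complex Borel measure on $[0,1]$ (a linear combination of four positive measures after writing $x$ as a combination of four positives). To apply Radon–Nikodym, I will check absolute continuity with respect to $\nu$. If $\nu(B)=\|E(B)h_\eta^{1/2}\|_2^2=0$, then $E(B)h_\eta^{1/2}=0$ in $L^2(M)$, hence $h_\eta^{1/2}E(B)h_\eta^{1/2}=0$ in $L^1(M)$ and $\tr(xh_\eta^{1/2}E(B)h_\eta^{1/2})=0$. So $\Phi_0(x)\in L^\infty([0,1],\nu)$ is well defined by \eqref{F-6.1}, and linearity in $x$ is immediate. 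Unitality is the tautology $\Phi_0(1)=d\nu/d\nu=1$. For positivity, when $x\ge0$ the cyclic trace identity
$$
\tr(xh_\eta^{1/2}E(B)h_\eta^{1/2})=\tr(xh_\eta^{1/2}E(B)^2h_\eta^{1/2})=\tr\bigl(E(B)h_\eta^{1/2}xh_\eta^{1/2}E(B)\bigr)\ge0
$$
shows the corresponding measure is positive, hence $\Phi_0(x)\ge0$ a.e.

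Next I will derive \eqref{F-6.2}. For $\phi\in L^\infty([0,1],\nu)$ the candidate $\Phi_{0*}(\phi):=h_\eta^{1/2}\bigl(\int_0^1\phi(t)\,dE(t)\bigr)h_\eta^{1/2}$ lies in $L^1(M)$ since $\int\phi\,dE\in M$. Pairing with an arbitrary $x\in M$ and using the scalar spectral integration formula for the measure $\mu_x(B):=\tr(xh_\eta^{1/2}E(B)h_\eta^{1/2})$, I compute
\begin{align*}
\tr\bigl(x\,\Phi_{0*}(\phi)\bigr)
&=\tr\Bigl(xh_\eta^{1/2}\Bigl(\int_0^1\phi(t)\,dE(t)\Bigr)h_\eta^{1/2}\Bigr)
=\int_0^1\phi(t)\,d\mu_x(t) \\
&=\int_0^1\phi(t)\Phi_0(x)(t)\,d\nu(t),
\end{align*}
the last equality being the definition of the Radon–Nikodym derivative. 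This is exactly the duality pairing that identifies $\Phi_{0*}$ as predual to $\Phi_0$, so \eqref{F-6.2} holds. Since $\Phi_0$ admits a bounded predual $\Phi_{0*}:L^1([0,1],\nu)\to L^1(M)\cong M_*$ (extended from $L^\infty$ by density), $\Phi_0$ is automatically normal.

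Finally, the identifications follow by plugging $\phi(t)=t$ and $\phi(t)=1-t$ into \eqref{F-6.2}. Taking $\phi(t)=t$ gives $\int_0^1 t\,dE(t)=A^*A$, so
$$
\Phi_{0*}(t)=h_\eta^{1/2}A^*Ah_\eta^{1/2}=(Ah_\eta^{1/2})^*(Ah_\eta^{1/2})=(h_\rho^{1/2})^*h_\rho^{1/2}=h_\rho,
$$
using that $h_\rho^{1/2}\in L^2(M)_+$ is self-adjoint. Since $\int_0^1 dE(t)=e=s(\eta)$ and $h_\eta^{1/2}e=h_\eta^{1/2}$, we get $\Phi_{0*}(1)=h_\eta^{1/2}eh_\eta^{1/2}=h_\eta=h_\rho+h_\sigma$, so $\Phi_{0*}(1-t)=h_\eta-h_\rho=h_\sigma$. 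The only step requiring genuine care is the absolute continuity / Radon–Nikodym step together with the spectral-integration identity that converts the defining quotient back into an operator-valued integral; everything else is bookkeeping.
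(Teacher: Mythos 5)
Your proof is correct and runs along essentially the same lines as the paper's (Radon--Nikodym definition, positivity, the duality identity \eqref{F-6.2}, and then $\Phi_{0*}(t)=h_\eta^{1/2}A^*Ah_\eta^{1/2}=h_\rho$, $\Phi_{0*}(1)=h_\eta$), but you handle normality by a genuinely different route: the paper verifies normality directly, taking an increasing net $x_\alpha\nearrow x$, noting that the measures $\tr(x_\alpha h_\eta^{1/2}E(\cdot)h_\eta^{1/2})$ increase to $\tr(xh_\eta^{1/2}E(\cdot)h_\eta^{1/2})$ and identifying the limit of $\Phi_0(x_\alpha)$ by integrating over Borel sets, and only afterwards invokes the abstract predual $\Phi_{0*}$ and computes it; you instead exhibit the predual explicitly on $L^\infty(\nu)$, verify the duality pairing, and conclude normality from $\sigma$-weak continuity of an adjoint map. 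Both work; the paper's order avoids having to extend $\Phi_{0*}$ by density, while yours packages normality and \eqref{F-6.2} into a single duality computation. Two small estimates you assert but do not write should be supplied to make your route airtight: (a) absolute continuity alone only puts the Radon--Nikodym derivative in $L^1(\nu)$, so to get $\Phi_0(x)\in L^\infty([0,1],\nu)$ you need the domination
\begin{align*}
\bigl|\tr(xh_\eta^{1/2}E(B)h_\eta^{1/2})\bigr|
=\bigl|\<E(B)h_\eta^{1/2},\,xE(B)h_\eta^{1/2}\>\bigr|\le\|x\|\,\nu(B)
\end{align*}
(this is exactly the paper's first step, for $x\ge0$), and (b) the density extension of $\Phi_{0*}$ from $L^\infty(\nu)$ to $L^1(\nu)$ needs the bound $\|\Phi_{0*}(\phi)\|_1\le C\|\phi\|_{L^1(\nu)}$, which follows from the positivity of $\Phi_{0*}$ and $\tr\Phi_{0*}(\phi)=\int\phi\,d\nu$ for $\phi\ge0$. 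Both are one-line additions, so there is no genuine gap.
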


\begin{proof}
When $x\in M_+$, for any Borel set $S\subset[0,1]$ we have
$$
0\le\tr(xh_\eta^{1/2}E(S)h_\eta^{1/2})\le\|x\|\tr(h_\eta E(S))=\|x\|\nu(S),
$$
so that the Radon-Nikodym derivative $\Phi_0(x)$ in \eqref{F-6.1} is well defined and
$0\le\Phi_0(x)\le\|x\|1$. So $\Phi_0$ extends to a well defined positive linear map from
$M$ to $L^\infty([0,1],\nu)$. To show the normality of $\Phi_0$, let $\{x_\alpha\}$ be a
sequence in $M_+$ such that $x_\alpha\nearrow x\in M_+$. Since
$\tr(x_\alpha h_\eta^{1/2}E(\cdot)h_\eta^{1/2})$ is increasing and dominated by
$\tr(xh_\eta^{1/2}E(\cdot)h_\eta^{1/2})$, we have
$0\le\Phi_0(x_\alpha)\nearrow\psi\le\Phi_0(x)$ for some $\psi\in L^\infty([0,1],\mu)$. For
every Borel set $S\subset[0,1]$,
\begin{align*}
\int_S\psi\,d\nu&=\lim_\alpha\int_S\Phi_0(x_\alpha)\,d\nu
=\lim_\alpha\tr(x_\alpha h_\eta^{1/2}E(S)h_\eta^{1/2}) \\
&=\tr(xh_\eta^{1/2}E(S)h_\eta^{1/2})=\int_S\Phi_0(x)\,d\nu
\end{align*}
which implies that $\psi=\Phi_0(x)$ and so $\Phi_0(x_\alpha)\nearrow\Phi_0(x)$.

Hence one can take the predual map $\Phi_{0*}:L^1([0,1],\mu)\to M_*$. When
$\phi\in L^\infty([0,\infty),\mu)$
($\subset L^1([0,\infty),\mu)$), for every $x\in M$ one has
\begin{align*}
\Phi_{0*}(\phi)(x)&=\int_0^1\phi\Phi_0(x)\,d\nu
=\int_0^1\phi(t)\,d\,\tr(xh_\eta^{1/2}E(t)h_\eta^{1/2}) \\
&=\tr\biggl(xh_\eta^{1/2}\biggl(\int_0^1\phi(t)\,dE(t)\biggr)h_\eta^{1/2}\biggr),
\end{align*}
which implies \eqref{F-6.2} under the identification $M_*=L^1(M)$. Hence,
$\Phi_{0*}(1)=h_\eta^{1/2}eh_\eta^{1/2}=h_\eta$,
$\Phi_{0*}(t)=h_\eta^{1/2}(A^*A)h_\eta^{1/2}=h_\rho$ and
$\Phi_{0*}(1-t)=h_\eta-h_\rho=h_\sigma$.
\end{proof}

Now, following \cite{Ma}, we introduce the notion of reverse tests for $\rho,\sigma\in M_*^+$.

\begin{definition}\label{D-6.2}\rm
Let $(X,\cX,\mu)$ be a $\sigma$-finite measure space and $\Psi:M\to L^\infty(X,\mu)$ be a
positive linear map which is untial and normal. Then the predual map
$\Psi_*:L^1(X,\mu)\to M_*$ is trace-preserving in the sense that
$\int_X\phi\,d\mu=\Psi_*(\phi)(1)$ ($=\|\Psi_*(\phi)\|$) for every $\phi\in L^1(X,\mu)_+$.
We call a triplet $(\Psi,p,q)$ of such a map $\Psi$ and $p,q\in L^1(X,\mu)_+$ a
{\it reverse test} for $\rho,\sigma$ if $\Psi_*(p)=\rho$ and $\Psi_*(q)=\sigma$.
\end{definition}

The next variational formula of $\widehat S_f(\rho\|\sigma)$ can be the third definition of the
maximal $f$-divergences.

\begin{thm}\label{T-6.3}
For every $\rho,\sigma\in M_*^+$,
\begin{align}\label{F-6.3}
\widehat S_f(\rho\|\sigma)=\min\{S_f(p\|q):
\mbox{$(\Psi,p,q)$ a reverse test for $\rho,\sigma$}\}.
\end{align}
Moreover, $\Psi:M\to L^\infty(X,\mu)$ in \eqref{F-6.3} can be restricted to those with a
standard Borel probability space $(X,\cX,\mu)$ or more specifically to a Borel probability
space on $[0,1]$.
\end{thm}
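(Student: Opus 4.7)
The plan is to establish the variational formula by proving the two inequalities separately: the lower bound comes from monotonicity applied to any reverse test, while the achievability comes from exhibiting the explicit reverse test constructed in Lemma \ref{L-6.1}.

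For the inequality $\widehat S_f(\rho\|\sigma)\le S_f(p\|q)$, I would proceed as follows. Let $(\Psi,p,q)$ be any reverse test, so $\Psi:M\to L^\infty(X,\mu)$ is unital positive normal with $\Psi_*(p)=\rho$ and $\Psi_*(q)=\sigma$. Regard $p,q\in L^1(X,\mu)_+$ as the elements of $L^\infty(X,\mu)_*^+$ they represent. By Example \ref{E-2.13}, on the abelian algebra $L^\infty(X,\mu)$ the maximal $f$-divergence coincides with the classical $f$-divergence, i.e., $\widehat S_f(p\|q)=S_f(p\|q)=\int_X q\,f(p/q)\,d\mu$ (with the standard conventions at atoms where $q=0$ or $p=0$). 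Applying the monotonicity property in Theorem \ref{T-2.9} to $\Psi$ then gives
$$
\widehat S_f(\rho\|\sigma)=\widehat S_f(\Psi_*(p)\|\Psi_*(q))\le\widehat S_f(p\|q)=S_f(p\|q),
$$
so the right-hand infimum is no smaller than $\widehat S_f(\rho\|\sigma)$.

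For the reverse inequality (and the fact that the infimum is a minimum attained on a standard Borel space), I would use the canonical reverse test supplied by Lemma \ref{L-6.1}: put $\eta:=\rho+\sigma$, $X:=[0,1]$, $\mu:=\nu=\|E_{\rho/\eta}(\cdot)h_\eta^{1/2}\|^2$, and $\Psi:=\Phi_0$, $p(t):=t$, $q(t):=1-t$. Lemma \ref{L-6.1} gives $\Phi_{0*}(p)=h_\rho$ and $\Phi_{0*}(q)=h_\sigma$, so $(\Phi_0,t,1-t)$ is a bona fide reverse test for $\rho,\sigma$. The classical $f$-divergence of $p,q$ with respect to $\nu$ splits as
$$
S_f(p\|q)=\int_{(0,1)}(1-t)f\Bigl({t\over1-t}\Bigr)\,d\nu(t)
+f(0^+)\,\nu(\{0\})+f'(+\infty)\,\nu(\{1\}),
$$
where the two boundary terms account for the atoms of $\nu$ at $0$ (where $p=0$, $q=1$) and at $1$ (where $p=1$, $q=0$). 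By construction of $\nu$, this is exactly formula \eqref{F-4.13} of Remark \ref{R-4.4}, since $\nu(\{0\})=\|E_{\rho/\eta}(\{0\})h_\eta^{1/2}\|^2$ and $\nu(\{1\})=\|E_{\rho/\eta}(\{1\})h_\eta^{1/2}\|^2$. Hence $S_f(p\|q)=\widehat S_f(\rho\|\sigma)$, which both shows the reverse inequality and proves that the infimum is attained.

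Finally, the reverse test just constructed lives on $([0,1],\mathrm{Borel},\nu)$, which is a standard Borel measure space with $\nu([0,1])=\eta(1)=\rho(1)+\sigma(1)$; assuming $\eta\ne0$ (otherwise $\rho=\sigma=0$ and the statement is trivial), rescaling $\widetilde\nu:=\eta(1)^{-1}\nu$, $\widetilde p:=\eta(1)p$, $\widetilde q:=\eta(1)q$ yields a reverse test on a standard Borel probability space on $[0,1]$ with the same value of $S_f$. The main bookkeeping obstacle is the correct handling of the boundary atoms at $t=0,1$ so that the two possibly infinite terms $f(0^+)\nu(\{0\})$ and $f'(+\infty)\nu(\{1\})$ line up precisely with those in Remark \ref{R-4.4}; once the spectral measure $\nu$ is identified with $\|E_{\rho/\rho+\sigma}(\cdot)h_{\rho+\sigma}^{1/2}\|^2$ via the relation $A^*A=JT_{\rho/\eta}J$ (already noted before Lemma \ref{L-6.1}), this identification is automatic.
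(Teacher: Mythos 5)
Your proposal is correct and follows essentially the same route as the paper: the lower bound via the monotonicity of $\widehat S_f$ (Theorem \ref{T-2.9}) combined with Example \ref{E-2.13}, and attainment via the canonical reverse test $(\Phi_0,t,1-t)$ of Lemma \ref{L-6.1} together with the integral formula of Theorem \ref{T-4.2}. Your extra bookkeeping on the boundary atoms and the rescaling of $\nu$ to a probability measure only makes explicit what the paper treats as immediate.
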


\begin{proof}
Let $(\Psi,p,q)$ be a reverse test for $\rho,\sigma$. By the monotonicity property of
$\widehat S_f$ in Theorem \ref{T-2.9} and Example \ref{E-2.13} we have
$$
\widehat S_f(\rho\|\sigma)=\widehat S_f(\Psi(p)\|\Psi(q))
\le\widehat S_f(p\|q)=S_f(p\|q).
$$
On the other hand, $(\Phi_0,t,1-t)$ given in Lemma \ref{L-6.1} is a reverse test, for which we
have the equality $\widehat S_f(\rho\|\sigma)=S_f(t\|1-t)$. In fact, since we set
$\nu=\|E_{\rho/\eta}(\cdot)h_\eta^{1/2}\|^2$ in Lemma \ref{L-6.1}, it follows from Theorem
\ref{T-4.2} that
$$
\widehat S_f(\rho\|\sigma)=\int_0^1(1-t)f\biggl({t\over1-t}\biggr)\,d\nu(t)
=S_f(t\|1-t).
$$
Hence expression \eqref{F-6.3} follows. The latter assertion is clear from Lemma \ref{L-6.1}
(see, e.g., \cite{Sr} for standard Borel spaces).
\end{proof}

The reverse test $(\Phi_0,t,1-t)$ given in Lemma \ref{L-6.1} is a minimizer for expression
\eqref{F-6.3}, which is considered as the von Neumann algebra version of Matsumoto's
\emph{minimal} or \emph{optimal reverse test} \cite{Ma} for $\rho,\sigma$. Apply the
monotonicity property of $S_f$ \cite[Theorem 4.1\,(iv)]{Hi1} to this $\Phi_0$ (that is a unital
and completely positive normal map) to find
$$
S_f(\rho\|\sigma)\le S_f(t\|1-t)=\widehat S_f(\rho\|\sigma),
$$
so we have

\begin{thm}\label{T-6.4}
For every $\rho,\sigma\in M_*^+$,
$$
S_f(\rho\|\sigma)\le\widehat S_f(\rho\|\sigma).
$$
\end{thm}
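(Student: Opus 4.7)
The plan is to exploit the minimal reverse test $(\Phi_0, t, 1-t)$ constructed in Lemma \ref{L-6.1} together with the variational identity of Theorem \ref{T-6.3} and the monotonicity of the standard $f$-divergence from \cite[Theorem 4.1(iv)]{Hi1}. The statement is actually a direct consequence of what has already been assembled: once we know that the minimum in \eqref{F-6.3} is attained by a specific reverse test whose \emph{classical} $f$-divergence computes $\widehat S_f(\rho\|\sigma)$ exactly, we just compare the two sides of that reverse test via the data-processing inequality for $S_f$.

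Concretely, the first step is to recall from Lemma \ref{L-6.1} that $\Phi_0 \colon M \to L^\infty([0,1],\nu)$ is a unital normal positive map. Since its codomain is abelian, $\Phi_0$ is automatically completely positive, so the monotonicity result \cite[Theorem 4.1(iv)]{Hi1} for the standard $f$-divergence applies and yields
\[
S_f(\Phi_{0*}(t)\,\|\,\Phi_{0*}(1-t)) \le S_f(t\,\|\,1-t).
\]
The second step is to identify the two sides of this inequality using the identities already on hand. On the left, Lemma \ref{L-6.1} tells us $\Phi_{0*}(t) = h_\rho$ and $\Phi_{0*}(1-t) = h_\sigma$, so the left-hand side equals $S_f(\rho\|\sigma)$. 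On the right, the computation carried out at the end of the proof of Theorem \ref{T-6.3} (or equivalently the integral formula of Theorem \ref{T-4.2} applied to the classical pair $(t, 1-t)$ on $([0,1],\nu)$) gives
\[
S_f(t\,\|\,1-t) = \int_0^1 (1-t) f\!\left(\frac{t}{1-t}\right) d\nu(t) = \widehat S_f(\rho\|\sigma).
\]
Combining these two identifications with the monotonicity inequality gives the desired $S_f(\rho\|\sigma) \le \widehat S_f(\rho\|\sigma)$.

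There is essentially no obstacle: the only thing one must be slightly careful about is verifying that the hypotheses of \cite[Theorem 4.1(iv)]{Hi1} are met, namely that $\Phi_0$ is unital, normal, and completely positive. Unitality and normality are in Lemma \ref{L-6.1}, and complete positivity is free because $L^\infty([0,1],\nu)$ is commutative. Everything else is a bookkeeping match-up between the minimal reverse test of Matsumoto type and the integral expression \eqref{F-4.9} for $\widehat S_f$.
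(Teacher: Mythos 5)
Your proof is correct and is essentially the paper's own argument: apply the monotonicity of $S_f$ under the unital positive (hence completely positive, since the codomain is abelian) normal map $\Phi_0$ of the minimal reverse test, and identify $S_f(t\|1-t)$ with $\widehat S_f(\rho\|\sigma)$ via the measure $\nu=\|E_{\rho/\rho+\sigma}(\cdot)h_{\rho+\sigma}^{1/2}\|^2$ and Theorem \ref{T-4.2}. The only cosmetic difference is that the paper states the data-processing inequality in the form $S_f(p\circ\Phi_0\|q\circ\Phi_0)\le S_f(p\|q)$ rather than through the predual map, which is the same inequality.
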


Here is an abstract approach to quantum $f$-divergences. We say that a function
$S_f^q:M_*^+\times M_*^+\to(-\infty,+\infty]$ where $M$ varies over all von Neumann algebras
is a \emph{monotone quantum $f$-divergence} if the following are satisfied:
\begin{itemize}
\item[(a)] $S_f^q(\rho\circ\Phi\|\sigma\circ\Phi)\le S_f^q(\rho\|\sigma)$ for any unital
completely positive normal map $\Phi:M_0\to M$ between von Neumann algebras and for every
$\rho,\sigma\in M_*^+$,
\item[(b)] when $M$ is an abelian von Neumann algebra with $M=L^\infty(X,\nu)$ on a
$\sigma$-finite measure space $(X,\mu)$, $S_f^q(\rho\|\sigma)$ coincides with the classical
$f$-divergence of $\rho,\sigma\in L^1(X,\mu)_+$ as in Example \ref{E-2.13}.
\end{itemize}
If $S_f^q$ is a monotone quantum $f$-divergence, then it is clear from Theorem \ref{T-6.3}
that
$$
S_f^q(\rho\|\sigma)\le\widehat S_f(\rho\|\sigma),
$$
which justifies the name maximal $f$-divergence for $\widehat S_f$.

In the matrix case, it is easy to verify that if $\rho,\omega\in\bM_n^+$ are commuting,
then $S_f(\rho\|\omega)=\widehat S_f(\rho\|\omega)$ for every operator convex (even simply
convex) function on $(0,+\infty)$. Let us extend this to the general von Neumann algebra
setting. To do so, we first need to fix the notion of commutativity of general
$\rho,\omega\in M_*^+$. A standard way to define this is as follows: Assume that
$\omega\in M_*^+$ is faithful and let $\sigma_t^\omega$ be the modular automorphism group
associated with $\omega$. Then $\rho\in M_*^+$ is said to {\it commute} with $\omega$ if
$\rho\circ\sigma_t^\omega=\rho$ for all $t\in\bR$. Different conditions equivalent to this
were established, e.g., in terms of the Connes cocycle Radon-Nikodym derivative, in \cite{PT}
(see also \cite[\S4.10]{St}).

For (not necessarily faithful)
$\omega\in M_*^+$ with $e:=s(\omega)$ we define $\sigma_t^\omega$ ($t\in\bR$) as the modular
automorphism group on $s(\omega)Ms(\omega)$ associated with the restriction of $\omega$ to
$eMe$. The above notion of commutativity can extend to the case where $s(\rho)\le s(\omega)$,
by replacing $M$ with $eMe$ and considering $\rho,\omega$ as their restrictions to $eMe$. To
introduce the notion for general $\rho,\omega\in M_*^+$ we give the next lemma, whose proof
is deferred to Appendix B.

\begin{lemma}\label{L-6.5}
For $\rho,\omega\in M_*^+$ the following conditions are equivalent:
\begin{itemize}
\item[(i)] $\rho$ commutes with $\rho+\omega$ (i.e., $\rho\circ\sigma_t^{\rho+\omega}=\rho$
on $s(\rho+\omega)Ms(\rho+\omega)$ for all $t\in\bR$);
\item[(ii)] $\omega$ commutes with $\rho+\omega$;
\item[(iii)] $\alpha\rho+\beta\omega$ commutes with $\gamma\rho+\delta\omega$ for any
$\alpha,\beta,\gamma,\delta>0$;
\item[(iv)] $h_\rho h_\omega=h_\omega h_\rho$ (as $\tau$-measurable operators affiliated with
$N$, see the first paragraph of Section 2).
\end{itemize}

When $s(\rho)\le s(\omega)$, the above conditions are also equivalent to that $\rho$ commutes
with $\omega$ (i.e., $\rho\circ\sigma_t^\omega=\rho$ on $s(\omega)Ms(\omega)$ for all
$t\in\bR$).
\end{lemma}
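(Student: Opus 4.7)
The plan is to reduce all four conditions to strong commutativity (commuting spectral projections) of the positive self-adjoint $\tau$-measurable operators $h_\rho$ and $h_\omega$ affiliated with $N$, by exploiting the formula $\sigma_t^\nu(x) = h_\nu^{it} x h_\nu^{-it}$ ($x \in s(\nu) M s(\nu)$) for the modular automorphism group of $\nu \in M_*^+$ in Haagerup's $L^1$-picture. Here $h_\nu^{it}$, defined from $h_\nu$ by functional calculus, is a unitary in $s(\nu) N s(\nu)$ (it satisfies $\theta_s(h_\nu^{it}) = e^{-ist} h_\nu^{it}$, hence lives in $N$ and not in $M$, but conjugation by it preserves $s(\nu) M s(\nu)$).

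The key intermediate step I would establish first is the following basic equivalence: for $\rho', \rho'' \in M_*^+$ with $s(\rho') \le s(\rho'') =: e_0$, the relation $\rho' \circ \sigma_t^{\rho''} = \rho'$ on $e_0 M e_0$ for every $t \in \bR$ is equivalent to $h_{\rho'} h_{\rho''} = h_{\rho''} h_{\rho'}$. Indeed, by trace cyclicity in Haagerup's $L^1$,
\[
\rho'(\sigma_t^{\rho''}(x)) = \tr\bigl(h_{\rho'} h_{\rho''}^{it} x h_{\rho''}^{-it}\bigr) = \tr\bigl(h_{\rho''}^{-it} h_{\rho'} h_{\rho''}^{it}\,x\bigr),
\]
and a routine $\theta_s$-scaling check shows that $h_{\rho''}^{-it} h_{\rho'} h_{\rho''}^{it} \in L^1(M)$ and is supported under $e_0$. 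Equating this to $\tr(h_{\rho'} x)$ for every $x \in e_0 M e_0$ forces $h_{\rho''}^{it} h_{\rho'} = h_{\rho'} h_{\rho''}^{it}$ for all $t \in \bR$, which by spectral theory is equivalent to strong commutativity of $h_{\rho'}$ and $h_{\rho''}$. This basic equivalence immediately yields the last assertion of the lemma (take $\rho'' = \omega$) and, applied to $(\rho', \rho'') = (\rho, \rho+\omega)$ and $(\omega, \rho+\omega)$ respectively, gives $\mathrm{(i)} \Leftrightarrow [h_\rho \text{ commutes with } h_\rho + h_\omega] \Leftrightarrow \mathrm{(iv)}$ and $\mathrm{(ii)} \Leftrightarrow \mathrm{(iv)}$.

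For $\mathrm{(iv)} \Rightarrow \mathrm{(iii)}$, if $h_\rho$ and $h_\omega$ strongly commute then they generate an abelian subalgebra of $\widetilde N$ with which all linear combinations $h_{\alpha\rho+\beta\omega} = \alpha h_\rho + \beta h_\omega$ are affiliated, so any two such combinations strongly commute, and the basic equivalence (noting that both have support $e$ when $\alpha, \beta, \gamma, \delta > 0$) yields (iii). Conversely, specializing (iii) to $(\alpha, \beta, \gamma, \delta) = (1,1,1,2)$ and applying the basic equivalence, the operators $h_\rho + h_\omega$ and $h_\rho + 2 h_\omega$ strongly commute; since any two strongly commuting positive self-adjoint operators generate a commutative von Neumann algebra with which every real linear combination is affiliated, the differences $h_\omega = (h_\rho + 2h_\omega) - (h_\rho + h_\omega)$ and $h_\rho = 2(h_\rho + h_\omega) - (h_\rho + 2 h_\omega)$ lie in this commutative algebra and hence strongly commute, which is (iv).

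The step I expect to demand the most care is the operator-theoretic bookkeeping: verifying that $h_\nu^{it}$ is a well-defined unitary in $s(\nu) N s(\nu)$ with the stated $\theta_s$-scaling so that $x \mapsto h_\nu^{it} x h_\nu^{-it}$ literally implements $\sigma_t^\nu$ on $s(\nu) M s(\nu)$; that $h_{\rho''}^{-it} h_{\rho'} h_{\rho''}^{it}$ genuinely lies in $L^1(M)$ so the trace identity above is literal and not merely formal; and that strong commutativity of unbounded positive $\tau$-measurable operators is preserved under addition and subtraction, as used in the $\mathrm{(iii)} \Rightarrow \mathrm{(iv)}$ step. All of this is standard from Terp~\cite{Te} and Pedersen--Takesaki~\cite{PT}, and needs only be cited cleanly.
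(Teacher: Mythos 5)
Your proposal is correct, and its engine is the same one the paper uses for the passage between invariance and commuting densities: the formula $\sigma_t^\nu(x)=h_\nu^{it}xh_\nu^{-it}$ from \cite{K1}, trace duality between $L^1$ and the (reduced) algebra, commutation with the unitaries $h_\nu^{it}$, and the standard dictionary between strong commutation and product commutation in $\widetilde N$ (the paper runs one direction via the resolvent $(h_{\rho+\omega}+1)^{-1}$, the other by citing that $\tau$-measurable operators form a $*$-algebra). Where you genuinely diverge is in the global organization. The paper first proves (i)$\iff$(ii)$\iff$(iii) entirely at the level of states: (i)$\Rightarrow$(iii) rests on the Pedersen--Takesaki invariance theorem \cite{PT} (invariance of $\gamma\rho+\delta\omega$ under $\sigma^{\rho+\omega}$ gives invariance of $\rho+\omega$ under $\sigma^{\gamma\rho+\delta\omega}$, after which one solves a linear system when $\gamma\neq\delta$); it then deduces (iv) from (iii) by applying the modular formula to \emph{all} positive combinations and isolating $h_\rho h_\omega=h_\omega h_\rho$ by a limit (equivalently, by the cancellation $(\alpha\delta-\beta\gamma)(h_\rho h_\omega-h_\omega h_\rho)=0$). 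You instead isolate a single two-way lemma, ``$\rho'\circ\sigma_t^{\rho''}=\rho'$ iff $h_{\rho'}$ and $h_{\rho''}$ commute'' for $s(\rho')\le s(\rho'')$, and then obtain (i)$\iff$(iv), (ii)$\iff$(iv), (iv)$\Rightarrow$(iii) and (iii)$\Rightarrow$(iv) purely by algebra with the densities (linearity of $\psi\mapsto h_\psi$, affiliation of positive combinations with the abelian algebra generated by the commuting spectral resolutions, and cancellation in $\widetilde N$); the final assertion for $s(\rho)\le s(\omega)$ is the same lemma with $\rho''=\omega$, exactly as in the paper. What your route buys is the elimination of the external input \cite{PT}, and it makes the symmetry (i)$\iff$(ii) fall out of the symmetric condition (iv) rather than from subtracting states; what it costs is a slightly heavier reliance on the $\widetilde N$-bookkeeping you flag (sums, differences and cancellation of unbounded commuting $\tau$-measurable operators, and the equivalence of product commutation with commuting spectral projections), all of which is indeed standard and is also implicitly used in the paper's own (i)$\Rightarrow$(iv) and (iv)$\Rightarrow$(i) steps. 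Your (iii)$\Rightarrow$(iv) via the single specialization $(\alpha,\beta,\gamma,\delta)=(1,1,1,2)$ is in fact a little cleaner than the paper's limit $\beta,\gamma\to0$.
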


\begin{definition}\label{D-6.6}\rm
For $\rho,\omega\in M_*^+$ we say that $\rho,\omega$ {\it commute} if the equivalent
conditions of Lemma \ref{L-6.5} hold. When $M$ is semifinite with a faithful normal semifinite
trace $\tau_0$, the commutativity of $\rho,\omega$ is equivalent to the commutativity of
$d\rho/d\tau_0$, $d\omega/d\tau_0\in L^1(M,\tau_0)$, see Example \ref{E-2.12}.
\end{definition}

\begin{prop}\label{P-6.7}
If $\rho,\omega\in M_*^+$ commute in the above sense, then
$$
S_f(\rho\|\omega)=\widehat S_f(\rho\|\omega)
$$
for any operator convex function $f$ on $(0,+\infty)$.
\end{prop}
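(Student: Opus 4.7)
The strategy is to establish $\widehat S_f(\rho\|\omega) \le S_f(\rho\|\omega)$ under the commutativity hypothesis, which combined with Theorem \ref{T-6.4} gives the equality. To do so, I will construct a unital normal $*$-homomorphism $\Lambda$ from a commutative von Neumann algebra \emph{into} $M$ (going in the opposite direction to a reverse test) whose predual sends $\rho, \omega$ to the classical pair $(t, 1-t)$ on $[0,1]$, and then invoke monotonicity of $S_f$ from \cite[Theorem 4.1]{Hi1} together with Theorem \ref{T-6.3}.

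First I would set $\eta := \rho + \omega$ and $e := s(\eta)$, and let $A \in eMe$ be the operator with $h_\rho^{1/2} = A h_\eta^{1/2}$ from Lemma \ref{L-2.1} (so $\|A\| \le 1$, since $\rho \le \eta$). By Lemma \ref{L-6.5}(iv) commutativity gives $h_\rho h_\omega = h_\omega h_\rho$, so $h_\rho$ and $h_\omega$ each commute with $h_\eta$. The identity $h_\rho = h_\eta^{1/2} A^*A h_\eta^{1/2}$ then forces $A^*A = h_\rho h_\eta^{-1}$ as a bounded positive contraction in $eMe$ commuting with $h_\eta$; its spectral measure $E(\cdot)$ on $[0,1]$ therefore takes values in an abelian von Neumann subalgebra $\cA_0 \subseteq eMe$ that commutes with $h_\eta$. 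Setting $\nu := \|E(\cdot) h_\eta^{1/2}\|^2$, which is the same measure used in Lemma \ref{L-6.1} and Theorem \ref{T-6.3}, I would define $\Lambda: L^\infty([0,1], \nu) \oplus \bC \to M$ by $\Lambda(\phi, c) := \int_0^1 \phi(s)\, dE(s) + c(1-e)$. Since $E$ takes values in $\cA_0$, $\Lambda$ is a unital normal $*$-homomorphism, hence a unital CP normal map.

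The crucial computation is $\Lambda_*(\rho) = (t, 0)$ and $\Lambda_*(\omega) = (1-t, 0)$, where $t$ denotes the identity function on $[0,1]$. Using $h_\rho = h_\eta \cdot A^*A = h_\eta \int_0^1 s\, dE(s)$ (valid because $h_\rho$ commutes with $h_\eta$),
$$
\Lambda_*(\rho)(\phi, c) = \tr\Bigl(h_\rho \int_0^1 \phi(s)\,dE(s)\Bigr) = \tr\Bigl(h_\eta \int_0^1 s\phi(s)\,dE(s)\Bigr) = \int_0^1 s\phi(s)\,d\nu(s),
$$
and likewise with $\omega$ in place of $\rho$ (using $h_\omega = h_\eta \int_0^1 (1-s)\,dE(s)$). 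By monotonicity of the standard $f$-divergence under unital CP normal maps \cite[Theorem 4.1]{Hi1},
$$
S_f(t\|1-t) = S_f(\Lambda_*(\rho)\|\Lambda_*(\omega)) \le S_f(\rho\|\omega),
$$
while Theorem \ref{T-6.3}, applied to the minimizing reverse test $(\Phi_0, t, 1-t)$ of Lemma \ref{L-6.1}, gives $\widehat S_f(\rho\|\omega) = S_f(t\|1-t)$. Combined with Theorem \ref{T-6.4}, equality follows.

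The main obstacle is the algebraic identification $A^*A = h_\rho h_\eta^{-1}$ in the commuting case, together with verifying that the resulting spectral measure $E$ lies in $eMe$ and that its operator integrals commute with $h_\eta$. This requires careful manipulations with $\tau$-measurable operators in Haagerup's $L^p$-space framework (in particular exploiting Lemma \ref{L-6.5}(iv) at the level of $\widetilde N$), but once that commutation is established everything else is routine bookkeeping.
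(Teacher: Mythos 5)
Your argument is correct, but it follows a genuinely different route from the paper's. The paper first reduces to the case where both functionals are faithful (via Definition \ref{D-2.8}, \cite[Corollary 4.4\,(3)]{Hi1} and Lemma \ref{L-6.5}\,(iii)) and then works with modular theory: since $\rho$ is $\sigma_t^\omega$-invariant, there is an $\omega$-preserving conditional expectation $E_\omega$ onto the centralizer $M_\omega$ which also preserves $\rho$ (\cite[Theorem 2.2]{HOT}); composing with a further conditional expectation onto the abelian algebra $\cA$ generated by the trace Radon--Nikodym derivative $d(\rho|_{M_\omega})/d(\omega|_{M_\omega})$ and using monotonicity of \emph{both} $\widehat S_f$ (Theorem \ref{T-2.9}) and $S_f$, together with Example \ref{E-2.13}, yields $\widehat S_f(\rho\|\omega)\le S_f(\rho|_\cA\|\omega|_\cA)\le S_f(\rho\|\omega)$. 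You instead exploit the characterization (iv) of Lemma \ref{L-6.5} at the level of Haagerup $L^1$-densities: commutativity identifies $A^*A$ with $h_\rho h_\eta^{-1}e$, so the spectral measure $E$ of $A^*A$ commutes with $h_\eta$, and the measurement-type unital normal $*$-homomorphism $\Lambda(\phi,c)=\int\phi\,dE+c(1-e)$ sends $(\rho,\omega)$ exactly to the classical pair $(t,1-t)$ of the minimal reverse test; then $\widehat S_f(\rho\|\omega)=S_f(t\|1-t)\le S_f(\rho\|\omega)$ follows from Theorems \ref{T-4.2}/\ref{T-6.3} and one application of \cite[Theorem 4.1\,(iv)]{Hi1}, with Theorem \ref{T-6.4} closing the sandwich. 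Your approach buys independence from the existence of $(\rho,\omega)$-preserving conditional expectations and from the reduction to the faithful case, and it makes transparent that in the commuting case the minimal reverse test is ``reversed'' by an actual measurement; the paper's approach avoids manipulations with $\tau$-measurable operators and stays entirely within standard modular-theoretic machinery. Two points you should nail down in a full write-up: the identification $A^*A=h_\rho h_\eta^{-1}e$ (use that $h_\rho$ and $h_\eta$ strongly commute, that $h_\rho h_\eta^{-1}e$ is bounded, $\theta_s$-invariant, hence in $eMe$, and that $x\mapsto h_\eta^{1/2}xh_\eta^{1/2}$ is injective on $eMe$), and the normality/well-definedness of $\Lambda$ on $L^\infty([0,1],\nu)$, which needs that $\nu(S)=0$ forces $E(S)=0$; this holds because $E(S)\le e=s(\eta)$ and $\|E(S)h_\eta^{1/2}\|_2=0$ implies $E(S)e=0$.
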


\begin{proof}
By Theorem \ref{T-6.4} it suffices to prove that
$\widehat S_f(\rho\|\omega)\le S_f(\rho\|\omega)$. By \eqref{F-2.4} and
\cite[Corollary 4.4\,(3)]{Hi1} note that
\begin{align*}
\widehat S_f(\rho\|\omega)&=\lim_{\eps\searrow0}
\widehat S_f(\rho+\eps(\rho+\omega)\|\omega+\eps(\rho+\omega)), \\
S_f(\rho\|\omega)&=\lim_{\eps\searrow0}
S_f(\rho+\eps(\rho+\omega)\|\omega+\eps(\rho+\omega)).
\end{align*}
Hence, thanks to (iii) of Lemma \ref{L-6.5}, we may assume that both $\rho,\omega$ are
faithful. By assumption, $\rho$ is $\sigma_t^\omega$-invariant. Let $M_\omega$ be the
centralizer of $\omega$, i.e., $M_\omega:=\{x\in M:\sigma_t^\omega(x)=x,\,t\in\bR\}$, and
$E_\omega:M\to M_\omega$ be the conditional expectation with respect to $\omega$. Then it
follows \cite[Theorem 2.2]{HOT} that $\rho\circ E_\omega=\rho$ as well as
$\omega\circ E_\omega=\omega$. Now, since $\omega|_{M_\omega}$ is a faithful normal trace,
one can choose the Radon-Nikodym derivative $A:=d(\rho|_{M_\omega})/d(\omega|_{M_\omega})$
so that
$$
\rho(x)=\omega(Ax)=\lim_{\eps\searrow0}\omega(A(1+\eps A)^{-1}x),\qquad x\in M_\omega.
$$
Let $\cA$ be the abelian von Neumann subalgebra of $M_\omega$ generated by $A$, and
$E_\cA:M_\omega\to\cA$ be the conditional expectation with respect to $\omega|_{M_\omega}$. For
every $x\in M$ one has
\begin{align*}
\omega(x)&=\omega(E_\omega(x))=\omega(E_\cA\circ E_\omega(x)), \\
\rho(x)&=\rho(E_\omega(x))=\lim_{\eps\searrow0}\omega(A(1+\eps A)^{-1}E_\omega(x)) \\
&=\lim_{\eps\searrow0}\omega(A(1+\eps A)^{-1}E_\cA\circ E_\omega(x))
=\rho(E_\cA\circ E_\omega(x)).
\end{align*}
Therefore,
\begin{align*}
\widehat S_f(\rho\|\omega)
&=\widehat S_f(\rho\circ E_\cA\circ E_\omega\|\omega\circ E_\cA\circ E_\omega)
\le\widehat S_f(\rho|_\cA\|\omega|_\cA) \\
&=S_f(\rho|_\cA\|\omega|_\cA)\le S_f(\rho\|\omega),
\end{align*}
where the two inequalities are the monotonicity properties in Theorem \ref{T-2.9} and
\cite[Theorem 4.1\,(iv)]{Hi1}, and the second equality is due to Example \ref{E-2.13}.
\end{proof}

In particular, when $f(t)=t\log t$, we have the relation between the relative entropy $D$
and Belavkin and Staszewski's relative entropy $D_\BS$ (Example \ref{E-3.5}).

\begin{cor}\label{C-6.8}
For every $\rho,\omega\in M_*^+$,
$$
D(\rho\|\omega)\le D_\BS(\rho\|\omega),
$$
and $D(\rho\|\omega)=D_\BS(\rho\|\omega)$ if $\rho,\omega$ commute.
\end{cor}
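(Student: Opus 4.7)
The plan is to observe that this corollary is an immediate specialization of Theorem~\ref{T-6.4} and Proposition~\ref{P-6.7} to the particular operator convex function $\eta(t) := t\log t$ on $(0,+\infty)$. First I would recall that $\eta$ is indeed operator convex on $(0,+\infty)$ (a standard fact, for instance via the integral representation of $-\log t$ as an operator monotone function), so both theorems apply with $f = \eta$.

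Next I would make the two identifications explicit. By the definition of Umegaki--Araki relative entropy used throughout the paper (see the Introduction and \cite{Hi1}), $D(\rho\|\omega) = S_\eta(\rho\|\omega)$. On the other hand, Example~\ref{E-3.5} shows that when $\rho \ll \omega$ strongly one has $D_\BS(\rho\|\omega) = \widehat{S}_\eta(\rho\|\omega)$, and the sentence immediately after that example extends the definition $D_\BS(\rho\|\omega) := \widehat{S}_\eta(\rho\|\omega)$ to all $\rho,\omega \in M_*^+$. With these identifications in hand, the first assertion
$$
D(\rho\|\omega) = S_\eta(\rho\|\omega) \le \widehat{S}_\eta(\rho\|\omega) = D_\BS(\rho\|\omega)
$$
is just Theorem~\ref{T-6.4} applied to $f = \eta$, and the second assertion
$$
D(\rho\|\omega) = S_\eta(\rho\|\omega) = \widehat{S}_\eta(\rho\|\omega) = D_\BS(\rho\|\omega)
$$
when $\rho,\omega$ commute is Proposition~\ref{P-6.7} applied to $f = \eta$.

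There is essentially no obstacle here beyond checking that $\eta$ falls within the class of functions to which the previous results apply; the corollary is a direct consequence of the general theory developed in Section~6. For completeness one could remark that the operator convexity of $\eta(t)=t\log t$ follows from writing $\eta(t) = \int_0^\infty \psi_s(t)\,ds$ in terms of the standard Pick--Nevanlinna kernels $\psi_s$, but this is already built into the integral representation \eqref{F-3.9} used in the proof of Theorem~\ref{T-3.4}.
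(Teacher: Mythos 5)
Your proposal is correct and matches the paper's treatment: the corollary is stated there as an immediate specialization of Theorem \ref{T-6.4} and Proposition \ref{P-6.7} to $f(t)=t\log t$, with $D_\BS=\widehat S_\eta$ coming from Example \ref{E-3.5} exactly as you use it. Nothing further is needed.
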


We end this section with another martingale type convergence, which is not included in
Theorem \ref{T-5.6} since $e_\alpha Me_\alpha$'s are not unital subalgebras of $M$. Indeed,
we can prove this similarly to the proof of \cite[Theorem 4.5]{Hi1} with use of Theorem
\ref{T-5.6} and Proposition \ref{P-2.11} in view of Proposition \ref{P-6.7}.

\begin{prop}\label{P-6.9}
Let $\{e_\alpha\}$ be an increasing net of projections in $M$ such that $e_\alpha\nearrow1$.
Then for every $\rho,\omega\in M_*^+$,
$$
\lim_\alpha\widehat S_f(e_\alpha\rho e_\alpha\|e_\alpha\omega e_\alpha)
=\widehat S_f(\rho\|\omega),
$$
where $e_\alpha\omega e_\alpha$ is the restriction of $\omega$ to the reduced von Neumann
algebra $e_\alpha Me_\alpha$.
\end{prop}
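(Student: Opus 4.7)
The plan is to mirror the proof of \cite[Theorem 4.5]{Hi1}: embed the non-unital corners $e_\alpha Me_\alpha$ into a net of \emph{unital} von Neumann subalgebras amenable to the martingale convergence of Theorem \ref{T-5.6}, then strip off the resulting scalar remainder. Concretely, set $M_\alpha:=e_\alpha Me_\alpha\oplus\bC(1-e_\alpha)$. Since $e_\alpha(1-e_\alpha)=0$, $M_\alpha$ is a $*$-subalgebra, and being the sum of two ultraweakly closed pieces it is a unital von Neumann subalgebra of $M$. The family is increasing (for $\alpha\le\beta$, $e_\alpha Me_\alpha\subseteq e_\beta Me_\beta$ and $1-e_\alpha=(e_\beta-e_\alpha)+(1-e_\beta)\in M_\beta$), and $\bigl(\bigcup_\alpha M_\alpha\bigr)''=M$ because $e_\alpha xe_\alpha\to x$ strongly for every $x\in M$.

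Theorem \ref{T-5.6} applied to $\{M_\alpha\}$ yields $\widehat S_f(\rho|_{M_\alpha}\|\omega|_{M_\alpha})\nearrow\widehat S_f(\rho\|\omega)$. The two blocks of $M_\alpha$ have orthogonal supports, so Proposition \ref{P-2.11} gives the decomposition
$$
\widehat S_f(\rho|_{M_\alpha}\|\omega|_{M_\alpha})=\widehat S_f(e_\alpha\rho e_\alpha\|e_\alpha\omega e_\alpha)+R_\alpha,
$$
where $R_\alpha:=\widehat S_f(\rho(1-e_\alpha)\|\omega(1-e_\alpha))$ is the contribution from the scalar block $\bC(1-e_\alpha)$. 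By Proposition \ref{P-6.7} (or directly Example \ref{E-2.13}), on the abelian algebra $\bC$ the maximal $f$-divergence coincides with the classical one, so $R_\alpha=\omega(1-e_\alpha)f\bigl(\rho(1-e_\alpha)/\omega(1-e_\alpha)\bigr)$ with the usual boundary conventions. The integral representation \eqref{F-4.8} supplies a linear minorant $f(t)\ge a'+b't$ on $(0,+\infty)$ for some $a',b'\in\bR$, whence $R_\alpha\ge a'\omega(1-e_\alpha)+b'\rho(1-e_\alpha)\to 0$ by normality of $\rho,\omega$. Rearranging the decomposition and using $\widehat S_f(\rho|_{M_\alpha}\|\omega|_{M_\alpha})\le\widehat S_f(\rho\|\omega)$,
$$
\widehat S_f(e_\alpha\rho e_\alpha\|e_\alpha\omega e_\alpha)\le\widehat S_f(\rho\|\omega)-R_\alpha,
$$
so $\limsup_\alpha\widehat S_f(e_\alpha\rho e_\alpha\|e_\alpha\omega e_\alpha)\le\widehat S_f(\rho\|\omega)$.

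For the matching lower bound, I would invoke the joint norm lower semicontinuity (Theorem \ref{T-5.5}). A standard Cauchy--Schwarz estimate gives $\|e_\alpha\rho e_\alpha-\rho\|\le 2\rho(1-e_\alpha)^{1/2}\rho(1)^{1/2}\to 0$, and likewise $\|e_\alpha\omega e_\alpha-\omega\|\to 0$, so Theorem \ref{T-5.5} delivers $\widehat S_f(\rho\|\omega)\le\liminf_\alpha\widehat S_f(e_\alpha\rho e_\alpha\|e_\alpha\omega e_\alpha)$, pinning down the limit. The main technical subtlety is the boundary case $\omega(1-e_\alpha)=0$ with $\rho(1-e_\alpha)>0$ and $f'(+\infty)=+\infty$: then $R_\alpha=+\infty$ and the decomposition degenerates. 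However, $s(\rho)\not\le s(\omega)$ forces $\widehat S_f(\rho\|\omega)=+\infty$ by Proposition \ref{P-2.17}, and for $\alpha$ sufficiently large that $s(\rho)\wedge e_\alpha\not\le s(\omega)$ the same proposition applied in $e_\alpha Me_\alpha$ gives $\widehat S_f(e_\alpha\rho e_\alpha\|e_\alpha\omega e_\alpha)=+\infty$, so the assertion holds trivially.
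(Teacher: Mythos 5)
Your proposal is in substance the paper's own argument: the paper proves this proposition exactly by unitizing the corners as $M_\alpha=e_\alpha Me_\alpha\oplus\bC(1-e_\alpha)$, feeding this net into the martingale Theorem \ref{T-5.6}, splitting off the scalar block with Proposition \ref{P-2.11} and evaluating it classically (Example \ref{E-2.13}/Proposition \ref{P-6.7}), with lower semicontinuity supplying the matching lower bound; your write-up reconstructs this correctly, including the norm estimate $\|e_\alpha\rho e_\alpha-\rho\|\le2\rho(1-e_\alpha)^{1/2}\rho(1)^{1/2}$ needed to use Theorem \ref{T-5.5}. (You use silently, as the paper also does, that $\widehat S_f$ of functionals supported under a projection $e$ is the same whether computed in $M$ or in $eMe$; this is routine via the standard form $eL^2(M)e$ of $eMe$, but worth a sentence.)

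One aside is both unnecessary and not justified as stated. In your degenerate case you appeal to ``$\alpha$ sufficiently large that $s(\rho)\wedge e_\alpha\not\le s(\omega)$'': such $\alpha$ need not exist, since $s(\rho)\wedge e_\alpha$ can be $0$ for every $\alpha$ (take $M=\cB(\cH)$, $e_\alpha$ finite-rank projections increasing to $1$ and $s(\rho)$ a rank-one projection whose range lies in no $\mathrm{ran}\,e_\alpha$); moreover the support of $e_\alpha\rho e_\alpha$ in $e_\alpha Me_\alpha$ is $e_\alpha-e_\alpha\wedge(1-s(\rho))$, not $s(\rho)\wedge e_\alpha$. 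You also omitted the symmetric degeneracy $f(0^+)=+\infty$, $\rho(1-e_\alpha)=0<\omega(1-e_\alpha)$. Fortunately the whole case split can be dropped: the subtraction step is only needed when $\widehat S_f(\rho\|\omega)<+\infty$, and then every $R_\alpha$ is automatically finite, because both summands produced by Proposition \ref{P-2.11} take values in $(-\infty,+\infty]$ and their sum is dominated by $\widehat S_f(\rho\|\omega)$; conversely, if some $R_\alpha=+\infty$ then $\widehat S_f(\rho\|\omega)\ge\widehat S_f(\rho|_{M_\alpha}\|\omega|_{M_\alpha})=+\infty$, the upper bound is vacuous, and your Theorem \ref{T-5.5} lower bound alone gives $\lim_\alpha\widehat S_f(e_\alpha\rho e_\alpha\|e_\alpha\omega e_\alpha)=+\infty$. (If you insist on showing the corner divergences are eventually infinite there, argue instead from $\rho(e_\alpha-s(\omega))\to\rho(1-s(\omega))>0$, which yields $s(e_\alpha\rho e_\alpha)\not\le s(\omega)=s(e_\alpha\omega e_\alpha)$ for all large $\alpha$ and then apply Proposition \ref{P-2.17} in $e_\alpha Me_\alpha$.)
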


\section{$C^*$-algebra case}

Let $f$ be an operator convex function on $(0,+\infty)$ as before. Let $\cA$ be a unital
$C^*$-algebra, and $\cA_+^*$ be the set of positive linear functionals (automatically bounded)
on $\cA$. In this section we extend the definition of the maximal $f$-divergence to
$\rho,\sigma\in\cA_+^*$. For any $\rho,\sigma\in\cA_+^*$ set $\eta:=\rho+\sigma$, and
$(\pi_\eta,\cH_\eta,\xi_\eta)$ be the cyclic representation of $\cA$ associated with $\eta$
so that $\eta(a)=\<\xi_\eta,\pi_\eta(a)\xi_\eta\>$, $a\in\cA$, and
$\cH_\eta=\overline{\pi_\eta(\cA)\xi_\eta}$. Then there exists a $T\in\pi_\eta(\cA)'_+$ with
$0\le T\le I$ such that
$$
\rho(a)=\<\xi_\eta,T\pi_\eta(a)\xi_\eta\>,\quad
\sigma(a)=\<\xi_\eta,(I-T)\pi_\eta(a)\xi_\eta\>,\qquad a\in\cA.
$$
The normal extensions $\tilde\rho,\tilde\sigma$ of $\rho,\sigma$ to $\pi_\eta(\cA)''$ are
defined by
$$
\tilde\rho(x):=\<\xi_\eta,Tx\xi_\eta\>,\quad
\tilde\sigma(x):=\<\xi_\eta,(I-T)x\xi_\eta\>,\qquad x\in\pi_\eta(\cA)'',
$$
so that $\rho=\tilde\rho\circ\pi_\eta$ and $\sigma=\tilde\sigma\circ\pi_\eta$.

\begin{definition}\label{D-7.1}\rm
For every $\rho,\sigma\in\cA_+^*$, with the normal extensions $\tilde\rho,\tilde\sigma$ to
$\pi_\eta(\cA)''$ ($\eta=\rho+\sigma$) we define the \emph{maximal $f$-divergence} of
$\rho,\sigma$ by
$$
\widehat S_f(\rho\|\sigma):=\widehat S_f(\tilde\rho\|\tilde\sigma).
$$
In fact, $\widehat S_f(\rho\|\sigma)$ has the same expression as \eqref{F-5.10} with the
spectral decomposition $T=\int_0^1t\,dE(t)$.
\end{definition}

\begin{lemma}\label{L-7.2}
let $\pi$ be a representation of $\cA$ on a Hilbert space $\cH$. Assume that
$\rho,\sigma\in\cA_+^*$ have the normal extensions $\overline\rho,\overline\sigma$ to
$\pi(\cA)''$, i.e., $\overline\rho,\overline\sigma$ are normal positive linear functionals on
$\pi(\cA)''$ such that $\rho=\overline\rho\circ\pi$ and $\sigma=\overline\sigma\circ\pi$. Then
$$
\widehat S_f(\rho\|\sigma)=\widehat S_f(\overline\rho\|\overline\sigma).
$$
\end{lemma}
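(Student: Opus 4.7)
The plan is to identify both $\widehat S_f(\rho\|\sigma)$ and $\widehat S_f(\overline\rho\|\overline\sigma)$ with the same integral \eqref{F-5.10}, by comparing the cyclic representation of $\cA$ coming from $\eta:=\rho+\sigma$ with the cyclic representation of $M:=\pi(\cA)''$ coming from $\overline\eta:=\overline\rho+\overline\sigma\in M_*^+$. Note that $\overline\eta$ is a normal positive functional on $M$ with $\eta=\overline\eta\circ\pi$.

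First I would show that if $(\pi_{\overline\eta},\cH_{\overline\eta},\xi_{\overline\eta})$ denotes the GNS cyclic representation of the von Neumann algebra $M$ associated with $\overline\eta$, then $(\pi_{\overline\eta}\circ\pi,\cH_{\overline\eta},\xi_{\overline\eta})$ is a cyclic representation of $\cA$ implementing $\eta$. The GNS identity is immediate from $\eta=\overline\eta\circ\pi$, while cyclicity of $\xi_{\overline\eta}$ for $\pi_{\overline\eta}(\pi(\cA))\xi_{\overline\eta}$ follows from Kaplansky density: $\pi(\cA)$ is $\sigma$-strongly dense in $M$ and $\pi_{\overline\eta}$ is a normal representation of $M$, so every vector $\pi_{\overline\eta}(x)\xi_{\overline\eta}$ with $x\in M$ lies in the norm closure of $\pi_{\overline\eta}(\pi(\cA))\xi_{\overline\eta}$. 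By the uniqueness of the GNS construction this yields a unitary identification with $(\pi_\eta,\cH_\eta,\xi_\eta)$ used in Definition \ref{D-7.1}.

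Next I would verify that under this identification the operator $T\in\pi_\eta(\cA)'_+$ characterized by $\rho(a)=\<\xi_\eta,T\pi_\eta(a)\xi_\eta\>$ (the $T$ appearing in Definition \ref{D-7.1}) coincides with the operator $T'\in\pi_{\overline\eta}(M)'_+$ determined by $\overline\rho(x)=\<\xi_{\overline\eta},T'\pi_{\overline\eta}(x)\xi_{\overline\eta}\>$ that enters the computation of $\widehat S_f(\overline\rho\|\overline\sigma)$ via \eqref{F-5.10}. Since $\pi(\cA)$ is $\sigma$-weakly dense in $M$ and $\pi_{\overline\eta}$ is normal, $\pi_{\overline\eta}(\pi(\cA))''=\pi_{\overline\eta}(M)$ and hence $\pi_{\overline\eta}(\pi(\cA))'=\pi_{\overline\eta}(M)'$, so both $T$ and $T'$ belong to $\pi_{\overline\eta}(M)'_+$. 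The normal positive functional $x\mapsto\<\xi_{\overline\eta},T\pi_{\overline\eta}(x)\xi_{\overline\eta}\>$ on $M$ agrees with $\overline\rho$ on the $\sigma$-weakly dense subset $\pi(\cA)$, hence equals $\overline\rho$ on all of $M$; the uniqueness in the characterization of $T'$ then forces $T=T'$.

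With $T=T'$ established, both sides reduce to the same integral $\int_0^1(1-t)f\bigl({t\over1-t}\bigr)\,d\|E(t)\xi_{\overline\eta}\|^2$, where $T=\int_0^1t\,dE(t)$ is the common spectral decomposition, via formula \eqref{F-5.10} and Definition \ref{D-7.1}. The step I expect to be the most delicate is the cyclicity of $\xi_{\overline\eta}$ for $\pi_{\overline\eta}(\pi(\cA))$ together with the ensuing equality of commutants; both rest on normality of $\pi_{\overline\eta}$ combined with Kaplansky density, and once these are in hand the remainder is a clean application of the uniqueness assertions in the GNS construction and in \eqref{F-5.9}.
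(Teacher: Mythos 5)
Your proposal is correct and follows essentially the same route as the paper: identify the cyclic representation of $\cA$ for $\eta$ with $(\pi_{\overline\eta}\circ\pi,\cH_{\overline\eta},\xi_{\overline\eta})$ via Kaplansky density and normality, check that the operator $T$ in the commutant is the same for $\rho$ and $\overline\rho$, and conclude that both divergences equal the common integral \eqref{F-5.10}. The only cosmetic difference is that the paper verifies directly that the operator for $\overline\rho$ also serves as the operator for $\rho$ in the transported cyclic representation, whereas you deduce $T=T'$ from density plus the uniqueness in \eqref{F-5.9}; these are the same argument.
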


\begin{proof}
Let $\overline\eta:=\overline\rho+\overline\sigma$,
$(\pi_{\overline\eta},\cH_{\overline\eta},\xi_{\overline\eta})$ be the cyclic representation of
$M:=\pi(A)''$ associated with $\overline\eta$, and $T\in\pi_{\overline\eta}(M)_+'$ be such that
$\overline\rho(x)=\<\xi_{\overline\eta},T\pi_{\overline\eta}(x)\xi_{\overline\eta}\>$ for all
$x\in M$. Then
$$
\eta(a)=\overline\eta(\pi(a))
=\<\xi_{\overline\eta},\pi_{\overline\eta}(\pi(a))\xi_{\overline\eta}\>,
\qquad a\in A,
$$
and
$$
\cH_{\overline\eta}=\overline{\pi_{\overline\eta}(M)\xi_{\overline\eta}}
=\overline{\pi_{\overline\eta}(\pi(\cA))\xi_{\overline\eta}}.
$$
Here, the last equality is seen
as follows: for any $x\in M$, by the Kaplansky density theorem \cite[Theorem II.4.8]{Ta1},
choose a net $a_\alpha\in\cA$ such that $\sup_\alpha\|\pi(a_\alpha)\|<+\infty$ and
$\pi(a_\alpha)\to x$ strongly*. Then
$$
\|(\pi_{\overline\eta}(\pi(a_\alpha)-x)\xi_{\overline\eta}\|^2
=\overline\eta((\pi(a_\alpha)-x)^*(\pi(a_\alpha)-x))\ \longrightarrow\ 0.
$$
Therefore, $(\pi_{\overline\eta}\circ\pi,\cH_{\overline\eta},\xi_{\overline\eta})$ is the
cyclic representation of $\cA$ associated with $\eta$. Moreover, note that
$T\in(\pi_{\overline\eta}\circ\pi)(\cA)'_+$ and
$\rho(a)=\<\xi_{\overline\eta},T(\pi_{\overline\eta}\circ\pi)(a)\xi_{\overline\eta}\>$ for all
$a\in\cA$. Hence by Definition \ref{D-7.1}, with the spectral decomposition
$T=\int_0^1t\,dE(t)$ we have
\begin{align}\label{F-7.1}
\widehat S_f(\rho\|\sigma)
=\int_0^1(1-t)f\biggl({t\over 1-t}\biggr)\,d\|E(t)\xi_{\overline\eta}\|^2.
\end{align}
On the other hand, applying \eqref{F-5.10} to $\overline\rho,\overline\sigma\in M_*^+$ shows
that $\widehat S_f(\overline\rho\|\overline\sigma)$ has the same integral expression as
\eqref{F-7.1}, so the asserted equality follows.
\end{proof}

The above lemma says that $\widehat S_f(\rho\|\sigma)$ for $\rho,\sigma\in\cA_+^*$ can be
defined as $\widehat S_f(\overline\rho\|\overline\sigma)$ via any
$(\pi,\overline\rho,\overline\sigma)$ of a representation $\pi$ of $\cA$ and normal extensions
$\overline\rho,\overline\sigma$ of $\rho,\sigma$ to $\pi(\cA)''$. An example of such
representation, besides $\pi_\eta$ in Definition \ref{D-7.1}, is the universal representation
$\pi$ of $\cA$, for which $\pi(\cA)''\cong\cA^{**}$ (isometric to the second conjugate space of
$\cA$), see \cite[\S III.2]{Ta1}.

In the rest of the section we give some basic properties $\widehat S_f(\rho\|\sigma)$ for
$\rho,\sigma\in\cA_+^*$.

\begin{prop}\label{P-7.3}
The function $(\rho,\sigma)\in\cA_+^*\times\cA_+^*\mapsto\widehat S_f(\rho\|\sigma)$ is jointly
convex and jointly lower semicontinuous in the norm topology.
\end{prop}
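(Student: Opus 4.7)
The plan is to reduce both assertions to their von Neumann algebra analogs (Theorem \ref{T-2.9} for joint convexity, Theorem \ref{T-5.5} for norm lower semicontinuity) via passage to the universal enveloping von Neumann algebra. Let $\pi$ denote the universal representation of $\cA$, so $M:=\pi(\cA)''$ is canonically isomorphic to $\cA^{**}$. The defining property of this representation is that every $\rho\in\cA_+^*$ admits a unique normal extension $\overline\rho\in M_*^+$, and the map $\rho\mapsto\overline\rho$ is a positive linear isometry from $\cA^*$ onto the predual $M_*$ (see \cite[\S III.2]{Ta1}). Lemma \ref{L-7.2} applied to this fixed $\pi$ gives $\widehat S_f(\rho\|\sigma)=\widehat S_f(\overline\rho\|\overline\sigma)$ for all $\rho,\sigma\in\cA_+^*$.

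For the joint convexity, let $(\rho_i,\sigma_i)\in\cA_+^*\times\cA_+^*$ and $\lambda_i\ge0$ ($1\le i\le n$). Linearity of the extension map yields $\overline{\sum_i\lambda_i\rho_i}=\sum_i\lambda_i\overline{\rho_i}$ and similarly for the $\sigma_i$'s, so Theorem \ref{T-2.9} applied in $M$ gives
$$
\widehat S_f\Biggl(\sum_i\lambda_i\rho_i\Bigg\|\sum_i\lambda_i\sigma_i\Biggr)
=\widehat S_f\Biggl(\sum_i\lambda_i\overline{\rho_i}\Bigg\|\sum_i\lambda_i\overline{\sigma_i}\Biggr)
\le\sum_i\lambda_i\widehat S_f(\overline{\rho_i}\|\overline{\sigma_i})
=\sum_i\lambda_i\widehat S_f(\rho_i\|\sigma_i).
$$

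For the lower semicontinuity, suppose $\rho_n\to\rho$ and $\sigma_n\to\sigma$ in the norm topology of $\cA^*$. Because $\rho\mapsto\overline\rho$ is a norm isometry, $\overline{\rho_n}\to\overline\rho$ and $\overline{\sigma_n}\to\overline\sigma$ in the norm topology of $M_*$. Theorem \ref{T-5.5} applied to $M$ then gives $\widehat S_f(\overline\rho\|\overline\sigma)\le\liminf_n\widehat S_f(\overline{\rho_n}\|\overline{\sigma_n})$, which via Lemma \ref{L-7.2} is exactly $\widehat S_f(\rho\|\sigma)\le\liminf_n\widehat S_f(\rho_n\|\sigma_n)$.

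The only nontrivial point is the identification $\pi(\cA)''\cong\cA^{**}$ together with the corresponding bijection between positive linear functionals on $\cA$ and normal positive linear functionals on $M$, but this is classical and already implicitly used elsewhere in the paper; once it is in hand, neither step requires any new argument beyond invoking the von Neumann algebra results. There is no serious obstacle, only the bookkeeping of ensuring that the chosen representation $\pi$ is one to which Lemma \ref{L-7.2} applies simultaneously for all the functionals in question, which is precisely why the universal representation is the right choice.
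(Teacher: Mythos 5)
Your proposal is correct and follows essentially the same route as the paper: pass to the universal representation, use Lemma \ref{L-7.2} to replace each functional by its normal extension, and then invoke Theorem \ref{T-2.9} for joint convexity and Theorem \ref{T-5.5} for norm lower semicontinuity (the paper merely makes the isometry $\|\overline\rho_n-\overline\rho\|=\|\rho_n-\rho\|$ explicit via the Kaplansky density theorem, where you cite it as a classical property of the universal representation).
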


\begin{proof}
Let $\pi$ be the universal representation of $\cA$. For $\rho_i,\sigma_i\in M_*^+$ and
$\lambda_i\ge0$ ($1\le i\le n$) let $\overline\rho_i,\overline\sigma_i$ be the normal extensions
of $\rho_i,\sigma_i$ to $\pi(\cA)''$. By Lemma \ref{L-7.2} and the joint convexity property in
Theorem \ref{T-2.9},
$$
\widehat S_f\Biggl(\sum_{i=1}^n\lambda_i\rho_i\Bigg\|\sum_{i=1}^n\lambda\sigma_i\Biggr)
=\widehat S_f\Biggl(\sum_{i=1}^n\lambda_i\overline\rho_i\Bigg\|
\sum_{i=1}^n\lambda\overline\sigma_i\Biggr)
\le\sum_{i=1}^n\lambda_iS_f(\overline\rho_i\|\overline\sigma_i)
=\sum_{i=1}^n\lambda_iS_f(\rho_i\|\sigma_i).
$$
Next, let $\rho_n,\rho,\sigma_n,\sigma\in\cA_+^*$, $n\in\bN$, be such that $\|\rho_n-\rho\|\to0$
and $\|\sigma_n-\sigma\|\to0$. From the Kaplansky density theorem and \cite[Lemma IV.3.8]{Is},
it follows that
$$
\|\overline\rho_n-\rho\|
=\sup_{x\in\pi(\cA)'',\|x\|\le1}|(\overline\rho_n-\overline\rho)(x)|
=\sup_{a\in\cA,\|a\|\le1}|(\overline\rho_n-\overline\rho)(\pi(a))|
=\|\rho_n-\rho\|\ \longrightarrow\ 0,
$$
and similarly $\|\overline\sigma_n-\overline\sigma\|=\|\sigma_n-\sigma\|\to0$. Hence by Lemma
\ref{L-7.2} and Theorem \ref{T-5.5},
$$
\widehat S_f(\rho\|\sigma)=\widehat S_f(\overline\rho\|\overline\sigma)
\le\liminf_{n\to\infty}\widehat S_f(\overline\rho_n\|\overline\sigma_n)
=\liminf_{n\to\infty}\widehat S_f(\rho_n\|\sigma_n),
$$
showing the lower semicontinuity in the norm topology.
\end{proof}

\begin{prop}\label{P-7.4}
Let $\cA_0$ be another unital $C^*$-algebra and $\Phi:\cA_0\to\cA$ be a unital positive linear
map. Then for every $\rho,\sigma\in\cA_+^*$,
$$
\widehat S_f(\rho\circ\Phi\|\sigma\circ\Phi)\le\widehat S_f(\rho\|\sigma).
$$
\end{prop}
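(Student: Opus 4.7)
The plan is to reduce to the von Neumann algebra case by lifting everything to the enveloping von Neumann algebras $\cA_0^{**}$ and $\cA^{**}$, and then invoking Theorem \ref{T-2.9}. Under the canonical identifications $\pi_0(\cA_0)''\cong\cA_0^{**}$ and $\pi(\cA)''\cong\cA^{**}$ via the universal representations $\pi_0,\pi$ of $\cA_0,\cA$, each $\rho\in\cA_+^*$ is canonically a normal positive functional $\overline\rho$ on $\cA^{**}$ (since $\cA^*$ is the predual of $\cA^{**}$), and similarly for $\sigma$. Lemma \ref{L-7.2} then gives
\[
\widehat S_f(\rho\|\sigma)=\widehat S_f(\overline\rho\|\overline\sigma),\qquad
\widehat S_f(\rho\circ\Phi\|\sigma\circ\Phi)
=\widehat S_f(\overline{\rho\circ\Phi}\|\overline{\sigma\circ\Phi}).
\]

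Next I would extend $\Phi$ to its second adjoint $\Phi^{**}:\cA_0^{**}\to\cA^{**}$, defined as $(\Phi^*)^*$ where $\Phi^*:\cA^*\to\cA_0^*$ is the Banach space adjoint of the bounded map $\Phi$. Three properties of $\Phi^{**}$ need to be verified. Unitality is immediate since $\Phi^{**}|_{\cA_0}=\Phi$ and $\Phi(1)=1$. Normality follows because $\Phi^{**}$ is weak-$*$/weak-$*$ continuous by construction, and the weak-$*$ topology on $\cA^{**}$ coincides with its $\sigma$-weak (predual-weak) topology, so $\Phi^{**}$ is normal in the sense of Lemma \ref{L-2.4}. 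Positivity is obtained via Kaplansky density: for $y\in(\cA_0^{**})_+$ one takes a net $(a_\alpha)$ in $(\cA_0)_+$ with $\|a_\alpha\|\le\|y\|$ converging to $y$ weakly-$*$; then $\Phi(a_\alpha)\in\cA_+$ converges weakly-$*$ to $\Phi^{**}(y)$, which must lie in $(\cA^{**})_+$ by weak-$*$ closedness of this cone.

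Finally, for any $a\in\cA_0$ one has $(\overline\rho\circ\Phi^{**})(a)=\overline\rho(\Phi(a))=\rho(\Phi(a))=(\rho\circ\Phi)(a)$, so uniqueness of normal extensions gives $\overline\rho\circ\Phi^{**}=\overline{\rho\circ\Phi}$, and likewise for $\sigma$. Applying Theorem \ref{T-2.9} to the unital positive normal map $\Phi^{**}$ then yields
\[
\widehat S_f(\overline{\rho\circ\Phi}\|\overline{\sigma\circ\Phi})
=\widehat S_f(\overline\rho\circ\Phi^{**}\|\overline\sigma\circ\Phi^{**})
\le\widehat S_f(\overline\rho\|\overline\sigma),
\]
which combines with the two displays above to give the asserted inequality. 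The main obstacle is not conceptual but organizational: the careful check that $\Phi^{**}$ is simultaneously unital, positive, and normal between the enveloping von Neumann algebras; once this is done, Lemma \ref{L-7.2} and Theorem \ref{T-2.9} do all the remaining work.
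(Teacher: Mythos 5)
Your proposal is correct and follows essentially the same route as the paper: lift $\rho,\sigma$ to their normal extensions on the enveloping von Neumann algebra via the universal representation, extend $\Phi$ to the unital positive normal map $\Phi^{**}$ (the paper's $\overline\Phi$, with positivity likewise checked by Kaplansky density), identify $\overline\rho\circ\Phi^{**}$ with the normal extension of $\rho\circ\Phi$, and conclude by Lemma \ref{L-7.2} together with the monotonicity of Theorem \ref{T-2.9}. The only cosmetic difference is that you verify positivity by approximating a positive element of $\cA_0^{**}$ by positive elements of $\cA_0$ and using weak-$*$ closedness of the positive cone, whereas the paper approximates an arbitrary element strongly$*$ and works with $x^*x$.
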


\begin{proof}
Let $\pi,\pi_0$ be the universal representations of $\cA,\cA_0$, respectively. One can define
the unital positive normal map $\overline\Phi:\pi_0(\cA_0)''\to\pi(\cA)''$ subject to the
commutative diagrams (see \cite[Lemma III.2.2]{Ta1}):
$$
\begin{CD}
\cA_0 @> i >> \cA_0^{**} @> \cong >> \pi_0(\cA_0)''\\
@V \Phi VV  @VV \Phi^{**} V @VV \overline\Phi V\\
\cA @> i >> \cA^{**} @> \cong >> \pi(\cA)''
\end{CD},
$$
where $i$ is the canonical imbedding of $\cA$ ($\cA_0$) into $\cA^{**}$ ($\cA_0^{**}$) and
$\Phi^{**}$ is the second conjugate map. Here it is immediate to verify that
$\overline\Phi\circ\pi_0(a)=\pi\circ\Phi(a)$ for all $a\in\cA_0$. Moreover, the positivity of
$\overline\Phi$ is seen as follows: for any $x\in\pi_0(\cA_0)''$, by the Kaplansky density
theorem, choose a net $a_\alpha\in\cA_0$ such that
$\sup_\alpha\|\pi_0(a_\alpha)\|<+\infty$ and $\pi_0(a_\alpha)\to x$ strongly*. Then
$\pi_0(a_\alpha^*a_\alpha)\to x^*x$ strongly so that
$\overline\Phi(\pi_0(a_\alpha^*a_\alpha))\to\overline\Phi(x^*x)$ weakly. Since
$\overline\Phi(\pi_0(a_\alpha^*a_\alpha))=\pi(\Phi(a_\alpha^*a_\alpha))\ge0$,
$\overline\Phi(x^*x)\ge0$. For every $\rho\in\cA_+^*$, since $\overline\rho\circ\overline\Phi$
is normal on $\pi_0(\cA_0)''$ and
$$
\overline\rho\circ\overline\Phi\circ\pi_0(a)
=\overline\rho\circ\pi\circ\Phi(a)=\rho\circ\Phi(a),\qquad a\in\cA_0,
$$
one sees that $\overline\rho\circ\overline\Phi$ is the normal extension of $\rho\circ\Phi$ to
$\pi_0(\cA_0)''$. Therefore, by Lemma \ref{L-7.2} and the monotonicity property in Theorem
\ref{T-2.9},
$$
\widehat S_f(\rho\circ\Phi\|\sigma\circ\Phi)
=\widehat S_f(\overline\rho\circ\overline\Phi\|\overline\sigma\circ\overline\Phi)
\le\widehat S_f(\overline\rho\|\overline\sigma)=\widehat S_f(\rho\|\sigma)
$$
for all $\rho,\sigma\in\cA_+^*$.
\end{proof}

\begin{prop}\label{P-7.5}
Let $\{\cA_\alpha\}$ be an increasing net of unital $C^*$-subalgebras of $\cA$ such that
$\bigcup_\alpha\cA_\alpha$ is norm-dense in $\cA$. Then for every $\rho,\sigma\in\cA_+^*$,
$$
\widehat S_f(\rho|_{\cA_\alpha}\|\sigma|_{\cA_\alpha})
\ \nearrow\ \widehat S_f(\rho\|\sigma).
$$
\end{prop}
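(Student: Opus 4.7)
The plan is to lift the statement to the von Neumann algebra setting via the universal representation and then invoke Theorem \ref{T-5.6}. Let $\pi$ be the universal representation of $\cA$ and set $M:=\pi(\cA)''$, and let $\overline\rho,\overline\sigma\in M_*^+$ denote the normal extensions of $\rho,\sigma$ to $M$ as used in Section 7. By Lemma \ref{L-7.2} one has
$$
\widehat S_f(\rho\|\sigma)=\widehat S_f(\overline\rho\|\overline\sigma).
$$

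For each index $\alpha$, put $M_\alpha:=\pi(\cA_\alpha)''\subset M$; this is a unital von Neumann subalgebra since $\cA_\alpha$ contains the unit of $\cA$. The restriction $\overline\rho|_{M_\alpha}$ is normal on $M_\alpha$, and it extends $\rho|_{\cA_\alpha}$ in the sense required by Lemma \ref{L-7.2} because, for $a\in\cA_\alpha$,
$$
\overline\rho|_{M_\alpha}(\pi(a))=\overline\rho(\pi(a))=\rho(a)=(\rho|_{\cA_\alpha})(a),
$$
and similarly for $\sigma$. Applying Lemma \ref{L-7.2} to the representation $\pi|_{\cA_\alpha}$ of $\cA_\alpha$ therefore yields
$$
\widehat S_f(\rho|_{\cA_\alpha}\|\sigma|_{\cA_\alpha})
=\widehat S_f(\overline\rho|_{M_\alpha}\|\overline\sigma|_{M_\alpha}).
$$

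Next, since $\bigcup_\alpha\cA_\alpha$ is norm-dense in $\cA$, the set $\bigcup_\alpha\pi(\cA_\alpha)$ is norm-dense in $\pi(\cA)$, and hence strongly dense in $M=\pi(\cA)''$ by the Kaplansky density theorem; a fortiori $\bigcup_\alpha M_\alpha$ is strongly dense in $M$, so $\bigl(\bigcup_\alpha M_\alpha\bigr)''=M$. Theorem \ref{T-5.6} applied to $M$, the net $\{M_\alpha\}$, and the functionals $\overline\rho,\overline\sigma$ then gives
$$
\widehat S_f(\overline\rho|_{M_\alpha}\|\overline\sigma|_{M_\alpha})
\ \nearrow\ \widehat S_f(\overline\rho\|\overline\sigma).
$$
Combining the three displayed equalities proves the asserted convergence.

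The only delicate point is the identification step via Lemma \ref{L-7.2}: one must check that $\overline\rho|_{M_\alpha}$ genuinely serves as a normal extension of $\rho|_{\cA_\alpha}$ in the sense required by that lemma, and that $M_\alpha$ is indeed the double commutant $(\pi|_{\cA_\alpha}(\cA_\alpha))''$. Both are immediate from the definitions, so no substantive obstacle remains—the argument is essentially a transfer of Theorem \ref{T-5.6} along the universal representation.
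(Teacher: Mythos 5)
Your proposal is correct and follows essentially the same route as the paper: pass to the universal representation, identify $\widehat S_f(\rho|_{\cA_\alpha}\|\sigma|_{\cA_\alpha})$ with $\widehat S_f(\overline\rho|_{\pi(\cA_\alpha)''}\|\overline\sigma|_{\pi(\cA_\alpha)''})$ via Lemma \ref{L-7.2}, and apply the martingale convergence of Theorem \ref{T-5.6} to the net $\{\pi(\cA_\alpha)''\}$, which generates $\pi(\cA)''$ by norm-density. Your extra verification that $\overline\rho|_{\pi(\cA_\alpha)''}$ is a normal extension of $\rho|_{\cA_\alpha}$ is exactly the point the paper leaves implicit, so there is nothing to correct.
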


\begin{proof}
With the universal representation $\pi$ of $\cA$ we have an increasing net
$\{\pi(\cA_\alpha)''\}$ of unital von Neumann subalgebras of $\pi(\cA)''$ such that
$\bigl(\bigcup_\alpha\pi(\cA_\alpha)\bigr)''=\pi(\cA)''$. Hence by Lemma \ref{L-7.2} and
Theorem \ref{T-5.6},
$$
\widehat S_f(\rho|_{\cA_\alpha}\|\sigma|_{\cA_\alpha})
=\widehat S_f\bigl(\overline\rho|_{\pi(\cA_\alpha)''}\|
\overline\sigma|_{\pi(\cA_\alpha)''}\bigr)
\ \nearrow\ \widehat S_f(\overline\rho\|\overline\sigma)=\widehat S_f(\rho\|\sigma)
$$
for all $\rho,\sigma\in\cA_+^*$.
\end{proof}

\section{Closing remarks and problems}

In the previous paper \cite{Hi1} we discussed standard $f$-divergences $S_f(\rho\|\sigma)$ in
von Neumann algebra setting for general operator convex functions $f$ on $(0,+\infty)$. In this
paper we present a systematic study of another type of quantum $f$-divergences
$\widehat S_f(\rho\|\sigma)$ called the maximal $f$-divergences in the same setting. Starting
with a rather abstract definition (Definition \ref{D-2.8}) we present more explicit expressions
of $\widehat S_f(\rho\|\sigma)$ in an integral formula (Theorem \ref{T-4.2}) and in a
variational formula (Theorem \ref{T-6.3}), from which we can derive several important
properties of $\widehat S_f(\rho\|\sigma)$. Properties of $S_f(\rho\|\sigma)$ and
$\widehat S_f(\rho\|\sigma)$ are common in most cases, but there are also small differences
between them. For instance, the joint lower semicontinuity of $S_f(\rho\|\sigma)$ holds in the
$\sigma(M_*,M)$-topology, but that of $\widehat S_f(\rho\|\sigma)$ is shown in the norm
topology, and it is open whether $\widehat S_f$ has the same property in the
$\sigma(M_*,M)$-topology, as mentioned in Remark \ref{R-5.7}. The monotonicity inequality (DPI)
of $S_f$ holds under unital Schwarz normal maps, while that of $\widehat S_f$ is shown more
generally under unital simply positive normal maps.

We have the general inequality $S_f\le\widehat S_f$ (Theorem \ref{T-6.4}). For matrices
$\rho,\sigma\in\bM_d^+$ with $s(\rho)\le s(\sigma)$, it was shown in \cite[Theorem 4.3]{HM} that
$S_f(\rho\|\sigma)=\widehat S_f(\rho\|\sigma)$ holds if and only if $\rho\sigma=\sigma\rho$,
under a mild assumption on the support of the representing measure for the integral expression
of $f$. In particular, for matrices $\rho,\sigma$ with $s(\rho)\le s(\sigma)$,
$D(\rho\|\sigma)=D_\BS(\rho\|\sigma)$ holds if and only if $\rho\sigma=\sigma\rho$. An
interesting problem is to extend this result to the von Neumann algebra setting. In the proof
of \cite[Theorem 4.3]{HM} we used the reversibility via equality in the monotonicity inequality
for $S_f$. Thus, our next research topic should be the reversibility question under equality in
the monotonicity inequality for $S_f$. Here we say that a unital normal map $\Phi:M_0\to M$
(which satisfies a kind of positivity such as complete positivity) is \emph{reversible} for
$\{\rho,\sigma\}$ in $M_*^+$ if there exists a map $\Psi:M\to M_0$ of similar kind such that
$\rho\circ\Phi\circ\Psi=\rho$ and $\sigma\circ\Phi\circ\Psi=\sigma$. The question says whether
$\Phi$ is reversible for $\{\rho,\sigma\}$ or not if
$S_f(\rho\circ\Phi\|\sigma\circ\Phi)=S_f(\rho\|\sigma)<+\infty$. In the matrix case, the
question was well studied in \cite{HMPB,HM}, including discussions on the equality case in the
monotonicity inequality for $\widehat S_f$. For reversibility in the von Neumann algebra case,
former results in some special cases of relative entropy and the standard R\'enyi divergences
are found in, e.g., \cite{Pe1,Pe2,JP}, and recent results in the case of sandwiched R\'enyi
divergences are obtained in \cite{Je1,Je2}.

The notion of quantum $f$-divergences in the opposite direction to $\widehat S_f$ is that of
measured (or minimal) $f$-divergences, whose matrix case was discussed in \cite{HM}. For
$\rho,\sigma\in M_*^+$, taking account of Theorem \ref{T-6.3} one can define the
\emph{measured $f$-divergence} $S_f^\meas(\rho\|\sigma)$ of $\rho$ with respect to $\sigma$ by
$$
S_f^\meas(\rho\|\sigma):=\sup\{S_f(\rho\circ\Phi\|\sigma\circ\Phi):
\Phi:L^\infty([0,1],\nu)\to M\},
$$
where $\Phi$ runs over unital positive normal map from $L^\infty([0,1],\nu)$ on a Borel
probability space $([0,1],\nu)$ to $M$. When $M$ is $\sigma$-finite so that it has a faithful
normal state, note that the above $S_f^\meas(\rho\|\sigma)$ is the supremum of the classical
$f$-divergence $S_f(\cM(\rho)\|\cM(\sigma)$ over $M$-valued measurements $\cM$ on $[0,1]$, i.e.,
$\cM$ is $\sigma$-additive $M_+$-valued measure with $\cM([0,1])=1$, where $\cM(\rho)$ denotes
a Borel measure $\rho(\cM(\cdot))$ on $[0,1]$. One can also define $S_f^\pr(\rho\|\sigma)$ by
restricting $\cM$ in the above to measurements whose values are projections in $M$. Due to the
monotonicity of $S_f$ in \cite[Theorem 4.1\,(iv)]{Hi1} it is clear that
$$
S_f^\pr(\rho\|\sigma)\le S_f^\meas(\rho\|\sigma)\le S_f(\rho\|\sigma)
\le\widehat S_f(\rho\|\sigma).
$$
It is interesting to characterize the equality case $S_f(\rho\|\sigma)=S_f^\meas(\rho\|\sigma)$
or $S_f(\rho\|\sigma)=S_f^\pr(\rho\|\sigma)$ in terms of commutativity of $\rho,\sigma$, as in
\cite[Theorem 4.18]{HM} in the matrix case.

Apart from the conventional (or standard) R\'enyi divergences, a new type of R\'enyi divergences
called the sandwiched ones have extensively been developed in these years. For matrices
$\rho,\sigma\in\bM_d^+$, the \emph{sandwiched R\'enyi divergence}
$\widetilde D_\alpha(\rho\|\sigma)$ for $\alpha\in(0,\infty)\setminus\{1\}$ is defined by
$$
\widetilde D_\alpha(\rho\|\sigma)
:={1\over\alpha-1}\log{\widetilde Q_\alpha(\rho\|\sigma)\over\Tr\rho},\qquad
\widetilde Q_\alpha(\rho\|\sigma)
:=\Tr\bigl(\sigma^{1-\alpha\over2\alpha}\rho\sigma^{1-\alpha\over2\alpha}\bigr)^\alpha,
$$
while the standard R\'enyi divergence is
$$
D_\alpha(\rho\|\sigma):={1\over\alpha-1}\log{\Tr\rho^\alpha\sigma^{1-\alpha}\over\Tr\rho},
$$
where $\sigma^\gamma$ for $\gamma<0$ is defined via the generalized inverse. It is widely known
\cite{HiPe,ON,MO} that $D_\alpha(\rho\|\sigma)$ and $\widetilde D_\alpha(\rho\|\sigma)$,
together with $D_1(\rho\|\sigma)=D(\rho\|\sigma)/\Tr\rho$, play a significant role in quantum
state discrimination, thus enjoying good operational interpretation. The extension of
$\widetilde D_\alpha(\rho\|\sigma)$ to the von Neumann algebra setting has been made in recent
papers \cite{BST,Je1,Je2}, and a detailed exposition on $D_\alpha(\rho\|\sigma)$ in von Neumann
algebras has been given in \cite{Hi1}. For $\rho,\sigma\in M_*^+$, the
\emph{maximal R\'enyi divergence} $\widehat D_\alpha(\rho\|\sigma)$ is defined by
$$
\widehat D_\alpha(\rho\|\sigma)
:={1\over\alpha-1}\log{\widehat Q_\alpha(\rho\|\sigma)\over\rho(1)},\qquad
\widehat Q_\alpha(\rho\|\sigma):=\begin{cases}
\widehat S_{f_\alpha}(\rho\|\sigma) & \text{if $\alpha>1$}, \\
-\widehat S_{f_\alpha}(\rho\|\sigma) & \text{if $0<\alpha<1$},
\end{cases}
$$
where $f_\alpha(t):=t^\alpha$ ($\alpha>1$) and $-t^\alpha$ ($0<\alpha<1$). (Although $f_\alpha$
for $\alpha>2$ is not operator convex on $(0,+\infty)$, one can define
$\widehat S_{f_\alpha}(\rho\|\sigma)$, for instance, by the integral expression in
\eqref{F-4.9} with $f=f_\alpha$.) For matrices $\rho,\sigma\in\bM_d^+$ with
$s(\rho)\le s(\sigma)$, we have
$\widehat Q_\alpha(\rho\|\sigma)=\Tr\sigma(\sigma^{-1/2}\rho\sigma^{-1/2})^\alpha$ by
\eqref{F-4.14}, and from \cite[Remark 4.6]{HM} we see that
\begin{align*}
&\widetilde D_\alpha(\rho\|\sigma)\le D_\alpha(\rho\|\sigma)
\le\widehat D_\alpha(\rho\|\sigma)\quad\mbox{for $\alpha\in(0,2]\setminus\{1\}$}, \\
&\widetilde D_\alpha(\rho\|\sigma)\le\widehat D_\alpha(\rho\|\sigma)
\le D_\alpha(\rho\|\sigma)\quad\mbox{for $\alpha\in[2,\infty)$}.
\end{align*}
In the von Neumann algebra setting, it follows from Theorem \ref{T-6.4} that
$D_\alpha\le\widehat D_\alpha$ for $\alpha\in(0,2)\setminus\{1\}$, while it was shown in
\cite{BST,Je1} that $\widetilde D_\alpha\le D_\alpha$ for $\alpha\in[1/2,\infty)\setminus\{1\}$.
But comparison between between $D_\alpha$, $\widetilde D_\alpha$ and $\widehat D_\alpha$ in
the von Neumann algebra case has not fully been investigated.

\subsection*{ Acknowledgments}

This work was supported by JSPS KAKENHI Grant Number JP17K05266.

\appendix

\section{Proofs of \eqref{F-4.6} and \eqref{F-4.7}}

\subsection{Proof of \eqref{F-4.7}}

We use the integral expression of $f$ in \eqref{F-3.9}. Let $\eps\in(0,1/2)$ and we divide
the integral on $[0,1]$ into two parts on $[0,2/3]$ and
$(2/3,1]$. Since $1-t+\eps t\ge1/3$ for $\eps\in(0,1/2)$ and $t\in[0,2/3]$,
$(1-t+\eps t)f\bigl({t\over1-t+\eps t}\bigr)$ is uniformly bounded for those $\eps,t$ and
converges to $(1-t)f\bigl({t\over1-t}\bigr)$ as $1/2>\eps\searrow0$. Hence the bounded
convergence theorem gives
$$
\lim_{\eps\searrow0}\int_{[0,2/3]}
(1-t+\eps t)f\biggl({t\over1-t+\eps t}\biggr)\,d\nu(t)
=\int_{[0,2/3]}(1-t)f\biggl({t\over1-t}\biggr)\,d\nu(t).
$$
Next, we write
\begin{align*}
(1-t+\eps t)f\biggl({t\over1-t+\eps t}\biggr)
&=a+(b-a+\eps)t+c\,{t^2\over1-t+\eps t} \\
&\qquad+\int_{(0,+\infty)}(1-t+\eps t)
\psi_s\biggl({t\over1-t+\eps t}\biggr)\,d\mu(s).
\end{align*}
Note that for every $t\in(2/3,1]$,
$$
0<{t^2\over1-t+\eps t}\ \ \nearrow\ \ {t^2\over1-t}\quad\mbox{as $\eps\searrow0$},
$$
where ${t^2\over1-t}=+\infty$ for $t=1$. Also, when $s\in(0,+\infty)$ and $t\in(2/3,1]$,
since
$$
(1-t+\eps t)\psi_s\biggl({t\over1-t+\eps t}\biggr)
={t\over1+s}-{t\over{t\over1-t+\eps t}+s}
$$
and ${t\over1-t+\eps t}>1$ for every $\eps\in(0,1/2)$, we note that
$$
0<(1-t+\eps t)\psi_s\biggl({t\over1-t+\eps t}\biggr)
\ \ \nearrow\ \ (1-t)\psi_s\biggl({t\over1-t}\biggr)\quad\mbox{as $1/2>\eps\searrow0$},
$$
where $(1-t)\psi_s\bigl({t\over1-t}\bigr)={1\over1+s}$ for $t=1$. By the monotone convergence
theorem we find that
\begin{align*}
&\int_{(2/3,1]}(1-t+\eps t)f\biggl({t\over1-t+\eps t}\biggr)\,d\nu(t) \\
&\quad=a\int_{(2/3,1]}d\nu(t)+(b-a+\eps)\int_{(2/3,1]}t\,d\nu(t)
+c\,\int_{(2/3,1]}{t^2\over1-t+\eps t}\,d\nu(t) \\
&\qquad\quad+\int_{(2/3,1]}\int_{(0,+\infty)}(1-t+\eps t)
\psi_s\biggl({t\over1-t+\eps t}\biggr)\,d\mu(s)\,d\nu(t)
\end{align*}
converges as $1/2>\eps\searrow0$ to
\begin{align*}
&a\int_{(2/3,1]}d\nu(t)+(b-a)\int_{(2/3,1]}t\,d\nu(t)
+c\,\int_{(2/3,1]}{t^2\over1-t}\,d\nu(t) \\
&\qquad+\int_{(2/3,1]}\int_{(0,+\infty)}(1-t)
\psi_s\biggl({t\over1-t}\biggr)\,d\mu(s)\,d\nu(t) \\
&\quad=\int_{(2/3,1]}\biggl(a+(b-a)t+c\,{t^2\over1-t}
+\int_{(0,+\infty)}(1-t)\psi_s\biggl({t\over1-t}\biggr)\,d\mu(s)\biggr)\,d\nu(t) \\
&\quad=\int_{(2/3,1]}(1-t)f\biggl({t\over1-t}\biggr)\,d\nu(t).
\end{align*}
Therefore, \eqref{F-4.7} follows.\qed

\subsection{Proof of \eqref{F-4.6}}
Let $\eps\in(0,1)$ and we divide the integral on $[0,1]$ into two parts on $[0,2/3]$ and
$(2/3,1]$. As in the proof of \eqref{F-4.7} we have
$$
\lim_{\eps\searrow0}\int_{[0,2/3]}(1-t+\eps t)
f\biggl({t+\eps(1-t)\over1-t+\eps t}\biggr)\,d\nu(t)
=\int_{[0,2/3]}(1-t)f\biggl({t\over1-t}\biggr)\,d\nu(t).
$$
Next, we write
\begin{align*}
(1-t+\eps t)f\biggl({t+\eps(1-t)\over1-t+\eps t}\biggr)
&=(a+\eps b)+(b-a)(1-\eps)t+c\,{(t+\eps(1-t))^2\over1-t+\eps t} \\
&\qquad+\int_{(0,+\infty)}(1-t+\eps t)
\psi_s\biggl({t+\eps(1-t)\over1-t+\eps t}\biggr)\,d\mu(s).
\end{align*}
Compute
$$
{d\over d\eps}\,{(t+\eps(1-t))^2\over1-t+\eps t}
={(t+\eps(1-t))(2-4t+t^2+\eps t(1-t))\over(1-t+\eps t)^2}.
$$
When $\eps\in(0,1)$ and $t\in(2/3,1]$, since
$$
2-4t+t^2+\eps t(1-t)\le2-4t+t^2+t(1-t)=2-3t<0,
$$
we have
$$
{d\over d\eps}\,{(t+\eps(1-t))^2\over1-t+\eps t}<0,
$$
so that
$$
{(t+\eps(1-t))^2\over1-t+\eps t}
\ \ \nearrow\ \ {t^2\over1-t}\quad\mbox{as $\eps\searrow0$}.
$$
Also, when $s\in(0,+\infty)$ and $t\in(2/3,1]$, since
$$
(1-t+\eps t)\psi_s\biggl({t+\eps(1-t)
\over1-t+\eps t}\biggr)
={t+\eps(1-t)\over1+s}-{t+\eps(1-t)\over
{t+\eps(1-t)\over1-t+\eps t}+s}
$$
and for every $\eps\in(0,1)$, ${t\over1-t+\eps t}>1$ and
$$
{d\over d\eps}\,{t+\eps(1-t)\over1-t+\eps t}={1-2t\over(1-t+\eps t)^2}<0,
$$
we note that
$$
0<(1-t+\eps t)\psi_s\biggl({t+\eps(1-t)\over
1-t+\eps t}\biggr)
\ \ \nearrow\ \ (1-t)\psi_s\biggl({t\over1-t}\biggr)\quad\mbox{as $1>\eps\searrow0$}.
$$
By the monotone convergence theorem we find that
\begin{align*}
&\int_{(2/3,1]}(1-t+\eps t)
f\biggl({t+\eps(1-t)\over1-t+\eps t}\biggr)\,d\nu(t) \\
&\quad=(a+\eps b)\int_{(2/3,1]}d\nu(t)+(b-a)(1-\eps)\int_{(2/3,1]}t\,d\nu(t) \\
&\qquad\quad+c\,\int_{(2/3,1]}{(t+\eps(1-t))^2\over1-t+\eps t}\,d\nu(t) \\
&\qquad\quad+\int_{(2/3,1]}\int_{(0,+\infty)}(1-t+\eps t)
\psi_s\biggl({t+\eps(1-t)\over1-t+\eps t}\biggr)\,d\mu(s)\,d\nu(t)
\end{align*}
converges as $1>\eps\searrow0$ to
\begin{align*}
&a\int_{(2/3,1]}d\nu(t)+(b-a)\int_{(2/3,1]}t\,d\nu(t)
+c\,\int_{(2/3,1]}{t^2\over1-t}\,d\nu(t) \\
&\qquad+\int_{(2/3,1]}\int_{(0,+\infty)}(1-t)
\psi_s\biggl({t\over1-t}\biggr)\,d\mu(s)\,d\nu(t) \\
&\quad=\int_{(2/3,1]}\biggl(a+(b-a)t+c\,{t^2\over1-t}
+\int_{(0,+\infty)}(1-t)\psi_s\biggl({t\over1-t}\biggr)\,d\mu(s)
\biggr)\,d\nu(t) \\
&\quad=\int_{(2/3,1]}(1-t)f\biggl({t\over1-t}\biggr)\,d\nu(t).
\end{align*}
Therefore, \eqref{F-4.6} follows.\qed

\section{Proof of Lemma \ref{L-6.5}}

For the proof below, we may assume that $\rho+\omega$ is faithful, by replacing $M$ with
$eMe$, where $e:=s(\rho+\omega)$. Here, concerning (iv), we note that $(eMe)_*$ is identified
with $eM_*e$, where $e\psi e(x):=\psi(exe)$, $x\in M$, for $\psi\in M_*^+$, and by
\cite[Theorem 7]{Te}, $h_{e\psi e}=eh_\psi e$ for every $\psi\in M_*$ so that
$$
(eMe)_*\cong eL^1(M)e=\{eh_\psi e:\psi\in M_*\}.
$$

(i)$\iff$(ii).\enspace
Since $\rho+\omega$ is invariant under $\sigma_t^{\rho+\omega}$, (i) implies that
$$
\omega\circ\sigma_t^{\rho+\omega}
=(\rho+\omega)\circ\sigma_t^{\rho+\omega}-\rho\circ\sigma_t^{\rho+\omega}
=(\rho+\omega)-\rho=\omega.
$$
Hence (i)$\implies$(ii), and the converse is similar.

(i)$\iff$(iii).\enspace
Assume (i) (hence also (ii)), and let $\gamma,\delta>0$ be arbitrary. Since $\rho,\omega$ are
invariant under $\sigma_t^{\rho+\omega}$,
$(\gamma\rho+\delta\omega)\circ\sigma_t^{\rho+\omega}=\gamma\rho+\delta\omega$. Hence by
\cite{PT},
$$
(\rho+\omega)\circ\sigma_t^{\gamma\rho+\delta\omega}=\rho+\omega.
$$
Assume that $\gamma\ne\delta$. Since
$(\gamma\rho+\delta\omega)\circ\sigma_t^{\gamma\rho+\delta\omega}=\gamma\rho+\delta\omega$, it
follows that $\rho,\omega$ are invariant under $\sigma_t^{\gamma\rho+\delta\omega}$ and so is
$\alpha\rho+\beta\gamma$ for any $\alpha,\beta>0$. When $\gamma=\delta$, the same follows from
(i) and (ii) immediately. Hence we have (i)$\implies$(iii). The converse is easy.

(i)$\implies$(iv).\enspace
Assume (i) and hence (iii). Let $\alpha,\beta,\gamma,\delta>0$. Since (see \cite[(18)]{K1})
$$
\sigma_t^{\gamma\rho+\delta\omega}(x)
=h_{\gamma\rho+\delta\omega}^{it}xh_{\gamma\rho+\delta\omega}^{-it},\qquad x\in M,
$$
it follows that
$$
\tr(h_{\alpha\rho+\beta\omega}h_{\gamma\rho+\delta\omega}^{it}x
h_{\gamma\rho+\delta\omega}^{-it})
=\tr(h_{\alpha\rho+\beta\omega}x),\qquad x\in M,\ t\in\bR,
$$
so that
$$
h_{\gamma\rho+\delta\omega}^{-it}h_{\alpha\rho+\beta\omega}
h_{\gamma\rho+\delta\omega}^{it}
=h_{\alpha\rho+\beta\omega},\qquad t\in\bR,
$$
as elements of $L^1(M)$ and hence as elements in $\widetilde N$. This implies that
$h_{\alpha\rho+\beta\omega}$ commutes with $h_{\gamma\rho+\delta\omega}$ in the sense of
\cite[p.\ 271]{RS}. Since those are $\tau$-measurable operators, we have
$$
h_{\alpha\rho+\beta\omega}h_{\gamma\rho+\delta\omega}
=h_{\gamma\rho+\delta\omega}h_{\alpha\rho+\beta\omega}
$$
so that
$$
(\alpha h_\rho+\beta h_\omega)(\gamma h_\rho+\delta h_\omega)
=(\gamma h_\rho+\delta h_\omega)(\alpha h_\rho+\beta h_\omega).
$$
Choosing $\alpha=\delta=1$ and $\beta,\gamma\to0$ gives (iv).

(iv)$\implies$(i).\enspace
Since (iv) implies that $h_\rho h_{\rho+\omega}=h_{\rho+\omega}h_\rho$, it is easy to verify
that $h_\rho$ commutes with $(h_{\rho+\omega}+1)^{-1}$. This implies that $h_\rho$ commutes
with any spectral projection of $h_{\rho+\omega}$. Therefore,
$h_\rho h_{\rho+\omega}^{it}=h_{\rho+\omega}^{it}h_\rho$, $t\in\bR$, from which we have
$$
\tr(h_\rho h_{\rho+\omega}^{it}xh_{\rho+\omega}^{-it})=\tr(h_\rho x),\qquad
x\in M,\ t\in\bR.
$$
Hence (i) follows.

Next, when $s(\rho)\le s(\omega)$ (hence we may assume that $\omega$ is faithful), it is seen
as above (by replacing $\rho+\omega$ with $\omega$) that (iv) is equivalent to that
$h_\rho h_\omega^{it}=h_\omega^{it}h_\rho$ for any $t\in\bR$, which means that
$\rho\circ\sigma_t^\omega=\rho$, $t\in\bR$.\qed

\end{document}